\documentclass[10pt]{article}

\usepackage[T1]{fontenc}
\usepackage[letterpaper,textwidth=6.8in,textheight=9in,centering]{geometry}
\usepackage[parfill]{parskip}
\usepackage{amsmath,amssymb,amsthm,mathtools}
\usepackage[tt=false,type1=true]{libertine}
\usepackage[varqu]{zi4}
\usepackage[libertine]{newtxmath}
\usepackage{microtype,booktabs,tabularx,enumitem,needspace}
\usepackage{graphicx}
\usepackage{algorithm,algpseudocode}
\usepackage{xcolor}
\definecolor{PaperLinkBlue}{HTML}{2A7FB8}
\definecolor{PaperBlue}{HTML}{1F4E79}
\usepackage{titlesec}
\titleformat{\section}{\normalfont\Large\bfseries}{\thesection}{1em}{}
\titleformat{\subsection}{\normalfont\large\bfseries\color{PaperBlue}}{\thesubsection}{1em}{}
\titleformat{\subsubsection}{\normalfont\normalsize\bfseries\color{PaperBlue}}{\thesubsubsection}{1em}{}
\titleformat{\paragraph}[runin]{\normalfont\normalsize\bfseries\color{PaperBlue}}{\theparagraph}{1em}{}
\titlespacing*{\paragraph}{0pt}{1.7mm}{0.6em}
\usepackage[colorlinks=true,linkcolor=PaperLinkBlue,citecolor=PaperLinkBlue,urlcolor=PaperLinkBlue]{hyperref}
\newtheorem{theorem}{Theorem}[section]
\newtheorem{lemma}[theorem]{Lemma}
\newtheorem{proposition}[theorem]{Proposition}
\newtheorem{corollary}[theorem]{Corollary}
\newtheorem{assumption}[theorem]{Computational model}
\theoremstyle{definition}\newtheorem{definition}[theorem]{Definition}
\theoremstyle{remark}\newtheorem{remark}[theorem]{Remark}
\DeclareMathOperator{\Tr}{Tr}

\DeclareMathOperator{\Fid}{Fid}
\DeclareMathOperator{\diag}{diag}

\newcommand{\R}{\mathbb R}
\newcommand{\C}{\mathbb C}
\newcommand{\E}{\mathbb E}

\newcommand{\norm}[1]{\lVert#1\rVert}

\numberwithin{equation}{section}
\allowdisplaybreaks[2]

\title{A Walk From Free Probability to Matrix\\
Discrepancy II: Weaver's Problem and the Kadison--Singer\\
Conjecture}
\author{Tarun Kathuria\\[0.6ex]
\includegraphics[width=1.4in]{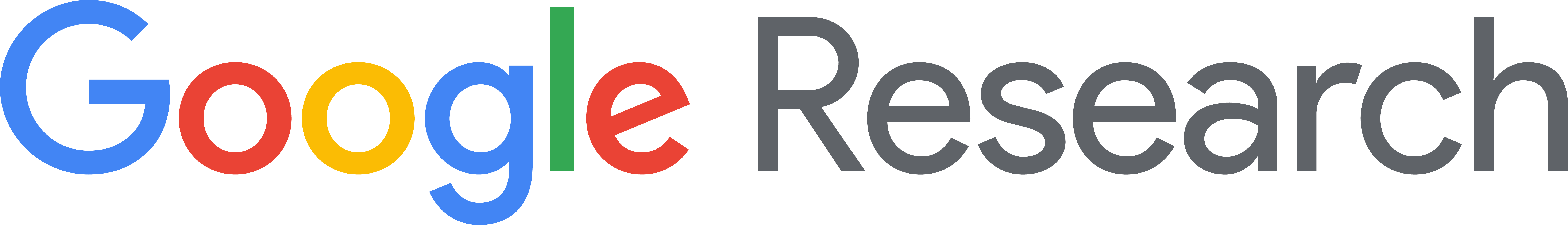}\\[0.6ex]
{\normalsize\href{mailto:tarunkathuria@google.com}{\texttt{tarunkathuria@google.com}}}}
\date{September 15, 2026}
\hypersetup{pdftitle={A Walk From Free Probability to Matrix Discrepancy II: Weaver’s Problem and the Kadison-Singer Conjecture},pdfauthor={Tarun Kathuria}}
\begin{document}
\maketitle
\begin{abstract}
Weaver's discrepancy problem asks whether vectors
$v_1,\ldots,v_n\in\C^m$ satisfying
$\sum_{i=1}^n v_iv_i^*=I_m$ and $\|v_i\|_2^2\le\varepsilon$
admit a signing $x\in\{-1,1\}^n$ such that
\[
 \left\|\sum_{i=1}^n x_iv_iv_i^*\right\|\le O(\sqrt\varepsilon).
\]
Marcus, Spielman, and Srivastava proved this existence result,
resolving the Kadison--Singer conjecture \cite{mss2015}. Finding such
signs efficiently for general inputs remained an algorithmic question.
In the real-arithmetic model, we give a deterministic algorithm running
in polynomial time with discrepancy at most $35\sqrt\varepsilon$.

The algorithm walks from the origin of the hypercube to a vertex,
fixing coordinates only when they are close to a face. As in our
companion paper on Matrix Spencer \cite{kathuria2026ms}, its potential
measures a soft spectral edge of the discrepancy matrix perturbed by an operator-valued
free semicircular element. The perturbation's covariance vanishes as
the coefficients reach their endpoints. Inspired by the free
interpolation approach of Bandeira, Boedihardjo, and van Handel
\cite{bbvh2023}, we combine Lehner's variational formula
\cite{lehner1999} with spectral Tsallis--$1/2$ regularization used in
spectral sparsification framed as matrix optimization
\cite{allenZhuLiaoOrecchia2015} and in discrepancy minimization
\cite{pesentivladu2026}. The resulting potential
has a finite-dimensional semidefinite formulation, allowing the
discrepancy and remaining covariance to be analyzed together.

We analyze the optimizer's stability through the linearized
Karush--Kuhn--Tucker (KKT) system of a regularized min--max problem,
whose stationarity equations are related to the matrix Dyson equation
\cite{erdos2019}. This gives the movement rule: either a
coordinate can move toward its nearer endpoint at small spectral cost,
or a low-curvature direction orthogonal to the current coefficient
vector allows further progress. Choosing the better sign of this
direction controls discrepancy while increasing the squared distance
from the origin. This yields a finite walk that reaches a signing
with the required bound.

Upcoming work \cite{kathuria2026higherRank} will address higher-rank
Kadison--Singer problems and spectrally thin trees.
Lean formalizations of our main discrepancy theorems have been
completed and will be released shortly.
\end{abstract}

\section{Introduction}
Let $v_1,\ldots,v_n\in\C^m$ satisfy
\[
 \sum_{i=1}^n v_iv_i^*=I_m,
 \qquad \|v_iv_i^*\|\le\varepsilon.
\]
We seek a partition of the vectors into two parts whose matrix sums
are close to $I_m/2$. Equivalently, writing $A_i=v_iv_i^*$, we seek
signs $\sigma_i$ for which $\|\sum_i\sigma_iA_i\|$ is small.
Weaver's discrepancy formulation connects this finite-dimensional
problem to the Kadison--Singer conjecture
\cite{kadison1959,weaver2004}. Marcus, Spielman, and Srivastava proved
it through interlacing polynomials \cite{mss2015}.
Here we construct the signs by a walk inside the coefficient cube.
Our algorithm runs in polynomial time in the real-arithmetic
model.\footnote{We allow scalar real arithmetic, comparisons,
nonnegative square roots, and exact symmetric eigendecomposition.
We also assume a deterministic SDP solver returning values to additive
accuracy $\nu$ in time polynomial in the scalar data size and
$\nu^{-1}$. Model~\ref{model:solver} gives the precise assumptions.}

\Needspace{15\baselineskip}
\begin{theorem}[A deterministic local signing walk]\label{thm:main}
Suppose $\sum_i v_iv_i^*=I_m$ and $\|v_iv_i^*\|\le\varepsilon$ for
all $i$, where $\varepsilon\ge0$. There is a deterministic algorithm in
Model~\ref{model:solver} with the following properties.
\begin{enumerate}[label=(\roman*),leftmargin=*]
\item It returns signs $\sigma\in\{-1,1\}^n$ satisfying
\[
 \left\|\sum_i\sigma_i v_iv_i^*\right\|\le35\sqrt\varepsilon.
\]
\item It runs in time polynomial in $n,m$ in this model.
\item If $m>0$, the coefficients start at zero and remain in $[-1,1]$.
 With $\rho=1/(100n^2)$, each movement before rounding changes each
 coordinate by at most $\rho/16$. A coordinate is assigned its final
 endpoint only within distance $\rho$ of that endpoint and then stays fixed.
\end{enumerate}
For $m=0$, the all-positive signing is returned immediately.
\end{theorem}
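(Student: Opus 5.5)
The plan is to build the algorithm as a potential-guided walk $x^{(0)}=0,x^{(1)},\ldots$ in $[-1,1]^n$, and to prove (i)--(iii) from three facts about the potential.

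\textbf{The potential.} Write $A_i=v_iv_i^*$ and $A(x)=\sum_ix_iA_i$. Take the remaining-covariance weights $w_i(x)=1-x_i^2$. Let $\Phi(x)$ be a soft spectral edge of
\[
A(x)\otimes 1+\sum_i\sqrt{w_i(x)}\,A_i\otimes s_i ,
\]
with $s_i$ free semicircular. This is an operator-valued semicircular perturbation, and by symmetry it controls both $\pm A(x)$. I would compute $\Phi$ through Lehner's variational formula, regularized by the spectral Tsallis--$1/2$ entropy. That turns it into a finite-dimensional min--max program with an SDP formulation, so $\Phi$ is computable to accuracy $\nu$ in Model~\ref{model:solver}.

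\textbf{Three facts to establish.}
\begin{enumerate}[label=(\alph*),leftmargin=*]
\item At $x=0$ the covariance is $\sum_iA_i^2\preceq\varepsilon I$. The Bandeira--Boedihardjo--van Handel type bound then gives $\Phi(0)\le C_0\sqrt\varepsilon$.
\item At a vertex the covariance vanishes. Hence $\|A(\sigma)\|\le\Phi(\sigma)$ up to the regularization slack, which I would tune to be $O(\sqrt\varepsilon)$.
\item A step of size at most $\rho/16$ raises $\Phi$ by at most $O(\sqrt\varepsilon)$ times the gain in $\|x\|_2^2$, plus a negligible second-order error.
\end{enumerate}
Since $\|x\|_2^2\le n$, summing (c) along the walk bounds the total drift of $\Phi$. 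For this we need the per-unit-progress increment of order $\sqrt\varepsilon$ and not worse. The constant $35$ then comes from adding up the constants in (a), (b) and (c).

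\textbf{The main step: the local movement lemma behind (c).} I would linearize the KKT system of the regularized min--max around its optimizer. Strong concavity and convexity from the Tsallis term make that system invertible, so the optimizer is Lipschitz in $x$. The Jacobian is related to the stability operator of the matrix Dyson equation. This gives a gradient formula $\partial_i\Phi=\langle A_i,P\rangle-\frac{x_i}{\sqrt{w_i}}\langle A_i\otimes\cdot,Q\rangle$ and a Hessian bound.

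The dichotomy then goes as follows.
\begin{itemize}
\item \emph{Boundary case.} Some coordinate lies near $\pm1$ and moving it toward its nearer endpoint costs at most $O(\sqrt\varepsilon)$ per unit of $x_i^2$. Here I would use $\operatorname{tr}$-weighted bounds $\langle A_i,P\rangle\le\varepsilon\operatorname{tr}P$, together with the fact that the covariance term shrinks.
\item \emph{Interior case.} Otherwise, the free coordinates number more than the rank-type constraints imposed by the dual variables, counted through an averaging argument over $\sum_iA_i=I$. So there is a direction $y$ supported on free coordinates with $y\perp x$, $y\perp\nabla\Phi$, and low curvature $y^\top\nabla^2\Phi\,y\le O(\sqrt\varepsilon)\|y\|^2$. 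Picking the sign of $y$ that does not increase the first-order term gives $\|x+ty\|^2=\|x\|^2+t^2\|y\|^2$. Progress is therefore quadratic, matched by a second-order cost of $O(\sqrt\varepsilon)t^2\|y\|^2$.
\end{itemize}
I expect this interior case to be the hardest part: proving the low-curvature subspace exists uniformly, with constants independent of $m$. My first attempt would be to average the Hessian quadratic form over free coordinates, using $\sum_i\langle A_i,\cdot\rangle$ identities, and take the bottom eigenspace.

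\textbf{Rounding and running time.} A coordinate within $\rho$ of an endpoint is frozen there. The total perturbation from all such roundings is at most $n\rho\max_i\|A_i\|\le\varepsilon/(100n)$, which is negligible. Each step increases $\|x\|^2$ by at least $\mathrm{poly}(\rho)$ or freezes a coordinate. So the number of steps is polynomial in $n$, each step is a polynomial SDP plus an eigendecomposition, and solver accuracy $\nu=\mathrm{poly}(1/n)$ keeps the accumulated errors below $\sqrt\varepsilon$. This gives (ii) and (iii). The case $m=0$ is trivial.
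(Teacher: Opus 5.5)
\textbf{There is a genuine gap: step (c), your accounting, is off by a factor of $n$, and the interior case cannot supply what a corrected accounting needs.}

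Along the walk $\|x\|_2^2$ rises from $0$ to $n$. A cost of $O(\sqrt\varepsilon)$ per unit of gain in $\|x\|_2^2$ therefore sums to $O(n\sqrt\varepsilon)$, which is worse than the trivial bound $\|\sum_i\sigma_iA_i\|\le1$. To finish at $O(\sqrt\varepsilon)$, each movement may cost at most about $\sqrt\varepsilon/n$ per unit of progress. The paper takes $\beta=\delta/(100n)$ and shows that $\mathscr F+\mathscr R+\beta(n-\|x\|_2^2)$ never increases. Since the step $h$ can be taken as small as needed, this forces every accepted move to be essentially free:
\begin{itemize}
\item an outward step must have nonpositive potential change, up to value error;
\item a curvature step needs a unit $g$ with $x^Tg=0$ and $g^T\nabla^2\mathscr F(x)g$ below $\beta$.
\end{itemize}
So what your interior case must prove is genuine local concavity of the optimized potential in some live direction orthogonal to $x$. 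Curvature of size $O(\sqrt\varepsilon)$ is not enough.

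\textbf{Neither of your mechanisms delivers this concavity.}
\begin{itemize}
\item Your dichotomy is incomplete. With $k$ live coordinates, the constraints $y\perp x$ and $y\perp\nabla\Phi$ alone leave no direction when $k\le2$. Any further ``rank-type'' constraints from a $2m\times2m$ density only make this worse. Meanwhile your boundary case treats only coordinates near $\pm1$, so few live coordinates far from the faces are left unhandled.
\item In the paper the negative curvature is the direct term $\sum_ic''(x_i)p_iq_ih_i^2$, with $c''\le-2u$. The work is to show that this term survives reoptimization of the density and transport.
\item That reoptimization response is controlled only through the probes $q_i=\Tr(ZB_i)$. The bounds used are $|c'(x_i)|q_i\le1$ and $c_iq_i^2\le200/(81u)$.
\item These bounds come from the failure of the outward test at \emph{every} live coordinate, including those near $0$. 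The link is the fixed-transport certificate: $\ell_iq_i\ge1$ implies $\mathscr F(x+as_ie_i)\le\mathscr F(x)$.
\item A legal-pair average over the joint kernel then gives $\E[y^TKy]<0$ on $x^\perp$. The unit of source mass $\sum_ic_iq_i^2\ge1$ pays for the extra constraint $x^Tg=0$. This is also why a single live coordinate cannot block the walk once all outward tests fail.
\end{itemize}

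\textbf{Your profile $w_i=1-x_i^2$ does not fit this mechanism.}
\begin{itemize}
\item Its outward secant at $x_i=0$ is only $a$, so a failed outward test gives no usable bound on $q_i$.
\item With weight $1$ in place of $u=64$, the coefficient $1-(6L+7+\sqrt5+2\sqrt6)/u$ in the paper's curvature estimate is negative even for $L=0$.
\end{itemize}

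The missing ingredients are:
\begin{enumerate}
\item an amplified, tent-smoothed, strictly concave source profile;
\item outward tests applied to all live coordinates as the first branch;
\item a proof that failure of all these tests forces a negative-curvature direction orthogonal to $x$.
\end{enumerate}
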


The signing partitions the labels into
$I_+=\{i:\sigma_i=1\}$ and $I_-=\{i:\sigma_i=-1\}$, with
\begin{equation}\label{eq:partition}
 \left\|\sum_{i\in I_\pm}A_i-\tfrac12I_m\right\|
 \le\frac{35}{2}\sqrt\varepsilon.
\end{equation}
Indeed, the two sums are $(I_m\pm\sum_i\sigma_iA_i)/2$.
The existence conclusion follows from the analytic argument alone
(Proposition~\ref{prop:exacttrial}); the computational model is used
for the runtime bound. Section~\ref{sec:sdp}
derives the queried programs, including their primal and dual values;
Appendix~\ref{sec:runtime} bounds the work of the complete deterministic
run.

\subsection{The idea of the walk}
At each state the walk tests whether a coefficient can move a short
distance toward its nearer endpoint at small spectral cost. It takes
the first affordable move. If every test fails, it finds a unit direction
$g$ orthogonal to the current coefficient vector $x$, with small
curvature of the potential. For a fixed small step size $h>0$, it compares the potential values at
$x+hg$ and $x-hg$ and takes the better reported candidate. Coordinates
near an endpoint are rounded, and the process repeats until all
coordinates are signs.

Why should failure of the coordinate tests produce a useful direction?
The potential measures the current discrepancy together with an
auxiliary source that disappears as coordinates reach the boundary.
Source removal can pay for an outward increase of the signed sum.
When none of these outward steps is affordable, the failures bound
the transport probes in the variational problem. These bounds show
that the source's negative curvature survives reoptimization of the
density. The proof averages over compatible coefficient and transport
responses. Imposing $x^Tg=0$ costs at most one further dimension in this
average; a fixed-point identity supplies enough negative mass to absorb
that loss.

Orthogonality supplies the progress that determines the choice of
sign: $\|x\pm hg\|_2^2=\|x\|_2^2+h^2$. Thus both candidates make the
same progress toward the cube's vertices. The average of their
potential changes has no linear term and is controlled by curvature;
choosing the lower report incurs only the prescribed value error.
Outward steps and rounding also increase $\|x\|_2^2$. A proof account combines the spectral potential, the remaining rounding
cost, and a small multiple of the deficit $n-\|x\|_2^2$. This account
never increases. It proves both a discrepancy bound
throughout the walk and completion after finitely many movements.

\subsection{Free interpolation and variational stability}
Bandeira, Boedihardjo, and van Handel use interpolation to compare
smooth moments and resolvent statistics of random matrices with their
operator-valued free counterparts; Lehner's formula gives a variational
description of the free norm
\cite[Section~1.4.2 and Lemma~2.4]{bbvh2023}.
Our construction places the covariance of that free model inside a
smooth variational potential. The discrepancy center and the source
then change together during the walk. The smoothing uses the
square-root density regularizer used to frame spectral sparsification
as matrix optimization
\cite{allenZhuLiaoOrecchia2015}, closely related to the regularized
maximum in Pesenti--Vladu's discrepancy method \cite{pesentivladu2026}.
Section~\ref{sec:potentialmotivation} explains why the potential retains
the full covariance map, rather than only a remaining-variance matrix.

The maximizing density and minimizing transport form a saddle point
of a regularized min--max problem. Their stationarity and normalization
equations are its Karush--Kuhn--Tucker (KKT) conditions. Differentiating
these conditions gives the linearized KKT system governing the
optimizer's response as the coefficients move.
The structural comparison is with matrix Dyson
equation stability \cite{aek2019}, particularly the analysis near a
regular spectral edge \cite[Section~4]{aeks2020} and the deterministic
theory of edges and cusps \cite{alt2020}. Our self-consistent equations
include a density regularizer and a trace constraint. We derive and
bound their response directly. In particular, the balanced fixed-point
equation has directions that cannot be controlled by simply inverting
its difference from the identity; the joint-kernel projection in
Section~\ref{sec:sm:curvature} is the substitute used here.

\subsection{Related discrepancy results}
The constant in Theorem~\ref{thm:main} is deliberately coarse. Besides
the sharper partition theorem of Marcus--Spielman--Srivastava,
Kyng--Luh--Song obtain a rank-one discrepancy bound of four times the
matrix standard deviation \cite{kyngLuhSong2020}.
For algorithms based on interlacing families,
Anari--Oveis Gharan--Saberi--Srivastava give subexponential constructions
via largest-root approximation \cite{anari2017}.
Jourdan--Macgregor--Sun give an algorithm that is quasi-polynomial in
the number of vectors in a specified regime, with exponential dependence
on dimension \cite{jourdan2023}.
They also give polynomial-time algorithms for a constant-factor paving
in a dense equal-norm regime \cite{jourdanMacgregorSun2024}.
These results have different guarantees and computational models from
the local walk studied here.

Hardness results also depend on the requested constant.
Spielman--Zhang prove hardness of distinguishing zero-discrepancy
instances from instances with discrepancy bounded below at a constant
multiple of the square-root scale \cite{spielmanZhang2022};
Jourdan--Macgregor--Sun prove a related small-constant search hardness
result. Such statements concern finer guarantees than our coarse
constant and should be distinguished from our real-arithmetic model.
The contribution developed below is the local source-driven mechanism
and its complete finite implementation in that model.

Our companion paper \cite{kathuria2026ms} establishes the square and
rectangular Matrix Spencer bounds using the same free-probability
variational approach.
Upcoming work \cite{kathuria2026higherRank}, \emph{A Walk From
Free Probability to Matrix Discrepancy III: Higher Rank Kadison--Singer
and Spectrally Thin Trees}, will extend this program to higher-rank
Kadison--Singer problems and spectrally thin trees.
Lean formalizations of the main discrepancy theorems proved here have
also been completed; the formal proofs will be released shortly.

\paragraph{Reading order.}
Section~\ref{sec:setup} defines the potential and gives the algorithm.
Sections~\ref{sec:potential}--\ref{sec:sm:curvature} prove the
outward-step/curvature alternative. Section~\ref{sec:smallstep}
turns it into a finite deterministic signing walk. The SDP and its numerical
use follow in Sections~\ref{sec:sdp}--\ref{sec:walk}.
The appendices supply standard analytic facts, quantitative derivative
bounds, arithmetic accounting, and the full marginal free-edge identity.

\section{The potential and the local walk}\label{sec:setup}
\label{sec:earlyalgorithm}

\subsection{State, source, and scales}
Throughout the proof $n$ is the number of vectors and $m$ their
ambient dimension. Assume $m>0$ and set
\begin{equation}\label{eq:normalization}
 \epsilon=\max_i\|A_i\|,\quad \delta=\sqrt\epsilon,\quad
 D=2m,\quad \theta=\delta/\sqrt D,\quad u=64.
\end{equation}
The actual maximum $\epsilon$ satisfies $0<\epsilon\le\varepsilon$.
The boundary margin, outward step, and smoothing width are
\begin{equation}\label{eq:geometry}
 \rho=\frac1{100n^2},\qquad a=\rho/16,\qquad \zeta=a/10.
\end{equation}
A state is $x\in[-1,1]^n$. A coordinate is \emph{live} when $|x_i|<1$
and \emph{frozen} at an endpoint. Define
\[
 H(x)=\sum_i x_iA_i,\qquad B_i=I_2\otimes A_i,\qquad
 J=\diag(I_m,-I_m),\qquad \widehat H(x)=I_2\otimes H(x).
\]
The largest eigenvalue of $J\widehat H(x)=\diag(H(x),-H(x))$ is
$\|H(x)\|$. Every frozen contribution remains in this center.

A density is a positive semidefinite matrix of trace one. Traces are
unnormalized, $X^*$ denotes conjugate transpose, and matrix norms are
operator norms unless specified otherwise. Hermitian matrix space is
real, with pairing $\langle X,Y\rangle=\Re\Tr(XY)$ and
$\|X\|_{\rm HS}=\|X\|_{\rm F}=(\Tr X^*X)^{1/2}$.
Adjoints of linear maps use these pairings. The range and kernel of a
map $T$ are $\operatorname{ran}T=\{Tv:v\in\operatorname{dom}T\}$ and
$\ker T=\{v:Tv=0\}$. The support of a positive semidefinite matrix is
its range, equal to the orthogonal complement of its kernel; a
supported inverse is taken on this space. The symbols $\Re X,\Im X$
denote entrywise real and imaginary parts. All logarithms are natural.

\begin{definition}[Profile and density potential]\label{def:potential}
For $-1\le z\le1$, put
\begin{equation}\label{eq:sourceprofile}
 c(z)=u\bigl(1-z^2+\sqrt{1+\zeta^2}-\sqrt{z^2+\zeta^2}\bigr),
 \qquad c_i=c(x_i),\quad d_i=c_i/u.
\end{equation}
For a $D\times D$ matrix $Y$, define the source map
$\Omega_x(Y)=\sum_i c_i\Tr(B_iY)B_i$.
The unsquared fidelity of positive matrices is
$\Fid(S,M)=\Tr\sqrt{M^{1/2}SM^{1/2}}$. Set
\begin{equation}\label{eq:potential}
\begin{split}
 L(x,S)&=\Tr(J\widehat H(x)S)+2\Fid(S,\Omega_x(S))
                                     +2\theta\Tr\sqrt S,\\
 \mathscr F(x)&=\max_{S\succeq0,\,\Tr S=1}L(x,S).
\end{split}
\end{equation}
The density is a full $D\times D$ matrix.
\end{definition}
The first term tests the two spectral signs together. The second
measures the source that remains at $x$, and the last smooths the
density optimization. At a vertex the source vanishes. The source is
always determined by $x$; there is no separately maintained covariance.
The profile vanishes at the faces, is positive on every live
coordinate, and satisfies $c''\le-2u$. These properties connect the
potential to discrepancy, permit a change of coefficient scale in
the analysis, and supply negative curvature. Lemmas~\ref{lem:transport} and~\ref{lem:floor} prove
that $\mathscr F$ is smooth on each open face, dominates $\|H(x)\|$,
and starts at $\mathscr F(0)\le34\delta$.

\subsection{The algorithm}
A \emph{prepared} state has had every live coordinate within $\rho$
of an endpoint rounded to that endpoint. Define
$\operatorname{sgn}_+(z)=1$ for $z\ge0$ and $-1$ otherwise, and let
$\Call{Prepare}{x}$ carry out these replacements in label order.
Let $\Call{Value}{y,\nu}$ be the value of the SDP in
\eqref{eq:O7} reported by Model~\ref{model:solver}; it satisfies
$|\Call{Value}{y,\nu}-\mathscr F(y)|\le\nu$.
The exact spectral primitive is denoted $\Call{EVD}{Q}=(\Lambda,V)$,
where $V$ is orthogonal and $V^TQV=\Lambda$ is diagonal.

The numerical scales have one purpose each: $h$ is a common movement
length, $t$ is a spacing for estimating curvature, and the two $\nu$'s
are value errors. Let $M>1$ be the explicit derivative bound in
\eqref{eq:regularitybudget}, computed once from the input scales.
Choose
\begin{equation}\label{eq:algorithmsetup}
\begin{gathered}
 \beta=\delta/(100n),\qquad
 t=\min\{a,\sqrt{\beta/(128nM)}\},\qquad
 h=\min\{a,\sqrt{\beta/(4M)}\},\\
 \nu_{\rm H}=\beta t^2/(128n),\qquad
 \nu_{\rm loc}=\beta h^2/8,\qquad T=\lceil16n/h^2\rceil.
\end{gathered}
\end{equation}
Appendix~\ref{sec:regularity} derives $M$, and
Appendix~\ref{sec:runtime} proves that these choices give polynomial
work. The reader can first regard $h$ as a common sufficiently small
step and $\nu_{\rm H},\nu_{\rm loc}$ as its specified error allowances.

For completeness, the curvature routine is as follows. List the live
labels $I=\{i_1<\cdots<i_k\}$ and write
$f_y=\Call{Value}{y,\nu_{\rm H}}$. With $e_i$ the $i$th standard vector
in the original coefficient space, form
\begin{equation}\label{eq:algorithmstencils}
\begin{aligned}
 \widetilde{\mathcal H}_{pp}
 &=\frac{f_{x+te_{i_p}}-2f_x+f_{x-te_{i_p}}}{t^2},\\
 \widetilde{\mathcal H}_{pq}
 &=\frac{f_{x+te_{i_p}+te_{i_q}}-f_{x+te_{i_p}-te_{i_q}}
 -f_{x-te_{i_p}+te_{i_q}}+f_{x-te_{i_p}-te_{i_q}}}{4t^2}\quad(p<q),\\
 \widetilde{\mathcal H}_{qp}&=\widetilde{\mathcal H}_{pq},\qquad
 \widetilde K_{pq}=\tfrac12\widetilde{\mathcal H}_{pq}.
\end{aligned}
\end{equation}
The routine $\Call{Curvature}{x}$ returns $(I,\widetilde K)$, an
approximation to half the Hessian of the optimized potential in the
original live coordinates.

The other routine constructs the subspace for movement. For
$z=(x_{i_1},\ldots,x_{i_k})$, let $\Call{TangentFrame}{z}$ return a
matrix $Q_x$ with orthonormal columns spanning
$\mathcal T_x=\{g\in\R^k:z^Tg=0\}$. It is computed as follows.
If $z=0$, take $Q_x=I_k$. Otherwise put
$\widehat z=z/\|z\|_2$, $w=\widehat z+\operatorname{sgn}_+(\widehat z_1)e_1$,
and take columns $2,\ldots,k$ of
\[
 R_x=I_k-2ww^T/\|w\|_2^2.
\]
This reflection sends $e_1$ to a signed copy of $\widehat z$, so its
other columns span $\mathcal T_x$. Moreover $\|w\|_2^2\ge2$.
Thus constructing the frame requires only the scalar primitives in
Model~\ref{model:solver}. In the branch where the frame is used, the
curvature theorem guarantees that it has at least one column.

\begin{algorithm}[H]
\caption{The deterministic local signing walk}\label{alg:trial}
\small
\begin{algorithmic}[1]
\Require Atoms $A_i$ with $m>0$, scales \eqref{eq:normalization}--\eqref{eq:algorithmsetup}, and the defined routines
\State $x\gets\Call{Prepare}{0}$; $j\gets0$
\While{some coordinate is live and $j<T$}
 \State $I\gets\{i:|x_i|<1\}$; $\widehat f_0\gets\Call{Value}{x,\nu_{\rm loc}}$
 \State For each $i\in I$, query $\widehat f_i\gets\Call{Value}{x+a\operatorname{sgn}_+(x_i)e_i,\nu_{\rm loc}}$
 \State Let $i_*$ be the least $i\in I$ with $\widehat f_i-\widehat f_0\le2\nu_{\rm loc}$, if one exists
 \If{$i_*$ exists}
  \State $x\gets x+a\operatorname{sgn}_+(x_{i_*})e_{i_*}$
 \Else
  \State $(\{i_1<\cdots<i_k\},\widetilde K)\gets\Call{Curvature}{x}$
  \State $Q_x\gets\Call{TangentFrame}{(x_{i_1},\ldots,x_{i_k})}$
  \State $(\Lambda,V)\gets\Call{EVD}{Q_x^T\widetilde KQ_x}$; choose the first $p_*$ minimizing $\Lambda_{pp}$
  \State Set $g_{i_p}=(Q_xV_{\cdot p_*})_p$ on live labels and $g_i=0$ on frozen labels
  \State $q_+\gets\Call{Value}{x+hg,\nu_{\rm loc}}$; $q_-\gets\Call{Value}{x-hg,\nu_{\rm loc}}$
  \State Set $x\gets x+hg$ if $q_+\le q_-$, and $x\gets x-hg$ otherwise
 \EndIf
 \State $x\gets\Call{Prepare}{x}$; $j\gets j+1$
\EndWhile
\State \Return $x$
\end{algorithmic}
\end{algorithm}

Every curvature movement uses a physical unit vector $g$ with $x^Tg=0$.
Consequently either sign increases $\|x\|_2^2$ by exactly $h^2$.
Curvature bounds the average spectral cost of the two candidates,
and comparing their reports converts this average bound into a bound
for the chosen movement. Outward movements also increase $\|x\|_2^2$,
and rounding preserves this progress. The analysis proves that the
returned array is always a full signing with the required norm bound.

\subsection{Why a covariance map belongs in the potential}
\label{sec:potentialmotivation}
For a Hermitian matrix $Y$, the source-free potential is
\[
 f_\theta(Y)=\max_{S\succeq0,\,\Tr S=1}
                  \{\Tr(YS)+2\theta\Tr\sqrt S\}.
\]
Applied to $J\widehat H(x)$ it is convex in $x$, because it is the
maximum of affine functions of $x$. Its density regularizer smooths
the spectral maximum, but its values at two symmetric candidates have
average at least its current value. A decreasing source creates a
competing contribution.

The ordinary remaining variance
$V(x)=\sum_i(1-x_i^2)A_i^2$ is a natural first candidate. It records
only the value at the identity of the map
$Y\mapsto\sum_i(1-x_i^2)A_iYA_i$. For a rank-one atom,
\[
 A_iYA_i=(v_i^*Yv_i)A_i,\qquad A_i^2=\|v_i\|^2A_i.
\]
The map retains how a test matrix weights each atom's direction.
Our shared source retains this information across both spectral signs:
for a density $S$ and a positive transport $Z$,
\[
 \Tr(S\Omega_x(Z))=\sum_i c_i\Tr(SB_i)\Tr(ZB_i).
\]
Section~\ref{sec:potential} derives the transport. The two factors in
each summand will measure the atom through the optimizing density and
transport. They appear in both the outward certificate and the bound
on reoptimization cost. This quantitative connection is what our proof
needs; the argument does not exclude other potentials built from
ordinary remaining variance.

The free model supplies a reason for this particular connection.
Its operator-valued covariance is the entire map $\Omega_x$, and
Lehner's formula represents its upper edge through
$Z^{-1}+\Omega_x(Z)$, optimized over $Z\succ0$.
Appendix~\ref{sec:freeinterpretation} proves the regularized identity
\[
 \mathscr F(x)=\inf_{Z\succ0}
 f_\theta\bigl(J\widehat H(x)+Z^{-1}+\Omega_x(Z)\bigr).
\]
Thus the auxiliary free operator is encoded by a finite-dimensional
variational problem. Freeness supplies its higher-order structure and
spectral formula; it is not implied by a covariance map alone.
The regularizer acts on the matrix marginal of a test state, and the
optimizer changes as the center and source evolve. Its response, rather
than the source norm in isolation, is the object analyzed below.

\subsection{The scalar example and the choice of profile}
Take $m=1$, write $A_i=a_i\ge0$ with $\sum_i a_i=1$, and put
$t=\sum_i x_i a_i$ and $V=\sum_i c_i a_i^2$.
Now $B_i=a_iI_2$ and $\Omega_x(S)=VI_2$ for every density.
Pinching to the basis of the signed center gives
\begin{equation}\label{eq:scalar}
 \mathscr F(x)=\max_{0\le s\le1}
 \left\{(2s-1)t+2(\sqrt V+\theta)
                         (\sqrt s+\sqrt{1-s})\right\}.
\end{equation}
Indeed, pinching preserves the center term and increases the trace
square root by concavity; the fidelity here is $\sqrt V\Tr\sqrt S$.
The same $V$ protects the positive and negative edges. Moving a
coordinate changes $t$, changes $V$, and changes the maximizing mass
$s$. This last change is the scalar version of the density response.
For $V>0$, the transport associated with the optimizing density
$S=\diag(s,1-s)$ is
\[
 Z=\frac1{\sqrt V}\diag(\sqrt s,\sqrt{1-s}),\qquad
 q_i=\Tr(ZB_i)=\frac{a_i(\sqrt s+\sqrt{1-s})}{\sqrt V}.
\]
It solves $Z(VI_2)Z=S$. The general transport formula will be proved
in Section~\ref{sec:potential}. Here it makes the outward comparison
visible: holding $Z$ fixed, a step of length $a$ changes the linear
density matrix by $aa_i(s_iJ-\ell_iq_iI_2)$, where $s_i$ is the
outward sign and $\ell_i=[c(x_i)-c(x_i+as_i)]/a$. Thus
$\ell_iq_i\ge1$ makes both diagonal entries nonpositive and certifies
the step. The certificate already measures source removal through
the current optimizer, even in this scalar example.

For $r\ge0$ and $r+a\le1$, the outward secant of the chosen profile is
\[
 \ell(r)=\frac{c(r)-c(r+a)}a
 \ge u(2r+a+9/10).
\]
The square-root difference is at least $a-\zeta=9a/10$.
The quadratic term supplies uniform negative second derivative; the
smoothed absolute-value term makes the finite secant substantial even
at zero. In contrast, the profile $1-z^2$ has secant $a$ there.
These are different jobs, both of which are needed by the argument.

\begin{figure}[t]
\centering
\includegraphics[width=\linewidth]{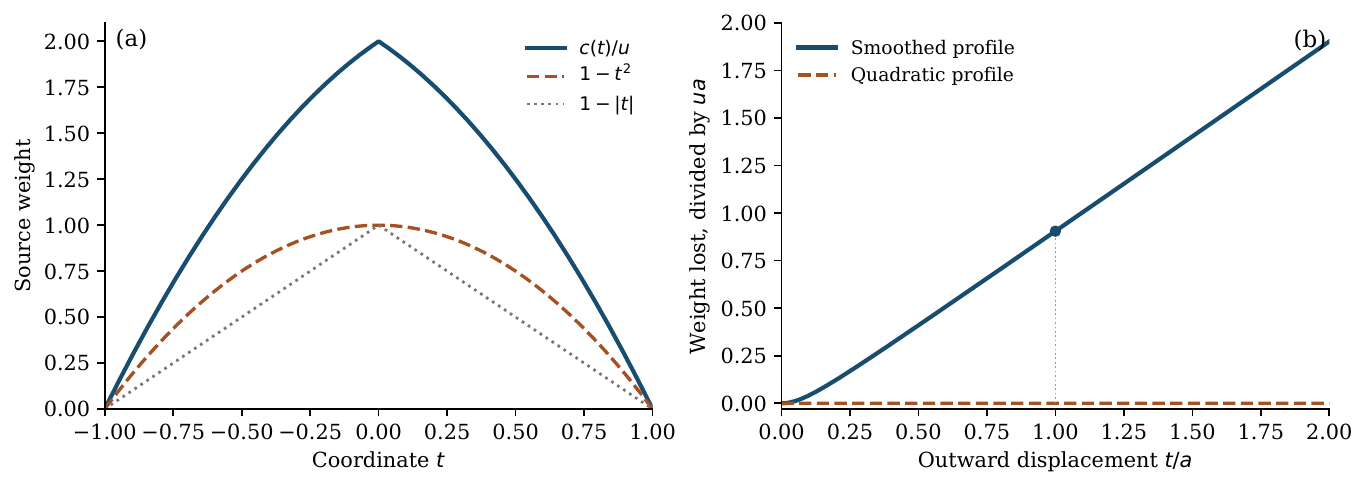}
\caption{The profile and its outward decrease, using $n=8$ and the
actual scales $a=1/(1600n^2)$, $\zeta=a/10$. The left panel separates
the quadratic and tent contributions. The right panel plots
$[c(0)-c(t)]/(ua)$ against $t/a$, compared with the quadratic profile's
loss. Although $c'(0)=0$, a step of length $a$ sees a substantial secant.}\label{fig:profile}
\end{figure}

\section{A density potential with a shared source}\label{sec:potential}

The source and density in Section~\ref{sec:setup} have been defined on
the whole cube. We now establish the properties needed to differentiate
their optimized value. The key step is a transport representation of
fidelity: it replaces its square root by linear terms in the density
and a minimization over a positive matrix. The same representation
will later certify outward steps and expose the optimizer's response.

\begin{lemma}[Parseval scales]\label{lem:scales}
If $m>0$, then $n\ge m\ge1$, $m/n\le\epsilon\le1$,
$\delta^{-1}\le\sqrt n$, and $\theta^{-1}\le\sqrt{2n}$.
If $m=0$, every signing has discrepancy zero.
\end{lemma}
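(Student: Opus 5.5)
The whole argument runs through traces of the Parseval identity $\sum_i A_i=I_m$ with $A_i=v_iv_i^*\succeq0$, plus the definitions in \eqref{eq:normalization}. I would take each inequality in turn.

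\textbf{Step 1: $n\ge m$ and the bounds on $\epsilon$.} Since each $A_i$ has rank at most one, subadditivity of rank gives
\[
 m=\operatorname{rank}I_m\le\sum_i\operatorname{rank}A_i\le n.
\]
Taking traces gives $m=\sum_i\Tr A_i=\sum_i\|v_i\|^2=\sum_i\|A_i\|\le n\epsilon$, so $\epsilon\ge m/n$. In particular $\epsilon>0$, as the paper asserts. For the upper bound, $0\preceq A_i\preceq\sum_j A_j=I_m$, so $\|A_i\|\le1$ for every $i$, hence $\epsilon\le1$.

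\textbf{Step 2: the bounds on $\delta^{-1}$ and $\theta^{-1}$.} From Step 1, $\delta^2=\epsilon\ge m/n\ge1/n$, so $\delta^{-1}\le\sqrt n$. Next, $\theta^{-1}=\sqrt{D}/\delta=\sqrt{2m}/\sqrt\epsilon$. Using $\epsilon\ge m/n$ gives
\[
 \theta^{-2}=\frac{2m}{\epsilon}\le 2m\cdot\frac nm=2n,
\]
so $\theta^{-1}\le\sqrt{2n}$. Note that this uses the stronger bound $\epsilon\ge m/n$, not merely $\epsilon\ge1/n$.

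\textbf{Step 3: the case $m=0$.} Here every $A_i$ is the empty $0\times0$ matrix. So $\sum_i\sigma_iA_i$ is the zero operator on the zero space, and its norm is $0$ by convention. This justifies returning the all-positive signing immediately, as in Theorem~\ref{thm:main}.

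There is no real obstacle. The only points needing care are the convention in Step 3 and remembering to use $\epsilon\ge m/n$, rather than $\epsilon\ge1/n$, in the $\theta$ bound of Step 2.
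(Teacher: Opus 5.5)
Your proposal is correct and follows the paper's own argument: take the trace of the Parseval identity, using $\Tr A_i=\|A_i\|\le1$, to get $m\le n\epsilon$ and $\epsilon\le1$, and then read off the bounds on $\delta^{-1}$ and $\theta^{-1}$. The differences are minor: you obtain $n\ge m$ from rank subadditivity where the paper uses $m\le n\epsilon\le n$, and you spell out the $m=0$ convention that the paper's proof leaves implicit.
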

\begin{proof}
$0\preceq A_i\preceq I$ and rank one give
$\Tr A_i=\norm{A_i}\le1$. Taking traces of Parseval gives
$m=\sum_i\norm{A_i}\le n\epsilon$. Combining this with $\epsilon\le1$ gives
the stated bounds on $\delta$ and $\theta$.
\end{proof}

\begin{lemma}[Properties of the source]\label{lem:sm:weights}
The scalar function $c(t)$ in \eqref{eq:sourceprofile} is smooth on the real
line, positive for $|t|<1$, and zero at $t=\pm1$. On $[-1,1]$ it satisfies
\begin{equation}\label{eq:sm:weightbounds}
\begin{gathered}
 u(1-t^2)\le c(t)\le2u,\\
 c'(t)=-u\left(2t+\frac{t}{\sqrt{t^2+\zeta^2}}\right),\qquad
 c''(t)=-2u-\frac{u\zeta^2}{(t^2+\zeta^2)^{3/2}}\le-2u.
\end{gathered}
\end{equation}
\end{lemma}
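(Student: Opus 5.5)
The proof is a direct computation with the explicit formula
\[
 c(t)=u\bigl(1-t^2+\sqrt{1+\zeta^2}-\sqrt{t^2+\zeta^2}\bigr),
\]
and we treat the listed properties one at a time.

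\emph{Smoothness and derivatives.} Because $\zeta=a/10>0$, the quantity $t^2+\zeta^2\ge\zeta^2$ never vanishes, so $\sqrt{t^2+\zeta^2}$ is smooth on all of $\R$, and hence so is $c$. Differentiating once gives
\[
 c'(t)=-u\bigl(2t+t(t^2+\zeta^2)^{-1/2}\bigr).
\]
Differentiating $t(t^2+\zeta^2)^{-1/2}$ again gives
\[
 (t^2+\zeta^2)^{-1/2}-t^2(t^2+\zeta^2)^{-3/2}=\zeta^2(t^2+\zeta^2)^{-3/2},
\]
which yields the stated formula for $c''$. The last term is nonnegative, so $c''\le-2u$.

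\emph{Endpoint values and positivity.} At $t=\pm1$ we have $1-t^2=0$ and the two square roots coincide, so $c(\pm1)=0$. The function $c$ is even, and for $t\in[0,1]$ we have $c'(t)\le0$, so $c$ is nonincreasing on $[0,1]$. The lower bound below also gives positivity directly: $c(t)\ge u(1-t^2)>0$ for $|t|<1$.

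\emph{Lower bound.} Since $t^2\le1$, monotonicity of the square root gives $\sqrt{1+\zeta^2}-\sqrt{t^2+\zeta^2}\ge0$ on $[-1,1]$. Therefore $c(t)\ge u(1-t^2)$.

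\emph{Upper bound.} We have $1-t^2\le1$. For the square-root difference, concavity (or simply $\sqrt{A}-\sqrt{B}\le\sqrt{A-B}$ for $A\ge B\ge0$) gives
\[
 \sqrt{1+\zeta^2}-\sqrt{t^2+\zeta^2}\le\sqrt{1-t^2}\le1.
\]
Adding the two pieces gives $c(t)\le2u$.

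Every step is elementary; the only point needing care is that $\zeta>0$, which is what makes $c$ smooth at $t=0$.
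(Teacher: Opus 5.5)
Your proof is correct and follows the same direct computation as the paper: the derivatives are obtained by differentiating (using $\zeta>0$), and the lower bound and endpoint values come from nonnegativity of the square-root difference on $[-1,1]$. The only difference is the upper bound. The paper uses evenness and monotonicity to reduce to $c(0)=u(1+\sqrt{1+\zeta^2}-\zeta)\le2u$ via $\sqrt{1+\zeta^2}\le1+\zeta$. You instead bound the square-root difference pointwise by $\sqrt{1-t^2}\le1$, which is equally valid and does not need the monotonicity step.
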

\begin{proof}
The derivatives follow directly and $\zeta>0$. Since $c$ is even and
decreasing on $[0,1]$, its maximum is
$c(0)=u(1+\sqrt{1+\zeta^2}-\zeta)\le2u$; the last inequality uses
$\sqrt{1+\zeta^2}\le1+\zeta$. Its additional square-root term is
nonnegative on the cube, giving the lower bound and endpoint statements.

\end{proof}

\begin{lemma}[Complete positivity and sign coupling]\label{lem:kraus}
Let $\sigma_0=I_2$ and let $\sigma_1,\sigma_2,\sigma_3$ be the Pauli
matrices, defined explicitly by
\[
 \sigma_1=\begin{pmatrix}0&1\\1&0\end{pmatrix},\quad
 \sigma_2=\begin{pmatrix}0&-\mathrm i\\\mathrm i&0\end{pmatrix},\quad
 \sigma_3=\begin{pmatrix}1&0\\0&-1\end{pmatrix}.
\]
Then
\begin{equation}\label{eq:kraus}
 \Omega_x(X)=\sum_{i=1}^n\sum_{\alpha=0}^3
 \frac{c_i}{2}(\sigma_\alpha\otimes A_i)X
                         (\sigma_\alpha\otimes A_i).
\end{equation}
Moreover $\Omega_x(I_D)\preceq4u\epsilon I_D$.
\end{lemma}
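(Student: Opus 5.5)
The plan is to verify \eqref{eq:kraus} directly from two elementary identities. The first is the rank-one collapse $A_iWA_i=\Tr(A_iW)A_i$, valid for every $m\times m$ matrix $W$ because $A_i=v_iv_i^*$. The second is the Pauli twirl $\sum_{\alpha=0}^3\sigma_\alpha Y\sigma_\alpha=2\Tr(Y)I_2$ for every $2\times2$ matrix $Y$. I would check the twirl on the four matrix units $E_{jk}$. For $j\ne k$ the four conjugates cancel in pairs. For $j=k$ they sum to $2I_2$, since $\sigma_0,\sigma_3$ fix $E_{jj}$ and $\sigma_1,\sigma_2$ send it to the other diagonal unit.

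With these in hand, I write a $D\times D$ matrix in block form $X=\sum_{j,k=1}^2E_{jk}\otimes X_{jk}$ and expand each Kraus term:
\[
 (\sigma_\alpha\otimes A_i)X(\sigma_\alpha\otimes A_i)
 =\sum_{j,k}\sigma_\alpha E_{jk}\sigma_\alpha\otimes A_iX_{jk}A_i
 =\sum_{j,k}\Tr(A_iX_{jk})\,\sigma_\alpha E_{jk}\sigma_\alpha\otimes A_i .
\]
Summing over $\alpha$ and applying the twirl leaves only the diagonal blocks. The result is $2\sum_j\Tr(A_iX_{jj})\,I_2\otimes A_i=2\Tr(B_iX)B_i$. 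Multiplying by $c_i/2$ and summing over $i$ gives $\Omega_x(X)$, which proves \eqref{eq:kraus}. The main point of the statement, complete positivity, then follows: by Lemma~\ref{lem:sm:weights} every $c_i\ge0$ on the cube, so \eqref{eq:kraus} is a Kraus representation with nonnegative weights. The same expansion shows how one source couples the two spectral signs: both diagonal blocks feed the same scalar $\Tr(B_iX)$.

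For the bound, I apply the definition directly: $\Omega_x(I_D)=\sum_ic_i\Tr(B_i)B_i=\sum_i2c_i\Tr(A_i)\,B_i$. Lemma~\ref{lem:sm:weights} gives $c_i\le2u$. Rank one and positivity give $\Tr A_i=\|A_i\|\le\epsilon$, as in Lemma~\ref{lem:scales}. Hence $\Omega_x(I_D)\preceq4u\epsilon\sum_iB_i=4u\epsilon\,I_2\otimes I_m=4u\epsilon I_D$, using Parseval. I expect no real obstacle. The only step needing care is the twirl identity, specifically checking that the off-diagonal matrix units cancel, which is a four-term computation.
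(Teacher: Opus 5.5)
Your proof is correct and is essentially the paper's argument: the same two identities (the Pauli twirl and the rank-one collapse $A_iWA_i=\Tr(A_iW)A_i$), applied blockwise including the off-diagonal blocks, give the Kraus form. Your bound via $c_i\le2u$, $\Tr(A_i)A_i=A_i^2\preceq\epsilon A_i$ and Parseval is the paper's estimate $\Omega_x(I_D)=2I_2\otimes\sum_ic_iA_i^2$ written slightly differently, and your check of the twirl on matrix units simply fills in the ``direct multiplication'' the paper leaves to the reader.
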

\begin{proof}
For $Y\in\C^{2\times2}$, direct multiplication gives
$\tfrac12\sum_{\alpha=0}^3\sigma_\alpha Y\sigma_\alpha=\Tr(Y)I_2$.
For $W\in\C^{m\times m}$, rank one gives
$A_iWA_i=\Tr(A_iW)A_i$.
Applying these identities to each sign block proves
\eqref{eq:kraus}, including off-diagonal inputs. The factors are
Hermitian and $c_i\ge0$, so this is a Kraus representation. It gives
complete positivity by the definition in Section~\ref{sec:preliminaries}.
Finally,
$\Omega_x(I_D)=2I_2\otimes\sum_i c_iA_i^2$ and
$A_i^2\preceq\epsilon A_i$, and $c_i\le2u$, which give the norm bound.
\end{proof}

The Kraus representation proves complete positivity on all matrix
blocks, including off-diagonal sign blocks. This is why the Pauli
identity is useful even for real input vectors. The density
optimization couples the two signs through one source and one trace
constraint.

\begin{lemma}[Continuity up to every face]\label{lem:continuity}
The potential $\mathscr F$ is continuous on the closed
cube. In particular it attains a minimum there, including when the
source support changes at the boundary.
\end{lemma}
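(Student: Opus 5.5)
The plan is to show that $L(x,S)$ is jointly continuous in $(x,S)$ on the compact set $[-1,1]^n\times\mathcal D$, where $\mathcal D=\{S\succeq0,\Tr S=1\}$. Continuity of $\mathscr F$ then follows from the standard fact that the maximum of a jointly continuous function over a fixed compact set is continuous in the parameter. The argument is the elementary one: uniform continuity of $L$ on the compact product, applied at an optimizer for each point. Attainment of the minimum on the closed cube is then Weierstrass. This route never looks at the optimizing density, so it is insensitive to the source support changing at faces. That is exactly the point of the statement: near a face some $c_i\to0$, the rank of $\Omega_x(S)$ may drop, and one cannot argue through smoothness of the optimizer.

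I would split $L$ into its three terms. The center term $\Tr(J\widehat H(x)S)$ is bilinear in $(x,S)$, hence continuous. The regularizer $2\theta\Tr\sqrt S$ does not depend on $x$. It is continuous in $S$ because the matrix square root is continuous on positive semidefinite matrices, for instance through the operator-monotone H\"older bound $\|\sqrt P-\sqrt Q\|\le\|P-Q\|^{1/2}$. The weights $c_i=c(x_i)$ are continuous on $[-1,1]$ by Lemma~\ref{lem:sm:weights}, so $(x,S)\mapsto\Omega_x(S)=\sum_i c_i\Tr(B_iS)B_i$ is jointly continuous and takes values in the positive semidefinite cone. By Lemma~\ref{lem:kraus} it is also uniformly bounded, with $\Omega_x(S)\preceq\Omega_x(I_D)\preceq4u\epsilon I_D$.

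The only real step, and the one I expect to be the main obstacle, is continuity of the fidelity $(S,M)\mapsto\Fid(S,M)=\Tr\sqrt{M^{1/2}SM^{1/2}}$ on pairs of positive semidefinite matrices, including at matrices $M$ of deficient rank. I would write it as a composition. First, $M\mapsto M^{1/2}$ is continuous on the PSD cone by the same H\"older bound. Second, $(M^{1/2},S)\mapsto M^{1/2}SM^{1/2}$ is polynomial, with PSD values. Third, $P\mapsto\Tr\sqrt P$ is continuous on the PSD cone. Every map in this chain is continuous on the closed cone, so nothing breaks when eigenvalues of $M$ vanish.

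If a quantitative modulus is preferred, one could instead use $\Fid(S,M)=\|M^{1/2}S^{1/2}\|_{\rm tr}$ together with the triangle inequality for the trace norm. This gives the bound $|\Fid(S,M)-\Fid(S',M')|\le\|M^{1/2}-M'^{1/2}\|_{\rm F}\sqrt D+\dots$, which again reduces everything to continuity of square roots. Assembling the three terms makes $L$ continuous on the compact product, and the max-over-compact argument concludes.
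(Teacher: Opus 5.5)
Your proposal is correct and follows essentially the same route as the paper. Both arguments establish joint continuity of $L$ on the compact set $[-1,1]^n\times\{S\succeq0,\ \Tr S=1\}$, using continuity of the weights and of the positive square root on the closed PSD cone, so rank loss of $\Omega_x(S)$ at a face is harmless. Both then apply the bound $|\mathscr F(x)-\mathscr F(y)|\le\max_S|L(x,S)-L(y,S)|$ and Weierstrass; your explicit decomposition of the fidelity into continuous maps simply spells out the step the paper states in one sentence.
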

\begin{proof}
The density set is compact. The entries of $\Omega_x(S)$ are smooth
in the real coordinates of $x$ and linear in $S$, and the positive square root is
continuous on the positive semidefinite cone. Thus $L(x,S)$ is jointly
continuous on a compact set and is uniformly continuous there. For
any two coefficient vectors,
\[
 |\mathscr F(x)-\mathscr F(y)|
 \le\max_{S\succeq0,\,\Tr S=1}|L(x,S)-L(y,S)|.
\]
The right side tends to zero as $y\to x$. Thus continuity also holds when the source loses rank at a face.
\end{proof}

\subsection{Transport and the optimizing density}

The transport identity turns a nonlinear fidelity term into an
optimization of linear density terms. We will use it in two ways:
fixing the old transport gives a finite comparison for a local
outward increment, while differentiating its stationarity equation gives the exact
curvature response. The source support is compressed only for the
transport; the density maximization remains full-dimensional. We call a density
\emph{faithful} when it is positive definite on the stated space.
The transport identity used below is the fidelity variational formula
of \cite[Theorem~3.19]{watrous2018}; we prove it here, including the
support and attainment details needed by the walk.
\begin{lemma}[Transport, faithfulness, and smoothness]\label{lem:transport}
On a fixed open live face, the maximum in \eqref{eq:potential} has a
unique positive definite optimizer. It commutes with $J$, as does the
positive transport on the live source support. The optimizing density
and potential are smooth in $x$ on that face. Here an open live face fixes
some coordinates at endpoints and lets the remaining coordinates vary
in $(-1,1)$.
\end{lemma}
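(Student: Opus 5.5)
The plan is to reduce the fidelity term to a minimization of expressions that are linear in $S$, read off strict concavity in the density from the regularizer, and then get each claim of the lemma from symmetry, uniqueness, and the implicit function theorem.

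\emph{Transport representation.} First I would establish the variational form of fidelity. Let $M\succeq0$ have support $P$ and let $S\succeq0$. Then
\[
 2\Fid(S,M)=\min_{Z\succ0\text{ on }P}\bigl\{\Tr(PSP\,Z^{-1})+\Tr(ZM)\bigr\}.
\]
The minimizer is the unique geometric-mean solution $Z=M^{-1/2}(M^{1/2}SM^{1/2})^{1/2}M^{-1/2}$ on $P$, which satisfies $ZMZ=PSP$. The lower bound comes from AM--GM applied to the singular values after the substitution $Z=M^{-1/2}WM^{-1/2}$, and attainment follows by substituting the minimizer back in.

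I apply this with $M=\Omega_x(S)$ and use that $\Omega_x$ is self-adjoint for $\langle\cdot,\cdot\rangle$. This gives $\Tr(Z\Omega_x(S))=\Tr(S\Omega_x(Z))$, so
\[
 L(x,S)=\inf_Z\Tr\bigl(S(J\widehat H+Z^{-1}+\Omega_x(Z))\bigr)+2\theta\Tr\sqrt S .
\]
The first part is an infimum of linear functions of $S$, hence concave. Since $\Tr\sqrt S$ is strictly concave on densities, $L(x,\cdot)$ is strictly concave, and the maximizer is unique.

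\emph{Positive definiteness.} Suppose the optimizer $S$ has a unit kernel vector $e$, and compare $S$ with $S_t=(1-t)S+t\,ee^*$. The center and fidelity parts are concave and bounded on densities, so they lose at most $O(t)$. The regularizer gains $2\theta\sqrt t-O(t)$, because the new eigenvalue $t$ enters through $\sqrt t$. For small $t$ this contradicts optimality, so $S\succ0$.

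\emph{Commutation with $J$.} Since $J=\sigma_3\otimes I_m$ commutes with $B_i=I_2\otimes A_i$ and with $J\widehat H(x)$, we have $\Omega_x(JSJ)=J\Omega_x(S)J$. Fidelity and $\Tr\sqrt{\cdot}$ are unitarily invariant, so $L(x,JSJ)=L(x,S)$, and uniqueness forces $JSJ=S$. The transport is the unique minimizer of a functional invariant under the same conjugation, so it also commutes with $J$.

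\emph{Smoothness.} On an open live face the live $c_i$ are positive and smooth in $x$, and the frozen $c_i$ vanish. Because $S\succ0$, every $\Tr(B_iS)>0$. Hence the support of $\Omega_x(S)$ is the span $P$ of the ranges of the live $B_i$, which is fixed on the face. Compressing to $P$, the matrix $M^{1/2}SM^{1/2}$ is positive definite on $P$, so $\Fid$, $\Tr\sqrt S$ and $L$ are real-analytic near $(x,S)$ in the interior of the density set.

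I then apply the implicit function theorem to the stationarity equation $\nabla_SL(x,S)=\lambda I$, $\Tr S=1$. Nondegeneracy requires the Hessian of $L$ in $S$, restricted to trace-zero Hermitian directions, to be invertible. The fidelity part is concave, so it contributes a negative semidefinite term. At $S\succ0$ the Hessian of $2\theta\Tr\sqrt S$ is strictly negative definite, being a divided-difference form with entries $-(\sqrt{s_j}+\sqrt{s_k})^{-1}$-type weights. Their sum is therefore negative definite, the bordered KKT matrix is invertible, and the optimizer is a smooth function of $x$. Smoothness of $\mathscr F$ then follows from the envelope theorem, or simply by composition.

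\emph{Main obstacle.} I expect the delicate step to be the smoothness of the fidelity term, because $\Omega_x(S)$ is singular whenever some atoms are frozen. Everything hinges on showing that its support is exactly the constant space $P$ on the face. Only then is the compressed square root smooth and the minimizing transport unique on the live source support.
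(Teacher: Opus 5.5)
Your proposal is correct and follows essentially the same route as the paper. Both use the transport representation of fidelity for concavity, strict concavity of the Tsallis term for uniqueness, a mixing argument for faithfulness (you mix with $ee^*$, the paper with $I_D/D$), and $J$-invariance plus uniqueness for block diagonality. Both then apply the implicit function theorem on the fixed live source support, with the strictly negative regularizer Hessian supplying invertibility.

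Two small tightenings, both of which the paper makes explicit. First, to get concavity on singular densities, write the transport formula as an infimum over $Z\succ0$ on a fixed space, such as the face support $\mathcal K$ or the full space. Your version uses the $S$-dependent support of $\Omega_x(S)$, where the index set varies with $S$ and a minimum need not be attained. Second, say that concavity and uniqueness identify the implicit-function branch with the actual maximizer.
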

\begin{proof}
\emph{The supported transport.}
Let $\mathcal K=\C^2\otimes\operatorname{span}\{v_i:i\text{ live}\}$
be the current source support, and let $S_0$ be the compression of a
full density to $\mathcal K$. A supported matrix is extended by zero
when used in a full-space trace; products with supported matrices use
the compressed atoms. Inverses of supported positive matrices are taken
only on $\mathcal K$. We also use $\Omega_x$ for the induced map
on matrices on $\mathcal K$, obtained by compressing the atoms in its
finite sum. Its zero extension agrees with the original full-space
map. The source depends only on this compression. At a faithful density,
$\Tr(B_iS)>0$ for every nonzero live atom. Their ranges span
$\mathcal K$, so their positively weighted sum $\Omega_x(S)$ is
positive definite there. If $\mathcal K=\{0\}$, the fidelity is zero
and every transport term below is absent. For positive \(S_0\) and
\(M=\Omega_x(S)\succ0\) on \(\mathcal K\),
\begin{equation}\label{eq:7}
 2\operatorname{Fid}(S,M)
 =\min_{Z\succ0\ {\rm on}\ \mathcal K}
   \{\operatorname{Tr}(S_0Z^{-1})+\operatorname{Tr}(MZ)\},
 \qquad ZMZ=S_0 .
\end{equation}
Compression is legitimate because
\(M^{1/2}SM^{1/2}=M^{1/2}S_0M^{1/2}\).
To prove the variational identity, set
\(Y=S_0^{1/2}Z^{-1}S_0^{1/2}\) and
\(Q_0=S_0^{1/2}MS_0^{1/2}\). The objective becomes
\(\operatorname{Tr}Y+\operatorname{Tr}(Q_0Y^{-1})\), and
\[
 \|Y^{1/2}-Y^{-1/2}Q_0^{1/2}\|_{\rm HS}^2
 =\operatorname{Tr}Y+\operatorname{Tr}(Q_0Y^{-1})
       -2\operatorname{Tr}\sqrt{Q_0}.
\]
The squared norm vanishes precisely at $Y=Q_0^{1/2}$, proving
attainment and uniqueness. Returning to the
original variables gives $ZMZ=S_0$. In particular the minimizer is
\[
 Z=S_0^{1/2}(S_0^{1/2}MS_0^{1/2})^{-1/2}S_0^{1/2},
\]
which depends smoothly on positive $S_0,M$. The equality of the two
square-root traces used here follows because $CC^*$ and $C^*C$ have
the same nonzero eigenvalues, with $C=M^{1/2}S_0^{1/2}$.
The identity proves joint
concavity in $(S_0,M)$, since its right side is an infimum of linear
functions of that pair. Singular density inputs follow by continuity; the
old positive test in \eqref{eq:7} remains an upper bound if a later source has
smaller support. More precisely, on any fixed finite-dimensional space,
for arbitrary $S,M\succeq0$ the same formula holds with an infimum
instead of a minimum. To see this, apply the positive formula to
$S+\eta I,M+\eta I$ for $\eta>0$. Every fixed positive test remains an
upper bound as $\eta\downarrow0$, proving the lower bound for the
infimum. Conversely, at the minimizing test for the perturbed pair,
the unperturbed objective is no larger than
$2\Fid(S+\eta I,M+\eta I)$. Continuity proves the reverse bound.
The infimum can fail to be attained when one input is singular.

\emph{The maximizing density.}
For fixed \(x\), the fidelity term is concave in the full \(S\):
by \eqref{eq:7} it is an infimum of linear functions of \(S\). The Tsallis term
is strictly concave. Consequently the maximum in \eqref{eq:potential} is unique.
It is attained at $S\succ0$. To prove this, suppose the optimizer
has a zero eigenvalue and mix it with $I_D/D$ with weight $t>0$.
Concavity gives a lower bound $(1-t)F(S)+tF(I_D/D)$
for the part $F$ consisting of the center and fidelity, so its loss
is at most a fixed multiple of $t$. The density commutes with $I_D$;
its zero eigenvalues under this mixture become $t/D$, and contribute
a positive multiple of $\sqrt t$ to $2\theta\Tr\sqrt S$. The other
eigenvalues change their square-root contribution by $O(t)$.
This makes the contradiction quantitative as $t\downarrow0$.

The objective is invariant under \(S\mapsto JSJ\). For the source this
uses \(\operatorname{Tr}(B_iJSJ)=\operatorname{Tr}(B_iS)\) and the fact
that every output commutes with \(J\). Concavity or uniqueness gives
\begin{equation}\label{eq:8}
 S=\operatorname{diag}(S_+,S_-).
\end{equation}
Its source, compression, and unique actual transport \(Z\) also commute
with \(J\). In particular \(S_\pm\succ0\); their sum has trace one.

\emph{Stationarity.}
The regularizer and trace constraint remain on the full space.
Define the density gradient by
$D_S L(x,S)[X]=\langle\nabla_S L(x,S),X\rangle$ for full Hermitian
variations $X$, using the real trace pairing. At the actual optimizer,
the full density stationarity equation, with scalar Lagrange multiplier
$\lambda\in\R$ for $\Tr S=1$, is written below. A tilde denotes
extension from $\mathcal K$ by zero; in particular
$\widetilde{Z^{-1}}$ is a full $D\times D$ matrix.
\begin{equation}\label{eq:9}
 J\widehat H(x)+\widetilde{Z^{-1}}
      +\Omega_x(Z)+\theta S^{-1/2}=\lambda I_{2m}.
\end{equation}
For fixed $x$, define
\[
 g_x(S,Z)=\Tr(S_0Z^{-1})+\Tr(\Omega_x(S_0)Z).
\]
For a supported Hermitian variation $U$, differentiating the inverse
by $(Z+tU)^{-1}{}'|_{t=0}=-Z^{-1}UZ^{-1}$ gives
\[
 D_Zg_x(S,Z)[U]
 =\Tr\bigl((\Omega_x(S_0)-Z^{-1}S_0Z^{-1})U\bigr)=0
\]
at the transport optimizer. Along the smooth minimizing transport,
the chain-rule term involving its derivative therefore vanishes.
For a full Hermitian density variation $X$, the remaining derivative is
\[
 D_S\bigl(2\Fid(S,\Omega_x(S))\bigr)[X]
 =\Tr\bigl((\widetilde{Z^{-1}}+\Omega_x(Z))X\bigr).
\]
This uses self-adjointness of the source, which follows directly from
\[\Tr(Y\Omega_x(X))=\sum_i c_i\Tr(B_iX)\Tr(B_iY).\]
The square-root regularizer has derivative
$\theta\Tr(S^{-1/2}X)$, derived just below.
Since the maximizing density is positive definite, both signs of every
sufficiently small trace-zero variation are feasible. Its gradient is
therefore orthogonal to all trace-zero Hermitian matrices, and hence
is a scalar multiple of the identity. This proves \eqref{eq:9}.
Conversely, concavity gives
$L(x,T)\le L(x,S)+\langle\nabla_S L(x,S),T-S\rangle$
for every density $T$, including singular $T$ by continuity.
Thus \eqref{eq:9} is sufficient as well as necessary for the maximum.
At the positive definite optimum the cone constraint is inactive.

\emph{Smooth dependence on the coefficients.}
The function $L(x,S)$ is jointly smooth when $x$
lies on a fixed open live face and $S\succ0$. One direct justification is as follows. The positive
square root and the transport in \eqref{eq:7} are smooth there. In an eigenbasis
of $S$, write $s_a>0$ for its eigenvalues and $X_{ab}$ for the entries
of a full Hermitian variation $X$. Denote the negative second variation
of the regularizer by $\mathcal A_\theta[X,X]$; explicitly,
\begin{equation}\label{eq:10}
 \mathcal A_\theta[X,X]
 =\theta\sum_{a,b}
  \frac{|X_{ab}|^2}
  {\sqrt{s_a}\sqrt{s_b}(\sqrt{s_a}+\sqrt{s_b})}>0
 \quad (X\ne0).
\end{equation}
For a direct derivation, set $Q=\sqrt S$ and differentiate $Q^2=S$.
The derivative $Y=D\sqrt S[X]$ solves $QY+YQ=X$, so in an eigenbasis
$Y_{ab}=X_{ab}/(\sqrt{s_a}+\sqrt{s_b})$. Taking traces gives
$D(2\theta\Tr\sqrt S)[X]=\theta\Tr(S^{-1/2}X)$.
Differentiating $Q^{-1}$ gives $-Q^{-1}YQ^{-1}$, which proves
\eqref{eq:10}, including coincident eigenvalues. It proves strict
concavity on the positive cone. For two distinct semidefinite inputs,
restrict to the range of their sum: every strict convex combination
is positive definite on that range, so the same negative second
derivative and endpoint continuity prove strict concavity there too.
Thus the density Hessian is strictly negative on
the full trace-zero tangent space. Its stationarity equations, with the trace
constraint, have an invertible derivative. Indeed, a vector $(X,b)$
in the kernel of this derivative satisfies
$D_S(\nabla_S L)(x,S)[X]=bI_D$ and $\Tr X=0$. Pairing with $X$ and using strict
negative definiteness forces $X=0$, then $b=0$. This is a square
finite-dimensional linear system, so injectivity implies invertibility.
The implicit function theorem (Theorem~\ref{thm:standardift}) now applies on the open positive cone
and gives a smooth local stationary density and multiplier. Concavity
and uniqueness identify this local solution with the actual optimizer.
The local solutions therefore join to give the asserted smooth optimizer
on each open face.
\end{proof}
\begin{remark}[Zero atoms]\label{rem:zero}
The source support is $\C^2\otimes\operatorname{span}\{v_i:i\text{ live}\}$.
Statements involving positive probes are restricted to nonzero live
atoms. The algorithm retains zero labels. Their local outward candidates
have exactly unchanged potential and are always accepted by the value
tests. Thus a curvature query is made only when every live atom is
nonzero; see Lemma~\ref{lem:localtest}. The response estimates are consequently applied only to positive probes.
\end{remark}

\subsection{The size of the potential and its optimizing density}

\begin{lemma}[Potential bounds and a density floor]\label{lem:floor}
For $x\in[-1,1]^n$,
\begin{equation}\label{eq:basicbounds}
\begin{split}
 \mathscr F(x)&\le1+34\delta,\\
 \mathscr F(0)&\le34\delta,\\
 \norm{H(x)}&\le\mathscr F(x).
\end{split}
\end{equation}
Every optimizing density has an explicit positive lower eigenvalue
bound, independent of the least positive source eigenvalue.
\end{lemma}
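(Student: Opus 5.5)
The three inequalities in \eqref{eq:basicbounds} follow from the definition of $L$ by choosing good test objects. The density floor follows from the stationarity equation \eqref{eq:9} of Lemma~\ref{lem:transport}, together with an upper bound on its multiplier $\lambda$.

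\emph{Upper bounds.} I bound the three terms of $L(x,S)$ separately, for an arbitrary density $S$.
\begin{itemize}[leftmargin=*]
\item \emph{Center term.} Since $|x_i|\le1$, $A_i\succeq0$ and $\sum_iA_i=I_m$, we have $-I_m\preceq H(x)\preceq I_m$. Hence $\Tr(J\widehat H(x)S)\le\norm{H(x)}\le1$, and this term vanishes at $x=0$.
\item \emph{Fidelity term.} I use the variational identity \eqref{eq:7}, in its infimum form valid for all positive semidefinite inputs, with the scalar test $Z=\kappa I$. This gives $2\Fid(S,M)\le\kappa^{-1}\Tr S+\kappa\Tr M$. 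Optimizing over $\kappa>0$ gives $\Fid(S,M)\le\sqrt{\Tr S\,\Tr M}$.
\item \emph{Trace of the source.} With $M=\Omega_x(S)$ we have $\Tr M=\sum_ic_i\Tr(B_iS)\Tr B_i$. Here $\Tr B_i=2\Tr A_i=2\norm{A_i}\le2\epsilon$ by rank one, and $c_i\le2u$ by Lemma~\ref{lem:sm:weights}. Also $\sum_i\Tr(B_iS)=\Tr((I_2\otimes I_m)S)=1$. Therefore $\Tr M\le4u\epsilon$, and so $2\Fid\le2\sqrt{4u\epsilon}=4\sqrt u\,\delta=32\delta$ because $u=64$.
\item \emph{Regularizer.} By Cauchy--Schwarz on the eigenvalues, $\Tr\sqrt S\le\sqrt D$. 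Hence $2\theta\Tr\sqrt S\le2\theta\sqrt D=2\delta$.
\end{itemize}
Adding the three bounds gives $\mathscr F(x)\le1+34\delta$, and $\mathscr F(0)\le34\delta$ since the center term vanishes at the origin.

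\emph{Lower bound.} Take $S=ww^*$, where $w$ is a unit top eigenvector of $J\widehat H(x)=\diag(H(x),-H(x))$. Its eigenvalue is $\norm{H(x)}$. The fidelity and Tsallis terms are nonnegative, so $\mathscr F(x)\ge L(x,S)\ge\norm{H(x)}$.

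\emph{Density floor.} On an open live face, Lemma~\ref{lem:transport} gives a positive definite optimizer $S$ satisfying \eqref{eq:9}. This is the step that needs care, because $\lambda$ is only implicitly defined. I control it in two steps.
\begin{itemize}[leftmargin=*]
\item \emph{Multiplier.} Pairing \eqref{eq:9} with $S$ and using $\Tr S=1$ gives
\[
\lambda=\Tr(J\widehat HS)+\Tr(S_0Z^{-1})+\Tr(S\Omega_x(Z))+\theta\Tr\sqrt S .
\]
By self-adjointness, $\Tr(S\Omega_x(Z))=\Tr(\Omega_x(S)Z)$. By \eqref{eq:7} at the optimal transport, the middle two terms sum to $2\Fid(S,\Omega_x(S))$. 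Hence $\lambda=\mathscr F(x)-\theta\Tr\sqrt S\le1+34\delta$.
\item \emph{Floor.} The matrices $\widetilde{Z^{-1}}$ and $\Omega_x(Z)$ are positive semidefinite; the latter by complete positivity, Lemma~\ref{lem:kraus}. Then \eqref{eq:9} gives $\theta S^{-1/2}\preceq\lambda I-J\widehat H(x)\preceq(2+34\delta)I$. Therefore every eigenvalue satisfies $s\ge\theta^2/(2+34\delta)^2$.
\end{itemize}
This floor involves only $\theta$ and $\delta$, not the source spectrum.

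At boundary points, including the case of an empty source support, I would obtain the same floor either by running the argument on the face containing $x$, or by continuity via Lemma~\ref{lem:continuity}. In the latter route, uniqueness of the maximizer makes limits of optimizers optimizers, and the floor passes to the limit. I expect no real obstacle here; the only subtle point is identifying $\lambda$ through the transport identity rather than bounding $Z^{-1}$ directly.
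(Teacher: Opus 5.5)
Your proposal is correct and follows the paper's proof essentially step for step. You use the scalar transport test $Z=\kappa I$ to get $\Fid(S,M)\le\sqrt{\Tr S\,\Tr M}$ with $\Tr\Omega_x(S)\le4u\epsilon$, a rank-one test density in the appropriate sign block for the lower bound, and you pair the stationarity equation \eqref{eq:9} with $S$ to obtain $\lambda=\mathscr F(x)-\theta\Tr\sqrt S$ before dropping the positive transport terms; the only differences are that you bound $\Tr\Omega_x(S)$ directly through $\sum_iB_i=I_D$ rather than via $\Omega_x(I_D)\preceq4u\epsilon I_D$, and your constant $2+34\delta$ is slightly sharper than the paper's conservative $40$ (which gives $\mu=(\theta/40)^2$).
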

\begin{proof}
Parseval gives $-I\preceq H(x)\preceq I$.
Testing the transport infimum at $Z=tI$, for $t>0$, gives
$2\Fid(S,M)\le t^{-1}\Tr S+t\Tr M$.
Minimizing over $t$ gives $\Fid(S,M)\le\sqrt{\Tr S\Tr M}$;
if either trace vanishes, that matrix and its fidelity are zero. Since $\Tr\Omega_x(S)=\Tr(S\Omega_x(I))\le256\epsilon$,
the source term is at most $32\delta$.
Also $\Tr\sqrt S\le\sqrt D$. This proves the upper bounds.
For the lower bound choose a norming unit eigenvector of $H(x)$ and
put its rank-one density in the appropriate sign block. Its center
value is $\norm{H(x)}$, and both remaining terms are
nonnegative.

At the optimizer, pair \eqref{eq:9} with $S$. The two transport
terms each give the fidelity, so
$\lambda=\mathscr F(x)-\theta\Tr\sqrt S\le\mathscr F(x)$.
The transport terms are positive semidefinite. Hence
$\theta S^{-1/2}\preceq40I$.
In particular, with $\mu=(\theta/40)^2$,
\begin{equation}\label{eq:densityfloor}
 S\succeq\mu I_D.
\end{equation}
Indeed the potential is at most $35$ and the center has
norm at most $1$, so the constant $40$ is a conservative upper bound.
\end{proof}

The uniform bound $1+34\delta$ supplies conditioning on the entire cube.
The discrepancy estimate instead starts from the much smaller
initial value. The walk starts with $\mathscr F(0)\le34\delta$.
Lemma~\ref{lem:ledger} controls the potential at every state,
and Theorem~\ref{thm:trial} obtains the $O(\delta)$ discrepancy guarantee
from this smaller initial value and the completion estimate.

\section{Local outward comparison}\label{sec:outward}

An outward move changes two terms in the potential: it moves the
center and removes source. Holding the current minimizing transport
fixed bounds this combined change from above. This supplies a
sufficient certificate for an entire finite step. More significantly,
a failed value test tells us that the certificate must fail, which
will constrain the response calculation.

Use the candidate length $a$ and smoothing width $\zeta=a/10$.
Suppose a nonzero live atom has coordinate margin $1-|x_i|\ge a$, and choose
$s_i\in\{-1,1\}$ with $s_i x_i=|x_i|$, choosing $s_i=1$ at zero. Write $r=|x_i|$ and define its
outward secant
\begin{equation}\label{eq:sm:secant}
 \ell_i=\frac{c(r)-c(r+a)}a.
\end{equation}
At the current optimizing density and transport put
$p_i=\operatorname{Tr}(SB_i)>0$ and $q_i=\operatorname{Tr}(ZB_i)>0$
for each nonzero live label.

\begin{lemma}[Secant bounds and a certified increment]\label{lem:sm:outward}
The secant obeys
\begin{equation}\label{eq:sm:secantbounds}
 \ell_i\ge |c'(x_i)|,\qquad
 \ell_i\ge u\left(2|x_i|+a+\frac9{10}\right)\ge\frac9{10}u.
\end{equation}
If $\ell_iq_i\ge1$, the full increment of length $a$ satisfies
\begin{equation}\label{eq:sm:outward}
 \mathscr F(x+a s_i e_i)\le\mathscr F(x).
\end{equation}
\end{lemma}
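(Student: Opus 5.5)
The lemma has two parts: the scalar secant bounds \eqref{eq:sm:secantbounds}, and a finite-step comparison obtained by freezing the current optimal transport.

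\textbf{Secant bounds.} Since $c$ is even, work with $r=|x_i|\ge0$ and $r+a\le1$.

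\begin{itemize}
\item \emph{First bound.} By Lemma~\ref{lem:sm:weights}, $c$ is concave ($c''\le-2u$) and decreasing on $[0,1]$. Concavity gives $c(r+a)\le c(r)+ac'(r)$, so $\ell_i\ge -c'(r)=|c'(x_i)|$.
\item \emph{Second bound.} Split the secant into its two parts. The quadratic part contributes exactly
\[
 u\,\frac{(r+a)^2-r^2}{a}=u(2r+a).
\]
The square-root part contributes $u\bigl(\sqrt{(r+a)^2+\zeta^2}-\sqrt{r^2+\zeta^2}\bigr)/a$. For this, use $\sqrt{(r+a)^2+\zeta^2}\ge r+a$ and $\sqrt{r^2+\zeta^2}\le r+\zeta$. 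The difference is then at least $a-\zeta=9a/10$, since $\zeta=a/10$.
\item Adding the two parts gives $\ell_i\ge u(2r+a+9/10)\ge 9u/10$.
\end{itemize}

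\textbf{Certified increment: setup.} Let $y=x+as_ie_i$, and let $S$ be a density optimal at $y$ (it exists by compactness). Let $(S^x,Z)$ be the optimizer and transport at $x$ from Lemma~\ref{lem:transport}. The aim is
\[
 L(y,S)\le\mathscr F(x).
\]
The key tool is the infimum form of \eqref{eq:7}, which holds for arbitrary semidefinite inputs on the fixed space $\mathcal K$ of the current support. The source at $y$ is supported inside $\mathcal K$, and $Z$ is a positive test on $\mathcal K$. Therefore
\[
 2\Fid(S,\Omega_y(S))\le \Tr(S_0Z^{-1})+\Tr(\Omega_y(S)Z),
\]
with the $\Omega_y$ term written via the compressed atoms.

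\textbf{Certified increment: comparison.} Write $p_i'=\Tr(B_iS)$ for the new density's weight. Then
\[
 \Tr(\Omega_y(S)Z)=\Tr(\Omega_x(S)Z)-(c_i-c(x_i+as_i))\,p_i'q_i=\Tr(\Omega_x(S)Z)-a\ell_ip_i'q_i .
\]
The center term changes by $a s_i\Tr(JB_iS)$. Together these give
\[
 L(y,S)\le \Tr\bigl(G S\bigr)+2\theta\Tr\sqrt S+a\Tr\bigl((s_iJ-\ell_iq_iI_D)B_iS\bigr),
\]
where
\[
 G=J\widehat H(x)+\widetilde{Z^{-1}}+\Omega_x(Z).
\]
\begin{itemize}
\item \emph{Sign of the perturbation.} The matrix $B_i^{1/2}(s_iJ-\ell_iq_iI)B_i^{1/2}$ is block diagonal, with blocks $(s_i-\ell_iq_i)A_i$ and $(-s_i-\ell_iq_i)A_i$. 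When $\ell_iq_i\ge1$ both blocks are negative semidefinite. Since $S\succeq0$, the last term is therefore $\le0$.
\item \emph{Bounding the rest.} By \eqref{eq:9}, $G=\lambda I-\theta (S^x)^{-1/2}$. Hence
\[
 \Tr(GS)+2\theta\Tr\sqrt S=\lambda-\theta\Tr\bigl((S^x)^{-1/2}S\bigr)+2\theta\Tr\sqrt S .
\]
This is a concave function of $S$ whose gradient vanishes at $S^x$ modulo the trace constraint, so it is maximized at $S=S^x$. Its value there is $\lambda+\theta\Tr\sqrt{S^x}=\mathscr F(x)$, as computed in Lemma~\ref{lem:floor}.
\end{itemize}
This proves $\mathscr F(y)\le\mathscr F(x)$.

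\textbf{Expected obstacle.} The delicate step is justifying the frozen-transport upper bound when $y$ lands on a face: if $r+a=1$, the source loses the atom and its support shrinks. This is exactly where the infimum version of \eqref{eq:7} on the fixed space $\mathcal K$, valid for singular inputs, is used. I also need to check that the zero-extension conventions make $\Tr(S_0Z^{-1})$ and $\Tr(\Omega_y(S)Z)$ agree with the terms in $G$.
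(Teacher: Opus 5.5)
Your proposal is correct and follows essentially the same route as the paper. The secant bounds come from concavity of $c$ and the estimate $\sqrt{(r+a)^2+\zeta^2}-\sqrt{r^2+\zeta^2}\ge a-\zeta$; the certified increment comes from freezing the old transport, using \eqref{eq:9} to show that the old density maximizes the fixed-transport objective with value $\mathscr F(x)$, while the step only adds the nonpositive term $a\Tr\bigl((s_iJ-\ell_iq_iI)B_iS\bigr)$, and you handle the face case $r+a=1$ through the infimum form of \eqref{eq:7} on $\mathcal K$, exactly as the paper does.
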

\begin{proof}
Concavity on $[0,1]$ gives $\ell_i\ge-c'(r)=|c'(x_i)|$.
For the second bound, write
\[
 \ell_i=u(2r+a)+\frac u a
 \left[\sqrt{(r+a)^2+\zeta^2}-\sqrt{r^2+\zeta^2}\right].
\]
The bracket is at least $(r+a)-(r+\zeta)=a-\zeta=9a/10$.
This proves both secant estimates uniformly in $r$, including $r=0$.

For the potential comparison hold the old optimizing transport fixed.
Its density objective is a global upper bound for the new fidelity
objective by \eqref{eq:7}, also when a source weight becomes zero.
At the old state it is linear in the density except for the concave
Tsallis term. Its density gradient at the old optimizer equals the
actual potential objective's gradient, by the chain-rule calculation
in Lemma~\ref{lem:transport}. Equation~\eqref{eq:9} and the trace-one
constraint therefore show that the old density globally maximizes this
fixed-transport objective. Its maximum is exactly $\mathscr F(x)$.
This proves the needed saddle property directly.
The increment changes its linear density matrix by
\[
 a(s_iJ-\ell_iq_iI)B_i\preceq0.
\]
Indeed $B_i\succeq0$ commutes with $J$ and both sign-block coefficients
are nonpositive. Thus the new potential is at most the old maximum.
\end{proof}

The secant is a finite-step quantity. Its uniform lower bound near
zero is the role of the smoothed tent in the profile. If every
candidate increases the potential, the contrapositive of
Lemma~\ref{lem:sm:outward} gives $\ell_iq_i<1$ on every nonzero
live label. In particular,
\[
 |c'(x_i)|q_i\le1,\qquad c_iq_i^2\le\frac{200}{81u},
\]
using $c_i\le2u$. These are the two bounds that enter the curvature
argument. They control the optimizing transport without computing
it in the algorithm.

\section{Differentiating the optimized potential}\label{sec:curvature}

The next calculation is valid for smooth positive live weights given
by scalar functions $c_i(x_i)$. In this section $c_i$ without an argument
means its value at the current $x_i$, and $c_i'(x_i),c_i''(x_i)$ are its
ordinary scalar derivatives. For our walk every function $c_i$ is the
same function $c$ in \eqref{eq:sourceprofile}. All coefficient sums
in this section run over live labels. At the current optimizing density
$S$ and supported transport $Z$, define $p_i=\Tr(SB_i)$ and
$q_i=\Tr(ZB_i)$ for every live label, extending $Z$ by zero for a
full-space trace. Both probes vanish when $A_i=0$.
We keep the full density variation and the supported transport variation
together until the response has been identified. The particular source
profile will enter only through its first two derivatives.

\subsection{Linearizing the density and transport equations}
\label{sec:linearresponse}

The direct second derivative of the source weights is negative.
That observation alone is insufficient: the maximizing density and
minimizing transport both move. Their defining equations determine
these changes, and a Schur-complement calculation separates the
direct contribution from the cost of reoptimization.

This is the stability viewpoint suggested by matrix Dyson equations
\cite{aek2019}. Near a regular spectral edge, controlling the
linearized equation requires attention to its poorly controlled
directions \cite[Section~4]{aeks2020}. Here we analyze the coupled
regularized system selected by the variational problem, uniformly in
the directions required by the walk. Its regularizer term
$\theta S^{-1/2}$ is matrix-valued, so it generally cannot be absorbed
into a scalar spectral parameter. All derivatives below are on a
fixed open live face.

Let $\pi$ compress a full Hermitian matrix to the source support
$\mathcal K$, and let $\iota$ extend a matrix on $\mathcal K$ by zero.
These maps are adjoint for the trace pairing. In particular
$S_0=\pi S$. The density variation will remain a full matrix, including
its off-support entries. Only the transport is restricted to
$\mathcal K$.

Assume there is a live nonzero atom, so $\mathcal K\ne\{0\}$. If the
source support is zero, all live atoms are zero and can move outward
without changing the center or source; there are then no transport
equations to differentiate.

For any positive definite self-adjoint linear operator $\mathcal A$
on a real matrix inner-product space, use the weighted norm
\[
 \|Y\|_{\mathcal A}^2=\langle Y,\mathcal A(Y)\rangle.
\]
In particular, $\|Y\|_{\mathcal A^{-1}}^2$
means $\langle Y,\mathcal A^{-1}(Y)\rangle$: the inverse is that of a
\emph{linear operator}, rather than multiplication by an inverse
matrix. We also use
$\mathcal A[X,X]=\langle X,\mathcal A(X)\rangle$ for its quadratic form.

Fix $h\in\R^n$ with $h_i=0$ on every frozen label and set $x(t)=x+th$.
Write $S(t),Z(t),\lambda(t)$ for the actual optimizing density,
transport, and density multiplier. Unadorned $S,Z,\lambda$ denote their
values at $t=0$. Put
\[
 X=\dot S(0),\qquad U=\dot Z(0),\qquad
 \dot\lambda=\dot\lambda(0),\qquad X_0=\pi X.
\]
The trace constraint gives $\Tr X=0$. Define $W(t)=Z(t)^{-1}$,
$W_0=W(0)=Z^{-1}$, and
$M=\Omega_x(S_0)$ on $\mathcal K$. We use the restricted source map
when its argument is a matrix on $\mathcal K$. Define the following
operators on Hermitian matrices on $\mathcal K$, with $Y,V$ in that space:
\begin{equation}\label{eq:16}
\begin{split}
 \mathcal L(Y)&=W_0YW_0-\Omega_x(Y),\\
 \mathcal J_Z(V)&=W_0VM+MVW_0,\\
 \dot\Omega(Y)&=\sum_i c_i'(x_i)h_i\Tr(B_iY)B_i,\\
 N_h&=\dot\Omega(S_0),\qquad V_h=\dot\Omega(Z),\qquad
 B_h=\sum_i h_iJB_i.
\end{split}
\end{equation}
Here $B_h,V_h,N_h$ are written as matrices on $\mathcal K$. Their
extensions give the corresponding full matrices. In particular $\iota B_h$ is the derivative of
$J\widehat H(x+th)$ at $t=0$. Frozen coordinates are held fixed.

We also use $\mathcal A_\theta$ for the self-adjoint operator whose
quadratic form is \eqref{eq:10}. Its action follows from the square-root
Sylvester equation: with $Q=\sqrt S$, solve $QY+YQ=X$ and set
\[
 \mathcal A_\theta X=\theta Q^{-1}YQ^{-1}.
\]
Thus $D(\theta S^{-1/2})[X]=-\mathcal A_\theta X$. This operator acts
on the full density space, including entries outside the source support.

\begin{lemma}[The linearized KKT system]\label{lem:coupledresponse}
The actual derivatives of the optimizer satisfy
\begin{equation}\label{eq:coupledresponse}
\begin{aligned}
 \mathcal A_\theta X+\iota\mathcal L(U)+\dot\lambda I_D
     &=\iota(B_h+V_h),\\
 \mathcal J_Z(U)&=\mathcal L(X_0)-N_h,\\
 \Tr X&=0.
\end{aligned}
\end{equation}
This system has a unique solution $(X,U,\dot\lambda)$.
\end{lemma}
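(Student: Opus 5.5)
The plan is to obtain \eqref{eq:coupledresponse} by differentiating the two equations that characterize the optimizer along $x(t)=x+th$. Existence then comes for free, because Lemma~\ref{lem:transport} makes $S(t)$, $\lambda(t)$ and the transport $Z(t)$ smooth on the open face. Uniqueness will follow from a kernel argument of the same type as the one in that lemma.

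\emph{Step 1: the transport equation.} On $\mathcal K$ the transport satisfies $ZMZ=S_0$, which I rewrite as $W S_0 W=\Omega_{x}(S_0)$ with $W=Z^{-1}$. Differentiating at $t=0$ and using $\dot W=-W_0UW_0$ gives
\[
 -W_0UW_0S_0W_0-W_0S_0W_0UW_0+W_0X_0W_0=\Omega_x(X_0)+N_h .
\]
The term $N_h$ comes from differentiating the weights $c_i(x_i)$. Next I use $S_0=ZMZ$, so that $S_0W_0=ZM$ and $W_0S_0=MZ$. This reduces the two transport terms to $-MUW_0-W_0UM=-\mathcal J_Z(U)$. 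Hence $\mathcal J_Z(U)=\mathcal L(X_0)-N_h$, which is the second line.

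\emph{Step 2: the stationarity equation \eqref{eq:9}.} I differentiate each term of \eqref{eq:9} in turn:
\begin{itemize}[leftmargin=*]
\item the center $J\widehat H(x)$ contributes $\iota B_h$;
\item the term $\widetilde{Z^{-1}}$ contributes $-\iota(W_0UW_0)$;
\item the source term $\Omega_{x(t)}(Z(t))$ contributes $\iota(\Omega_x(U)+V_h)$, by the product rule in the weights and in the argument;
\item the regularizer contributes $-\mathcal A_\theta X$, using $D(\theta S^{-1/2})[X]=-\mathcal A_\theta X$;
\item the right side contributes $\dot\lambda I_D$.
\end{itemize}
Rearranging gives $\mathcal A_\theta X+\iota\mathcal L(U)+\dot\lambda I_D=\iota(B_h+V_h)$. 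Finally, $\Tr X=0$ follows from $\Tr S(t)\equiv1$.

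\emph{Step 3: uniqueness.} The system is square and finite-dimensional, so it suffices to show that the homogeneous system has only the trivial solution. I would proceed in four sub-steps.
\begin{itemize}[leftmargin=*]
\item \emph{$\mathcal J_Z$ is injective.} Since $\mathcal J_Z(V)=W_0(VMZ+ZMV)W_0$, set $T=M^{1/2}ZM^{1/2}\succ0$ and $V'=M^{1/2}VM^{1/2}$. Then $M^{1/2}(VMZ+ZMV)M^{1/2}=V'T+TV'$. This is a Lyapunov operator with positive definite $T$, hence invertible.
\item \emph{$\mathcal J_Z$ is strictly positive.} The key identity is
\[
 \langle U,\mathcal J_Z U\rangle=2\Tr(MUW_0U)=2\Tr(S_0W_0UW_0UW_0),
\]
which is the second derivative of the strictly convex map $Z\mapsto\Tr(S_0Z^{-1})$. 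It is nonnegative because it equals $2\|S_0^{1/2}W_0UW_0^{1/2}\|_{\rm HS}^2$, and it is strictly positive for $U\neq0$.
\item \emph{$\mathcal L$ is self-adjoint.} This holds because $\Omega_x$ is self-adjoint, as shown in the proof of Lemma~\ref{lem:transport}.
\item \emph{The kernel is trivial.} For a kernel vector $(X,U,b)$, pair the first equation with $X$. The term $bI_D$ drops out because $\Tr X=0$, and the adjointness of $\pi$ and $\iota$ turns the middle term into a pairing on $\mathcal K$:
\[
 0=\mathcal A_\theta[X,X]+\langle \mathcal L(U),X_0\rangle
 =\mathcal A_\theta[X,X]+\langle U,\mathcal L(X_0)\rangle
 =\mathcal A_\theta[X,X]+\langle U,\mathcal J_Z U\rangle .
\]
Both terms on the right are nonnegative, and $\mathcal A_\theta$ is strictly positive by \eqref{eq:10}. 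Hence $X=0$. Then $\mathcal J_Z U=0$ gives $U=0$, and finally $bI_D=0$ gives $b=0$.
\end{itemize}

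The main obstacle I expect is Step 3, specifically the positivity of $\langle U,\mathcal J_Z U\rangle$. Written as $2\Re\Tr(W_0UMU)$, this quantity is not visibly nonnegative, because $W_0$ and $M$ need not commute. The plan is to use $M=W_0S_0W_0$ to rewrite it as the Hilbert--Schmidt square above. This mirrors the saddle structure: the Hessian is concave in the density and convex in the transport. Sign bookkeeping for the compressions $\pi,\iota$ is the only other delicate point.
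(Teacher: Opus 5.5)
Your proposal is correct and follows essentially the paper's proof. You differentiate the stationarity and transport equations; the paper substitutes $W_0S_0W_0=M$ directly where you pass through $S_0W_0=ZM$. You then get uniqueness from self-adjointness of $\mathcal L$, strict positivity of $\mathcal J_Z$, and strict positivity of $\mathcal A_\theta$, and your kernel pairing $\mathcal A_\theta[X,X]+\langle U,\mathcal J_ZU\rangle=0$ is the same energy identity the paper expresses through the eliminated operator $\mathcal C_\theta$.

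The obstacle you anticipated is not real. Since $UMU\succeq0$, the form $\langle U,\mathcal J_ZU\rangle=2\Tr(MUW_0U)=2\|M^{1/2}UW_0^{1/2}\|_{\rm F}^2$ is already visibly a square. The detour through $M=W_0S_0W_0$ and the separate Lyapunov injectivity check are therefore unnecessary.
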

\begin{proof}
The nonlinear equations being differentiated are
\begin{equation}\label{eq:nonlinearresponse}
\begin{aligned}
 J\widehat H(x)+\iota W_0+\iota\Omega_x(Z)
                        +\theta S^{-1/2}&=\lambda I_D,\\
 W_0S_0W_0&=\Omega_x(S_0),\\
 \Tr S&=1.
\end{aligned}
\end{equation}
The first is full density stationarity. The second is the transport
equation $Z\Omega_x(S_0)Z=S_0$, written in inverse coordinates.
At the positive full density and supported transport, these
stationarity and normalization equations are the KKT conditions of the
regularized saddle-point problem. Differentiating them on the fixed
open live face yields the linearized KKT system
\eqref{eq:coupledresponse}.

Since $\dot W(0)=-W_0UW_0$, differentiation of the first equation gives
\[
 \iota B_h-\iota(W_0UW_0)+\iota V_h+\iota\Omega_x(U)
               -\mathcal A_\theta X=\dot\lambda I_D.
\]
Grouping its transport terms gives the first line of
\eqref{eq:coupledresponse}. Both terms in
$\left.\frac{d}{dt}\Omega_{x(t)}(Z(t))\right|_{t=0}
=\dot\Omega(Z)+\Omega_x(U)$ are
needed: one changes the coefficient weights, and the other changes
the transport.

For the second nonlinear equation, the product rule gives
\[
 -W_0UW_0S_0W_0+W_0X_0W_0-W_0S_0W_0UW_0
                   =N_h+\Omega_x(X_0).
\]
Using $W_0S_0W_0=M$ leaves
$W_0X_0W_0-\Omega_x(X_0)=W_0UM+MUW_0+N_h$.
This is the second line of \eqref{eq:coupledresponse}. The last line
is differentiated normalization.

To prove uniqueness, first observe that $\mathcal L$ is self-adjoint
and $\mathcal J_Z$ is positive definite on the source space:
\[
 \langle U,\mathcal J_ZU\rangle
 =2\Tr(MUW_0U)
 =2\norm{M^{1/2}UW_0^{1/2}}_{\mathrm F}^2>0\quad(U\ne0).
\]
Eliminating $U$ as below leaves a positive definite operator on the
full trace-zero density tangent. It determines $X$ uniquely; the
second line then determines $U$, and the trace of the first determines
$\dot\lambda$.
\end{proof}

\subsection{Eliminating the transport and identifying the response}

Let $\mathcal T=\{X=X^*: \Tr X=0\}$ and let
$\mathcal P_0(Y)=Y-D^{-1}(\Tr Y)I_D$ be the orthogonal projection
onto this full tangent space. From the second line of
\eqref{eq:coupledresponse},
\begin{equation}\label{eq:transportelimination}
 U=\mathcal J_Z^{-1}(\mathcal L\pi X-N_h).
\end{equation}
Substitute this into the first line and project by $\mathcal P_0$.
The scalar multiplier disappears, giving
\begin{equation}\label{eq:responseoperator}
\begin{aligned}
 \mathcal C_\theta
 &=\left.\mathcal P_0\bigl(\mathcal A_\theta
            +\iota\mathcal L\mathcal J_Z^{-1}\mathcal L\pi\bigr)
                                                \right|_{\mathcal T},\\
 b_h&=\mathcal P_0\iota
             (B_h+V_h+\mathcal L\mathcal J_Z^{-1}N_h),\\
 \mathcal C_\theta X&=b_h,\qquad X=\mathcal C_\theta^{-1}b_h.
\end{aligned}
\end{equation}
The inverse is taken only on $\mathcal T$. For $X\in\mathcal T$,
\begin{equation}\label{eq:responsecoercivity}
 \langle X,\mathcal C_\theta X\rangle
 =\mathcal A_\theta[X,X]
       +\|\mathcal L\pi X\|_{\mathcal J_Z^{-1}}^2>0\quad(X\ne0).
\end{equation}
The regularizer therefore makes the full response operator invertible.
The numerical size of $\mathcal J_Z^{-1}$, however, may reflect small
source eigenvalues. Section~\ref{sec:sm:curvature} changes coordinates
and bounds the needed response quadratic form without estimating that
inverse separately.

For every candidate density variation $X\in\mathcal T$, write
$X_0=\pi X$. Inside the maximization below this compression varies
with $X$ over the full trace-zero tangent.

\begin{lemma}[Joint Hessian formula]\label{lem:hessian}
The actual optimized coefficient Hessian has the response formula
\begin{equation}\label{eq:optimizedresponse}
 \mathscr F''(x)[h,h]
 =\sum_i c_i''(x_i)p_iq_i h_i^2-\|N_h\|_{\mathcal J_Z^{-1}}^2
                   +\langle b_h,\mathcal C_\theta^{-1}b_h\rangle.
\end{equation}
Equivalently,
\begin{equation}\label{eq:17}
\begin{aligned}
 \mathscr F''(x)[h,h]
 &= \sum_i c_i''(x_i)p_iq_i h_i^2+\max_{X\in\mathcal T}\bigl\{
          -\mathcal A_\theta[X,X]\\
 &\hspace{7mm}+2\langle X_0,B_h+V_h\rangle
       -\|\mathcal L X_0-N_h\|_{\mathcal J_Z^{-1}}^2\bigr\}.
\end{aligned}
\end{equation}
The maximizer is the full optimizer derivative $X=\dot S(0)$.
The formula separates the direct negative source curvature from the
remaining response maximization.
\end{lemma}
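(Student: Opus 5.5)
The proof has three parts: an envelope-theorem first derivative, differentiation of that derivative along the optimizer path, and a conversion of the response term into a concave quadratic maximization.

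\emph{First derivative.} Lemma~\ref{lem:transport} gives a smooth positive definite optimizer $S(t)$, supported transport $Z(t)$ and multiplier $\lambda(t)$ along $x(t)=x+th$. The saddle structure behind \eqref{eq:7} gives
\[
 \mathscr F(x(t))=\Tr(J\widehat H(x(t))S(t))+\Tr(S_0(t)W(t))+\Tr(\Omega_{x(t)}(S_0(t))Z(t))+2\theta\Tr\sqrt{S(t)}.
\]
Differentiate at $t=0$. The $X$ terms cancel by stationarity \eqref{eq:9} together with $\Tr X=0$. The $U$ terms cancel by the transport equation $ZMZ=S_0$, which is the vanishing of $D_Zg_x$. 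This leaves the envelope formula
\[
 \tfrac{d}{dt}\mathscr F(x(t))=\langle S_0(t),B_{h}\rangle+\sum_i c_i'(x_i(t))h_i\,p_i(t)q_i(t),
\]
where the first term should be read as $\Tr(J\widehat H'S)$.

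\emph{Second derivative.} Differentiate this expression once more at $t=0$ and collect three contributions.
\begin{itemize}
\item Differentiating $c_i'(x_i(t))$ gives the direct term $\sum_i c_i''p_iq_ih_i^2$.
\item Differentiating $S$ in both pieces gives $\langle X_0,B_h\rangle+\langle X_0,V_h\rangle$, since $\sum_i c_i'h_i\Tr(B_iX)q_i=\langle X_0,V_h\rangle$.
\item Differentiating $Z$ gives $\sum_i c_i'h_ip_i\Tr(B_iU)=\langle U,N_h\rangle$.
\end{itemize}
Hence $\mathscr F''[h,h]=\sum_i c_i''p_iq_ih_i^2+\langle X_0,B_h+V_h\rangle+\langle U,N_h\rangle$.

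\emph{Substituting the response.} Insert \eqref{eq:transportelimination}. Because $\mathcal J_Z^{-1}$ is self-adjoint,
\[
 \langle U,N_h\rangle=\langle \mathcal L X_0,\mathcal J_Z^{-1}N_h\rangle-\|N_h\|^2_{\mathcal J_Z^{-1}}.
\]
Adjointness of $\pi$ and $\iota$ and $X\in\mathcal T$ let us insert $\mathcal P_0$ freely, so
\[
 \langle X_0,B_h+V_h\rangle+\langle \mathcal LX_0,\mathcal J_Z^{-1}N_h\rangle=\langle X,b_h\rangle.
\]
Since $X=\mathcal C_\theta^{-1}b_h$, this equals $\langle b_h,\mathcal C_\theta^{-1}b_h\rangle$, which proves \eqref{eq:optimizedresponse}.

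\emph{Variational form.} For \eqref{eq:17}, expand the bracket using \eqref{eq:responsecoercivity}:
\[
 -\langle X,\mathcal C_\theta X\rangle+2\langle X,b_h\rangle-\|N_h\|^2_{\mathcal J_Z^{-1}}.
\]
The cross term of $\|\mathcal LX_0-N_h\|^2_{\mathcal J_Z^{-1}}$ combines with $2\langle X_0,B_h+V_h\rangle$ to give $2\langle X,b_h\rangle$, again using self-adjointness of $\mathcal L$. This is a strictly concave quadratic on $\mathcal T$ by \eqref{eq:responsecoercivity}. Its unique maximizer is $\mathcal C_\theta^{-1}b_h$, which equals $\dot S(0)$ by Lemma~\ref{lem:coupledresponse}, and its maximum value is $\langle b_h,\mathcal C_\theta^{-1}b_h\rangle-\|N_h\|^2_{\mathcal J_Z^{-1}}$. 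This matches \eqref{eq:optimizedresponse}.

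The main obstacle is bookkeeping. The support compression must be applied consistently, with the full density against the supported transport, and the sign of the $\langle U,N_h\rangle$ cross term must be tracked correctly. Both are handled by always pairing via $\pi$ and $\iota$.
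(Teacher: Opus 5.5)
Your proposal is correct and follows essentially the same route as the paper. It computes the envelope first derivative $\Tr(S_0B_h)+\Tr(\dot\Omega(S_0)Z)$ and differentiates it once more to get the three-term expression with $\langle X_0,B_h+V_h\rangle+\langle U,N_h\rangle$; it then substitutes $U=\mathcal J_Z^{-1}(\mathcal L X_0-N_h)$, using self-adjointness of $\mathcal J_Z^{-1}$ and $\mathcal L$, to reach $\langle b_h,\mathcal C_\theta^{-1}b_h\rangle-\|N_h\|_{\mathcal J_Z^{-1}}^2$, and finally expands the square to identify the concave maximization, whose unique maximizer is $\mathcal C_\theta^{-1}b_h=\dot S(0)$.
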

\begin{proof}
Introduce the joint objective
\[
 \mathcal G(x,S,Z)=\Tr((J\widehat H(x))S)
 +\Tr(S_0Z^{-1})+\Tr(\Omega_x(S_0)Z)+2\theta\Tr\sqrt S.
\]
Along its actual optimizers, its first derivative through the density
is $\lambda\Tr\dot S=0$, and its first derivative through the
transport is zero. Thus the first envelope derivative is
\[
 \mathscr F'(x)[h]=\Tr(S_0B_h)+\Tr(\dot\Omega(S_0)Z).
\]
Differentiate this expression once more. The coefficient weights obey
$\ddot c_i=c_i''(x_i)h_i^2$, so
\begin{equation}\label{eq:second-envelope}
 \mathscr F''(x)[h,h]
 =\sum_i c_i''(x_i)p_iq_i h_i^2
             +\langle X_0,B_h+V_h\rangle+\langle U,N_h\rangle.
\end{equation}
Substituting \eqref{eq:transportelimination} and using self-adjointness
of $\mathcal L$ combines the last two terms into
$\langle X,b_h\rangle-\|N_h\|_{\mathcal J_Z^{-1}}^2$.
Equation~\eqref{eq:responseoperator} proves
\eqref{eq:optimizedresponse}.

For the equivalent quadratic form, positivity of $\mathcal C_\theta$
gives
\[
 \langle b_h,\mathcal C_\theta^{-1}b_h\rangle
 =\max_{X\in\mathcal T}
       \{2\langle b_h,X\rangle-\langle X,\mathcal C_\theta X\rangle\}.
\]
Indeed, writing $X_*=\mathcal C_\theta^{-1}b_h$, the expression
being maximized equals
\[
 \langle b_h,\mathcal C_\theta^{-1}b_h\rangle
 -\langle X-X_*,\mathcal C_\theta(X-X_*)\rangle.
\]
Positive definiteness on $\mathcal T$ proves both attainment and
uniqueness of this maximum.
Insert \eqref{eq:responseoperator}--\eqref{eq:responsecoercivity} and
complete the square in $\mathcal L X_0$ to obtain \eqref{eq:17}.
In particular the source change produces a shifted negative square;
its shift is the same $N_h$ that appeared in the differentiated
transport equation.

One can also read this from the joint second variation of
$\mathcal G$:
\begin{equation}\label{eq:18}
\begin{aligned}
 &\sum_i c_i''(x_i)p_iq_i h_i^2-\mathcal A_\theta[X,X]
       +2\langle X_0,B_h+V_h\rangle\\
 &\quad+\langle U,\mathcal J_ZU\rangle
       -2\langle U,\mathcal L X_0\rangle+2\langle U,N_h\rangle.
\end{aligned}
\end{equation}
Minimizing in $U$ gives \eqref{eq:transportelimination}; maximizing on
the full trace tangent gives \eqref{eq:responseoperator}. This is the
quadratic form of the same coupled linear system.
\end{proof}

The quadratic maximization in \eqref{eq:17} is invariant under
$X\mapsto JXJ$ and is concave in $X$. Averaging a variation with its
conjugate does not decrease its value, so the maximizer can be taken
block diagonal. We use this symmetry in the balanced calculation,
while retaining the full regularizer and the full trace constraint
until the explicit relaxation in \eqref{eq:sm:response}.

\section{A general curvature estimate for concave scalar weights}
\label{sec:sm:curvature}

The Hessian formula contains a favorable source-curvature term and an
optimized response that could, in principle, overwhelm it. We show
that the failed outward tests prevent this. The proof has two steps:
balance the transport equation to reveal a positive fixed point, then
average over pairs of directions satisfying its linear response
constraint. The second step preserves enough negative curvature
without a bound on an inverse spectral gap. We also impose one prescribed
linear condition on the coefficient direction. This will make the
physical movement orthogonal to the current coefficient vector.

We first treat general concave scalar source weights. At the state
under consideration, let the positive live weights be given by smooth
scalar functions $c_i(x_i)$ satisfying
\[
 c_i''(x_i)\le-2u.
\]
Assume at least one nonzero live label remains, and let $k\ge1$ be
the number of such labels. All coefficient vectors
in this section lie in $\R^k$, and coefficient-indexed matrices are
$k\times k$ matrices on these labels. Directions are extended by zero
to the original $n$ coordinates. Zero atoms and frozen labels are held
fixed. As in the preceding section, $c_i$ without an argument denotes
the value $c_i(x_i)$, and $p_i=\Tr(SB_i)$, $q_i=\Tr(ZB_i)$.
The full density potential is still \eqref{eq:potential}. Define
\begin{equation}\label{eq:sm:slopes}
 \alpha_i=-c_i'(x_i),\qquad z_i=\alpha_iq_i,\qquad
 Z_c=\operatorname{diag}(z_i),
\end{equation}
and, for $y\in\R^k$, normalize a coefficient direction by
\begin{equation}\label{eq:sm:hessian}
 D_d=\operatorname{diag}(c_i/u),\qquad h=D_d^{1/2}y,\qquad
 K=\tfrac12D_d^{1/2}\nabla^2\mathscr F(x)D_d^{1/2}.
\end{equation}

\subsection{Balanced variables and the joint density response}

Let $S_0$ be the compression of the optimizing density to
$\mathcal K$. In this subsection all supported matrix variables
commute with the restriction of $J$ to $\mathcal K$.
Congruence by $Z^{-1/2}$ gives the density and source the same
normalization. In these coordinates the transport equation becomes
a fixed-point equation, and the source variation becomes a linear
combination of balanced atoms. Define
\begin{equation}\label{eq:sm:balanced}
\begin{gathered}
 P=Z^{-1/2}S_0Z^{-1/2},\qquad D_i=\sqrt{c_i}\,Z^{1/2}B_iZ^{1/2},\\
 r_i=\Tr D_i=\sqrt{c_i}q_i,\qquad
 \mu_i=\Tr(PD_i)=\sqrt{c_i}p_i.
\end{gathered}
\end{equation}
Both $P$ and $D_i$ act on $\mathcal K$, and $r_i,\mu_i>0$ on the
nonzero live labels. Write $\boldsymbol\mu=(\mu_i)_i\in\R^k$.
Next define the coefficient-to-matrix map and its adjoint by
\[
 \mathcal D y=\sum_i y_iD_i,\qquad
 (\mathcal D^*Y)_i=\Tr(D_iY).
\]
The adjoint uses the Euclidean and Hilbert--Schmidt inner products.
Multiplication $Y\mapsto JY$ preserves the supported block-diagonal
Hermitian space. On this space and on $\R^k$, respectively, set
\[
 \Phi=\mathcal D\mathcal D^*,\qquad
 T=\mathcal D^*\mathcal D,\qquad T_J=\mathcal D^*J\mathcal D.
\]
Thus $\Phi$ is a linear map on matrices, while $T,T_J$ are real
$k\times k$ matrices. In this subsection inequalities between
self-adjoint linear maps use the Hilbert--Schmidt inner product:
$\Phi\succeq0$ means $\langle Y,\Phi(Y)\rangle\ge0$ for every $Y$.
Finally define three more coefficient matrices:
\[
 \Gamma_{ij}=\Re\Tr(PD_iD_j),\qquad
 G=\diag(r_i\mu_i),\qquad R=\diag(\mu_i/r_i).
\]

\begin{lemma}[Balanced-source identities]\label{lem:balanced}
The quantities just defined satisfy
\begin{equation}\label{eq:sm:identities}
\begin{gathered}
 P=\sum_i\mu_iD_i,\quad \Phi(P)=P,\quad
 0\preceq T\preceq I,\quad T\boldsymbol\mu=\boldsymbol\mu,\\
 T_{ii}\le r_i^2,\quad |(T_J)_{ii}|\le T_{ii},\quad
 \Gamma\succeq0,\quad \Gamma_{ii}\le r_i\mu_i,\quad
 TRT\preceq\Gamma,\quad T_JRT_J\preceq\Gamma.
\end{gathered}
\end{equation}

\end{lemma}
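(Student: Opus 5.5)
Every identity in \eqref{eq:sm:identities} should follow from the transport equation $Z\Omega_x(S_0)Z=S_0$ of Lemma~\ref{lem:transport}, rewritten in balanced coordinates. We first record how the source looks after congruence. Since $\Omega_x(Y)=\sum_i c_i\Tr(B_iY)B_i$, we have
\[
 Z^{1/2}\Omega_x(S_0)Z^{1/2}=\sum_i c_i\Tr(B_iS_0)\,Z^{1/2}B_iZ^{1/2}=\sum_i\mu_iD_i ,
\]
because $\sqrt{c_i}\Tr(B_iS_0)=\sqrt{c_i}p_i=\mu_i$. Conjugating $Z\Omega_x(S_0)Z=S_0$ by $Z^{-1/2}$ turns its left side into $Z^{1/2}\Omega_x(S_0)Z^{1/2}$ and its right side into $P$. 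This gives $P=\sum_i\mu_iD_i$. Also $\mu_i=\Tr(PD_i)$, so $\mathcal D^*P=\boldsymbol\mu$, and therefore $\Phi(P)=\mathcal D\boldsymbol\mu=P$.

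\emph{Bounds on $T$.} The relation $T\boldsymbol\mu=\mathcal D^*\mathcal D\mathcal D^*P=\mathcal D^*P=\boldsymbol\mu$ follows at once. For $T\preceq I$ it suffices to show $\Phi\preceq\mathrm{id}$, since $T$ and $\Phi$ share their nonzero spectrum. To prove $\Phi\preceq\mathrm{id}$ we write each $D_i=\sqrt{c_i}\,w_iw_i^*\oplus\sqrt{c_i}\,w_i'w_i'^*$ as a sum of rank-one pieces, using $B_i=I_2\otimes A_i$. Then $\langle Y,\Phi Y\rangle=\sum_i|\Tr(D_iY)|^2$, and Cauchy--Schwarz with weights $\mu_i$ and the positive fixed point $P$ should give the bound. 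The key point is that $\Phi$ is a positive map in the Perron--Frobenius sense, since the $D_i$ are positive semidefinite, with positive definite eigenvector $P$ of eigenvalue $1$.

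\emph{The spectral bound.} Concretely, I would bound $|\Tr(D_iY)|^2\le \Tr(D_iP)\,\Tr(D_iYP^{-1}Y)$, which is Cauchy--Schwarz for the positive form $D_i$. Summing over $i$ gives
\[
 \langle Y,\Phi Y\rangle\le\Bigl\langle \textstyle\sum_i\mu_iD_i,\;YP^{-1}Y\Bigr\rangle=\Tr(YP^{-1}YP)\le\|Y\|_{\rm HS}^2 .
\]
The last inequality is exact only for commuting $Y$ and $P$. For general $Y$ one should instead split $\Tr(D_iY)$ symmetrically through $P^{1/2}$: bound $|\Tr(D_iY)|^2$ by $\Tr(P^{1/2}D_iP^{1/2})\,\Tr(D_i^{1/2}YP^{-1}\ldots)$ in a symmetrized form, and then use the fixed-point identity. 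I expect this to be the \textbf{main obstacle}. If the direct route fails, I would fall back on a Perron--Frobenius argument for completely positive maps: $\Phi$ is self-adjoint and positivity-preserving with a positive definite eigenvector $P$ of eigenvalue $1$, so its spectral radius is $1$. Combined with $\Phi\succeq0$, this gives $\Phi\preceq\mathrm{id}$.

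\emph{Diagonal entries and the Gram matrix $\Gamma$.} The diagonal bounds are direct: $T_{ii}=\Tr D_i^2\le(\Tr D_i)^2=r_i^2$ since $D_i\succeq0$. Writing $D_i=D_i^+\oplus D_i^-$ in the $J$-blocks, we get $|(T_J)_{ii}|=|\Tr (D_i^+)^2-\Tr(D_i^-)^2|\le T_{ii}$. Next, $\Gamma$ is a Gram matrix: $\Gamma_{ij}=\Re\langle D_iP^{1/2},D_jP^{1/2}\rangle_{\rm HS}$, so $\Gamma\succeq0$. For its diagonal, $\Gamma_{ii}=\Tr(PD_i^2)\le\|D_i\|\Tr(PD_i)\le r_i\mu_i$.

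\emph{The comparisons $TRT\preceq\Gamma$ and $T_JRT_J\preceq\Gamma$.} For $y\in\R^k$ put $Y=\mathcal Dy$ or $Y=J\mathcal Dy$; note $J$ commutes with $P$ and $J^2=I$. We have $(Ty)_i=\Tr(D_iY)$, and hence
\[
 y^TTRTy=\sum_i\frac{\mu_i}{r_i}\Tr(D_iY)^2 .
\]
Cauchy--Schwarz in the $D_i$ form gives $\Tr(D_iY)^2\le\Tr(D_i)\Tr(D_iY^2)=r_i\Tr(D_iY^2)$ for Hermitian $Y$. Summing,
\[
 y^TTRTy\le\sum_i\mu_i\Tr(D_iY^2)=\Tr(PY^2)=y^T\Gamma y,
\]
using $P=\sum_i\mu_iD_i$. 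The $J$ version is identical, since $Y^2=(\mathcal Dy)^2$ when $J$ commutes with everything.
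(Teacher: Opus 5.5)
Apart from the step $T\preceq I$, your argument is the paper's. Congruence of the transport equation gives $P=\sum_i\mu_iD_i$ and $\Phi(P)=P$. The diagonal bounds come from $D_i^2\preceq r_iD_i$, and $\Gamma$ is a Gram matrix. Both comparisons follow by weighting $|\Tr(D_iY)|^2\le r_i\Tr(D_iY^2)$ by $\mu_i/r_i$ and summing, with $Y=\mathcal Dy$ or $Y=J\mathcal Dy$.

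For $T\preceq I$, discard your first route. Its last inequality is not merely inexact for noncommuting $Y$; it goes the wrong way. In an eigenbasis of $P$ with eigenvalues $p_a$,
\[
 \Tr(YP^{-1}YP)=\tfrac12\sum_{a,b}\Bigl(\frac{p_a}{p_b}+\frac{p_b}{p_a}\Bigr)|Y_{ab}|^2\ge\|Y\|_{\rm HS}^2 ,
\]
so that chain can never give $\langle Y,\Phi Y\rangle\le\|Y\|_{\rm HS}^2$. Your Perron--Frobenius fallback is exactly the paper's route. However, the claim that the spectral radius is one needs its short proof, which the paper supplies. If $-tP\preceq Y\preceq tP$, then positivity of $\Phi$ and $\Phi(P)=P$ give $-tP\preceq\Phi(Y)\preceq tP$. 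So $\Phi$ is a contraction in the order-unit norm $\|Y\|_P=\inf\{t\ge0:-tP\preceq Y\preceq tP\}$, which is a norm because $P\succ0$ on the source support. Hence its spectral radius is at most one. Since $\Phi$ is Hilbert--Schmidt self-adjoint and $\langle Y,\Phi Y\rangle=\sum_i\Tr(D_iY)^2\ge0$, this yields $0\preceq\Phi\preceq\mathrm{id}$. The shared nonzero spectrum of $\mathcal D\mathcal D^*$ and $\mathcal D^*\mathcal D$ then transfers the bound to $T$.
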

\begin{proof}
Each balanced atom $D_i$ has rank two. The first identities
follow by cyclicity and the transport equation:
\[
 P=Z^{1/2}\Omega_x(S_0)Z^{1/2}
   =\sum_i\sqrt{c_i}p_iD_i.
\]
Thus \(\Phi(P)=P\). The map \(\Phi(Y)=\sum_i\operatorname{Tr}(D_iY)D_i\)
is positivity preserving and self-adjoint positive semidefinite on
Hermitian matrix space. Its faithful fixed point makes it a contraction
in the norm $\|Y\|_P=\inf\{t\ge0:-tP\preceq Y\preceq tP\}$: if \(-tP\preceq Y\preceq tP\),
the same inequalities hold for \(\Phi(Y)\). Hence its spectral radius
is one. For a self-adjoint operator, the spectral radius equals the operator
norm; its Hilbert--Schmidt positivity therefore gives
\(0\preceq\Phi\preceq I\). The product-spectrum identity then gives
\(0\preceq T\preceq I\).

For that last identity, if $T v=\lambda v$ with $\lambda>0$, then
$\mathcal Dv\ne0$ and $\Phi(\mathcal Dv)=\lambda\mathcal Dv$.
Conversely a positive-eigenvalue vector for $\Phi$ maps under
$\mathcal D^*$ to one for $T$. The remaining eigenvalues are zero;
positivity of $T=\mathcal D^*\mathcal D$ follows directly from its
Gram form. Applying \(\mathcal D^*\) to the fixed point
gives \(T\boldsymbol\mu=\boldsymbol\mu\).

For real $v$, $v^T\Gamma v=\Tr(P(\mathcal Dv)^2)\ge0$, so
$\Gamma\succeq0$. This identity also explains the weighted matrix
square that will bound the density response.
The PSD inequality \(D_i^2\preceq r_iD_i\) gives the diagonal bound for
\(\Gamma\) and \(T_{ii}\le r_i^2\).
For every block-diagonal Hermitian \(Y\), Cauchy--Schwarz gives
\[
 |\operatorname{Tr}(D_iY)|^2
 \le r_i\operatorname{Tr}(D_iY^2).
\]
Indeed, apply Hilbert--Schmidt Cauchy--Schwarz to $D_i^{1/2}$ and
$YD_i^{1/2}$; their squared norms are $r_i$ and
$\Tr(D_iY^2)$, respectively.
Multiplying by \(\mu_i/r_i\) and summing proves
\begin{equation}\label{eq:22}
 \sum_i\frac{\mu_i}{r_i}|\operatorname{Tr}(D_iY)|^2
 \le\operatorname{Tr}(PY^2).
\end{equation}
Take \(Y=\mathcal Dv\) to obtain \(TRT\preceq\Gamma\).
Take \(Y=J\mathcal Dv\), which is Hermitian and satisfies
\(Y^2=(\mathcal Dv)^2\), to obtain \(T_JRT_J\preceq\Gamma\).
Finally \(|\operatorname{Tr}(D_iJD_i)|\le\operatorname{Tr}D_i^2\).
This establishes all the balanced-source inequalities.
\end{proof}

In the joint density/transport response formula, the direct second
derivative of the coefficient weights now contributes
\begin{equation}\label{eq:sm:direct}
 \frac12\sum_i c_i''(x_i)p_iq_i h_i^2
 \le-u\sum_i p_iq_i h_i^2=-y^TGy.
\end{equation}
The additional concavity in the smooth source improves this upper bound.
All first-derivative terms retain the same form. In detail, define
\begin{equation}\label{eq:sm:derivatives}
\begin{gathered}
 A_y=\frac{J\mathcal Dy-\mathcal DZ_cy}{\sqrt u},\qquad
 C_{ii}=\frac{\alpha_i p_i}{2\sqrt u},\qquad
 C=\frac1{2\sqrt u}RZ_c,\\
 Z^{1/2}\dot\Omega(S_0)Z^{1/2}=-2\mathcal DCy.
\end{gathered}
\end{equation}
For the last equality substitute $\dot c_i=-\alpha_i h_i$ and
$Z^{1/2}B_iZ^{1/2}=D_i/\sqrt{c_i}$. Thus the relation between the
source variation and the density force survives for arbitrary scalar
weights.

Write $\Lambda=I-\Phi$ and $\mathcal J_P(U)=PU+UP$.
To rewrite the response quadratic in these coordinates, for the
full density variation $X$, put $Y=Z^{-1/2}X_0Z^{-1/2}$. The definitions
of $\mathcal L$ and $\mathcal J_Z$ give, by direct congruence,
\[
 \mathcal L(X_0)=Z^{-1/2}\Lambda(Y)Z^{-1/2},\qquad
 \mathcal J_Z(Z^{1/2}UZ^{1/2})
       =Z^{-1/2}\mathcal J_P(U)Z^{-1/2}.
\]
Also $\langle X_0,B_h+V_h\rangle=\langle Y,A_y\rangle$,
by cyclicity of trace and \eqref{eq:sm:derivatives}. Thus the shifted
response norm becomes
$\|\Lambda Y+2\mathcal DCy\|_{\mathcal J_P^{-1}}^2$.

The joint response formula \eqref{eq:17}, \eqref{eq:sm:direct}, and dropping the
nonnegative full-density regularizer quadratic form give
\begin{equation}\label{eq:sm:response}
 y^TKy\le-y^TGy+
 \frac12\sup_Y\left\{2\langle A_y,Y\rangle
       -\|\Lambda Y+2\mathcal DCy\|_{\mathcal J_P^{-1}}^2\right\}.
\end{equation}
The supremum is over all block-diagonal Hermitian matrices on the source
support, an enlargement of the actual compressed density tangent space.
The inverse is a positive Sylvester inverse on this fixed support.
The remaining operator $\Lambda=I-\Phi$ has a kernel because
$\Phi(P)=P$. Estimating the response by $\Lambda^{-1}$ would therefore
lose its essential structure. Instead, we restrict coefficient
directions and an auxiliary vector together so that the linear
functional can be represented inside the range of $\Lambda$.

A pair $(\omega,y)$ satisfying the next equation will be called
\emph{legal}. The vector $y$ specifies a coefficient direction;
$\omega$ represents its compatible response. Legal pairs are an
analytic device for proving that a low-curvature direction exists.
The implemented algorithm obtains a direction from the Hessian.

\begin{lemma}[Legal response bound]\label{lem:sm:legal}
Every coefficient pair satisfying
\begin{equation}\label{eq:sm:legal}
 (I-T)\omega=(T_J-TZ_c)y
\end{equation}
obeys
\begin{equation}\label{eq:sm:legalbound}
\begin{split}
 y^TKy\le{}&-y^TGy-\frac1u\omega^TRZ_cy\\
 &+\frac1u\operatorname{Tr}
 P\bigl(J\mathcal Dy-\mathcal DZ_cy+\mathcal D\omega\bigr)^2.
\end{split}
\end{equation}
\end{lemma}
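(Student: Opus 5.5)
The plan is to bound the supremum in \eqref{eq:sm:response} by rewriting the linear functional $\langle A_y,Y\rangle$ through the range of $\Lambda$. A legal pair is exactly what makes this possible. Given a legal pair $(\omega,y)$, put
\[
 \Xi=\frac1{\sqrt u}\bigl(J\mathcal Dy-\mathcal DZ_cy+\mathcal D\omega\bigr).
\]
This is a block-diagonal Hermitian matrix on $\mathcal K$, since $J$ commutes with every balanced atom. The first step is to check that $\Lambda\Xi=A_y$. Because $\Phi=\mathcal D\mathcal D^*$, we have
\[
 \Phi(J\mathcal Dy)=\mathcal DT_Jy,\qquad \Phi(\mathcal DZ_cy)=\mathcal DTZ_cy,\qquad \Phi(\mathcal D\omega)=\mathcal DT\omega.
\]
Hence
\[
 \sqrt u\,\Lambda\Xi=(J\mathcal Dy-\mathcal DZ_cy)+\mathcal D\bigl[(I-T)\omega-(T_J-TZ_c)y\bigr].
\]
The bracket vanishes by \eqref{eq:sm:legal}, so $\Lambda\Xi=A_y$.

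The second step uses self-adjointness of $\Lambda$ (since $\Phi$ is self-adjoint) to write, for any admissible $Y$,
\[
 2\langle A_y,Y\rangle=2\langle\Xi,\Lambda Y+2\mathcal DCy\rangle-4\langle\Xi,\mathcal DCy\rangle .
\]
Set $V=\Lambda Y+2\mathcal DCy$. Then the quantity inside the supremum is at most
\[
 \sup_V\bigl\{2\langle\Xi,V\rangle-\|V\|_{\mathcal J_P^{-1}}^2\bigr\}-4\langle\Xi,\mathcal DCy\rangle .
\]
Here $\mathcal J_P$ is positive definite on the supported block-diagonal space because $P\succ0$ there, by the argument given for $\mathcal J_Z$ in Lemma~\ref{lem:coupledresponse}. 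Completing the square therefore gives
\[
 \sup_V\bigl\{2\langle\Xi,V\rangle-\|V\|_{\mathcal J_P^{-1}}^2\bigr\}=\|\Xi\|_{\mathcal J_P}^2=\langle\Xi,P\Xi+\Xi P\rangle=2\Tr(P\Xi^2).
\]
Halving, as in \eqref{eq:sm:response}, yields
\[
 \tfrac12\sup_Y\{\cdots\}\le\Tr(P\Xi^2)-2\langle\Xi,\mathcal DCy\rangle .
\]
The first term is exactly the trace term in \eqref{eq:sm:legalbound}.

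The third step evaluates the cross term, again using legality. We have
\[
 \langle\Xi,\mathcal DCy\rangle=\frac1{\sqrt u}\bigl(\mathcal D^*(J\mathcal Dy-\mathcal DZ_cy+\mathcal D\omega)\bigr)^TCy=\frac1{\sqrt u}\bigl(T_Jy-TZ_cy+T\omega\bigr)^TCy .
\]
By \eqref{eq:sm:legal}, $T_Jy-TZ_cy=(I-T)\omega$, so the vector in parentheses collapses to $\omega$. Since $C=RZ_c/(2\sqrt u)$ is diagonal, this gives $-2\langle\Xi,\mathcal DCy\rangle=-\tfrac1u\,\omega^TRZ_cy$. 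Combining this with the term $-y^TGy$ already present in \eqref{eq:sm:response} gives \eqref{eq:sm:legalbound}.

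The main obstacle is the first step: finding the representative $\Xi$ of $A_y$ in the range of $\Lambda$. Since $\Lambda$ has the kernel direction $P$, one cannot simply invert it. The legal equation is exactly the solvability condition that makes the ansatz $\Xi\propto A_y+\mathcal D\omega$ work. The remaining points are routine bookkeeping: confirming that all quantities are real symmetric, so the real trace pairings agree, and that the enlarged supremum space contains $\Xi$ and is preserved by $\Lambda$ and $\mathcal J_P$.
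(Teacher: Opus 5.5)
Your proof is correct and is essentially the paper's argument. Your $\Xi$ is exactly the paper's $U=(J\mathcal Dy+\mathcal D\xi)/\sqrt u$ with $\xi=\omega-Z_cy$; like the paper, you verify $\Lambda\Xi=A_y$, enlarge the supremum over all block-diagonal Hermitian matrices, complete the square in the $\mathcal J_P^{-1}$ norm to obtain $2\Tr(P\Xi^2)$, and use legality to collapse $\mathcal D^*\Xi$ to $\omega/\sqrt u$ in the cross term. The only cosmetic difference is that you absorb the shift $2\mathcal DCy$ into the enlarged variable rather than completing the square with it afterward, which changes nothing.
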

\begin{proof}
Put $\xi=\omega-Z_cy$. Then $(I-T)\xi=(T_J-Z_c)y$, and
\[
 U=(J\mathcal Dy+\mathcal D\xi)/\sqrt u
 \quad\hbox{satisfies}\quad \Lambda U=A_y.
\]
Since $\Lambda$ is self-adjoint and $\Lambda U=A_y$, the linear
term satisfies $\langle A_y,Y\rangle=\langle U,\Lambda Y\rangle$.
Set $b=2\mathcal DCy$ and temporarily allow $W=\Lambda Y$ to range
over all block-diagonal Hermitian matrices. This enlarges the domain,
so it gives an upper bound even if $\Lambda$ has a kernel. Direct
completion of the square gives
\[
\begin{split}
 2\langle U,W\rangle-\|W+b\|_{\mathcal J_P^{-1}}^2
 ={}&\langle U,\mathcal J_PU\rangle-2\langle U,b\rangle\\
 &-\|W+b-\mathcal J_PU\|_{\mathcal J_P^{-1}}^2.
\end{split}
\]
The enlarged maximum is attained at $W=\mathcal J_PU-b$.
Since $\langle U,\mathcal J_PU\rangle=2\Tr(PU^2)$, the factor
$1/2$ in \eqref{eq:sm:response} yields
\[
 y^TKy\le-y^TGy+\Tr(PU^2)-2\langle U,\mathcal DCy\rangle.
\]
Legality gives $T_Jy+T\xi=Z_cy+\xi=\omega$. By
$C=RZ_c/(2\sqrt u)$, the mixed term is exactly
$-\omega^TRZ_cy/u$. Substituting $\xi=\omega-Z_cy$ proves the claim.
\end{proof}

\subsection{Averaging over compatible directions}

A bound for one legal pair is not enough: we need a pair for which
the negative term dominates. The legal equation defines a linear
subspace of the product of two coefficient spaces. Its dimension is
large, and the failed-probe bounds ensure that most of the projection's
mass remains on movement coordinates. We now quantify this observation.

\begin{theorem}[Curvature from slope and probe bounds]\label{thm:sm:general}
Suppose the live set contains a nonzero atom and, for some $L>0$,
\begin{equation}\label{eq:sm:smallness}
 |z_i|\le1,\qquad r_i^2=c_iq_i^2\le L/u
\end{equation}
on every nonzero live label. For every $b\in\R^k$, there is a centered
random pair $(\omega,y)\in\R^k\oplus\R^k$ satisfying
\eqref{eq:sm:legal} and $b^Ty=0$ with probability one, for which
\begin{equation}\label{eq:sm:generalcurvature}
 \mathbb E[y^TKy]
 \le-\left(1-\frac{6L+7+\sqrt5+2\sqrt6}{u}\right)\sum_i c_iq_i^2.
\end{equation}
In particular $K$ has a negative quadratic value on $b^\perp$ whenever
$u>6L+7+\sqrt5+2\sqrt6$.
\end{theorem}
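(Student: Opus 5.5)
The plan is to use Lemma~\ref{lem:sm:legal} as the only curvature input and to average its right side over a Gaussian supported on the space of legal pairs. Let $\mathcal V=\{(\omega,y):(I-T)\omega=(T_J-TZ_c)y,\ b^Ty=0\}\subseteq\R^k\oplus\R^k$. The legal equation imposes $k$ linear conditions on $2k$ unknowns, and $b^Ty=0$ imposes one more, so $\dim\mathcal V\ge k-1$. The bound \eqref{eq:sm:legalbound} is a quadratic form in $(\omega,y)$, so its expectation over a centered Gaussian on $\mathcal V$ is a trace against the covariance.

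\textbf{Choice of covariance.} The target $\sum_ic_iq_i^2=\sum_ir_i^2$ should come from the term $-y^TGy$. This suggests making the $y$-marginal covariance close to $R^{-1}=\diag(r_i/\mu_i)$, since then $\E[y^TGy]=\sum_i r_i\mu_i\cdot r_i/\mu_i=\sum_ir_i^2$. I would therefore take the Gaussian on $\mathcal V$ whose covariance is the orthogonal projection onto $\mathcal V$ in the metric $\|\omega\|^2_{R}+\|y\|^2_{R}$, rescaled in these coordinates. The constant $\sqrt5+2\sqrt6$ suggests that a $2\times2$ eigenvalue computation balances the two weights; I would fix that ratio only at the end.

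\textbf{Positive terms.} For $\E\Tr P(J\mathcal Dy-\mathcal DZ_cy+\mathcal D\omega)^2$, I would expand the square and control the diagonal pieces through $\Gamma$. Lemma~\ref{lem:balanced} gives $v^T\Gamma v=\Tr P(\mathcal Dv)^2$ and $\Tr P(J\mathcal Dv)^2=v^T\Gamma v$, together with $\Gamma_{ii}\le r_i\mu_i$. Against covariance $R^{-1}$ these give at most $\sum_ir_i^2$ for each of the $y$, $Z_cy$ (using $|z_i|\le1$) and $\omega$ pieces. The cross terms are handled by Cauchy--Schwarz, and the mixed term $-\omega^TRZ_cy/u$ is bounded the same way. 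Everything positive thus carries a prefactor $1/u$ times a constant multiple of $\sum r_i^2$. The hypothesis $r_i^2\le L/u$ enters through $T_{ii}\le r_i^2$, $|(T_J)_{ii}|\le T_{ii}$ and $TRT,T_JRT_J\preceq\Gamma$. These let the legal constraint's right side, which is what drives $\omega$, be absorbed as an $O(L)$ multiple of $\sum r_i^2$; this is the source of the $6L$.

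\textbf{Main obstacle.} The hard step is the lower bound on the $y$-mass: the projection onto $\mathcal V$ must leave $\E[yy^T]\succeq(1-O(1/u))R^{-1}$ in trace against $G$, and not push the mass into $\omega$. I would parametrize $\mathcal V$ by $y$ directly, writing $\omega=\omega_0(y)+\kappa$ with $\kappa\in\ker(I-T)$ and $\omega_0(y)$ the minimal-norm solution. This requires the right side $(T_J-TZ_c)y$ to lie in $\operatorname{ran}(I-T)$.

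The fixed-point identity $T\boldsymbol\mu=\boldsymbol\mu$ shows that $\ker(I-T)\ni\boldsymbol\mu$. Solvability therefore imposes constraints such as $\boldsymbol\mu^T(T_J-TZ_c)y=0$, which cut the dimension further. Together with $b^Ty=0$ this removes a few directions. The relevant total weights are $\Tr(GR^{-1}\cdot\text{rank-one})=r_i^2$-type quantities of size at most $\max r_i^2\le L/u$ each, and $\boldsymbol\mu$ itself supplies an explicit negative contribution via $\boldsymbol\mu^T\Gamma\boldsymbol\mu=\Tr P^3$-type identities. Hence losing these directions costs only $O(1/u)\sum r_i^2$.

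To avoid inverting $I-T$ on its small eigenvalues, I would not bound $\omega_0$ through $(I-T)^{-1}$. Instead I would test \eqref{eq:sm:legalbound} with a pair that is legal by construction, namely $\omega$ and $y$ built from a common free vector so that the equation holds identically. The residual covariance would then be estimated by $0\preceq T\preceq I$ alone. Collecting the constants should give \eqref{eq:sm:generalcurvature}. Since the expectation is negative once $u>6L+7+\sqrt5+2\sqrt6$, some realization $y\in b^\perp$ has $y^TKy<0$.
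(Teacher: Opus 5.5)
Your random pair is the one the paper uses: its covariance is the orthogonal projection $\Pi$ onto the constrained legal kernel in the whitened variables $w=R^{1/2}\omega$, $v=R^{1/2}y$, and the three terms of \eqref{eq:sm:legalbound} are averaged separately. The gap is in the step you flag as the main obstacle, and your proposed fixes do not close it.

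\textbf{The $y$-mass bound.} Counting solvability constraints such as $\boldsymbol\mu^T(T_J-TZ_c)y=0$ measures only hard codimension. The projection also moves $v$-mass into $w$ along directions where $I-T$ is small but nonzero. A minimal-norm parametrization controls that only through $(I-T)^+$, whose norm has no a priori bound. Building $\omega,y$ from a common free vector is just another parametrization of the same kernel, and $0\preceq T\preceq I$ says nothing about this transfer.

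The missing idea is to read the legal equation $w=Ew+Fv$, with $E=R^{1/2}TR^{-1/2}$ and $F=R^{1/2}(T_J-TZ_c)R^{-1/2}$, at the level of the projection: $[I\ 0]\Pi=[E\ F]\Pi$. Hence $\Tr\Pi_{ww}=\|[E\ F]\Pi\|_{\rm HS}^2\le\|E\|_{\rm HS}^2+\|F\|_{\rm HS}^2\le5h_*$, where $h_*=\sum_ir_i^2$. Then $\Tr\Pi\ge k-1$ gives $\Tr(I-\Pi_{vv})\le\Tr\Pi_{ww}+1\le6h_*$, using $h_*\ge\Tr T\ge1$ from the fixed point. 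No inverse of $I-T$ appears.

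This is where $TRT\preceq\Gamma$, $T_JRT_J\preceq\Gamma$, $\Gamma_{ii}\le r_i\mu_i$ and $\|Z_c\|\le1$ are actually used: they give the $L$-free bounds $\|E\|_{\rm HS}^2\le h_*$ and $\|F\|_{\rm HS}\le2\sqrt{h_*}$. The hypothesis $r_i^2\le L/u$ enters only afterwards, through $R^{-1/2}GR^{-1/2}=\diag(r_i^2)\preceq(L/u)I$ paired with $I-\Pi_{vv}$. That pairing, not an absorption of the legal right side, is the source of $6L$.

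\textbf{The mixed term.} This is a second, independent gap. $\E[\omega^TRZ_cy]=\Tr(Z_c\Pi_{wv})$ is indefinite, so you cannot pass to the isotropic covariance via $\Pi\preceq I$ as you can for the response square. Cauchy--Schwarz gives at best $\sqrt{\Tr\Pi_{ww}}\,\bigl(\sum_iz_i^2\bigr)^{1/2}$, which is of order $\sqrt{h_*k}$. The negative term $h_*$ cannot absorb this, since $k$ can be far larger than $h_*$: for $m=1$ one has $\Tr T=1$ and $h_*\le2$, while $k$ can equal $n$.

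The paper instead takes covariances in the legal equation, $\Pi_{wv}=E\Pi_{wv}+F\Pi_{vv}$, and writes $\Tr(Z_c\Pi_{wv})=\Tr(Z_cF)+\Tr(Z_cE\Pi_{wv})-\Tr(Z_cF(I-\Pi_{vv}))$. The first term sees only the diagonal of $F$, so it is at most $2\Tr T\le2h_*$. The other two are at most $\|E\|_{\rm HS}\|\Pi_{wv}\|_{\rm HS}\le\sqrt5\,h_*$ and $\|F\|_{\rm HS}\|I-\Pi_{vv}\|_{\rm HS}\le2\sqrt6\,h_*$, using $\|\Pi_{wv}\|_{\rm HS}^2\le\Tr\Pi_{ww}$. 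This is where $\sqrt5+2\sqrt6$ comes from; there is no ratio of weights to balance.

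\textbf{The response-square constant.} Your estimate here also loses the stated constant. Splitting into three pieces and applying Cauchy--Schwarz gives $9h_*$. The paper gets $5h_*$: after passing to independent isotropic $w,v$, the $\omega$--$y$ cross terms vanish, and $\sum_iy_i(J-z_i)D_i$ is handled atom by atom with $(J-z_i)^2\preceq4$ on each sign block.
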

\begin{proof}
Put $h_*=\sum_i r_i^2$. We first record the quantitative lower bound
\begin{equation}\label{eq:source-mass}
 h_*\ge\Tr T\ge1.
\end{equation}
Indeed, $T\boldsymbol\mu=\boldsymbol\mu$ with
$\boldsymbol\mu\ne0$ gives an eigenvalue equal to one. Since $T\succeq0$,
its trace is at least one, and $T_{ii}\le r_i^2$ bounds that trace above.
This unit of source mass will pay for the additional linear constraint.
Whiten the legal variables by
$w=R^{1/2}\omega$, $v=R^{1/2}y$. Define
\[
 E=R^{1/2}TR^{-1/2},\qquad
 F=R^{1/2}(T_J-TZ_c)R^{-1/2}.
\]
The legal equation is $(I-E)w=Fv$. The balanced identities give
\[
 \|E\|_{\mathrm{HS}}^2\le h_*,\qquad
 \|R^{1/2}T_JR^{-1/2}\|_{\mathrm{HS}}^2\le h_*,\qquad
 \|F\|_{\mathrm{HS}}\le2\sqrt{h_*}.
\]
For example,
$\|E\|_{\mathrm{HS}}^2=\operatorname{Tr}(R^{-1}TRT)
\le\operatorname{Tr}(R^{-1}\Gamma)\le h_*$.
The estimate for $F$ follows by the triangle inequality and $\|Z_c\|\le1$.

Let $\Pi$ be the orthogonal projection in $\R^k\oplus\R^k$ onto the
kernel of the $(k+1)\times2k$ matrix
\[
 \begin{bmatrix}I-E&-F\\0&b^TR^{-1/2}\end{bmatrix}:
 (w,v)\longmapsto\bigl((I-E)w-Fv,\ b^TR^{-1/2}v\bigr).
\]
The first block row enforces legality; the last row enforces $b^Ty=0$.
Write the projection blocks as
$\Pi=\left(\begin{smallmatrix}\Pi_{ww}&\Pi_{wv}\\
\Pi_{vw}&\Pi_{vv}\end{smallmatrix}\right)$ in the variable order $(w,v)$. Put
$\beta_{\rm proj}=\|E\|_{\mathrm{HS}}^2+\|F\|_{\mathrm{HS}}^2\le5h_*$.
Its blocks satisfy
\begin{equation}\label{eq:sm:projectionblocks}
\begin{gathered}
 \operatorname{Tr}\Pi_{ww}\le\beta_{\rm proj},\qquad
 \operatorname{Tr}(I-\Pi_{vv})\le\beta_{\rm proj}+1\le6h_*,\\
 \|\Pi_{wv}\|_{\mathrm{HS}}\le\sqrt{\beta_{\rm proj}},\qquad
 \|I-\Pi_{vv}\|_{\mathrm{HS}}\le\sqrt{\beta_{\rm proj}+1}.
\end{gathered}
\end{equation}
To see the first bound in detail, the kernel relation gives
$[I\ 0]\Pi=[E\ F]\Pi$. Since $\Pi=\Pi^* =\Pi^2$,
\[
 \Tr\Pi_{ww}=\|[I\ 0]\Pi\|_{\rm HS}^2
 =\|[E\ F]\Pi\|_{\rm HS}^2
 \le\|[E\ F]\|_{\rm HS}^2=\beta_{\rm proj}.
\]
The constraint matrix has at most $k+1$ independent rows, so
$\Tr\Pi\ge k-1$. Consequently
$\Tr(I-\Pi_{vv})\le\Tr\Pi_{ww}+1\le\beta_{\rm proj}+1$.
Using $h_*\ge1$ gives the displayed bound by $6h_*$.
For the cross block,
$\|\Pi_{wv}\|_{\rm HS}^2=\Tr(\Pi_{wv}\Pi_{vw})
=\Tr(\Pi_{ww}-\Pi_{ww}^2)\le\Tr\Pi_{ww}\le\beta_{\rm proj}$.
The inequality $0\preceq I-\Pi_{vv}\preceq I$ proves the last bound.

Choose a centered vector $(w,v)$ with covariance $\Pi$, for example
$\Pi\zeta_0$, where $\zeta_0\in\{-1,1\}^{2k}$ has independent uniform entries. This vector is
supported on the constrained legal kernel. This finite average will
establish existence; the algorithm selects its direction by the
compressed eigendecomposition. We average the three terms in
\eqref{eq:sm:legalbound}.

First, $R^{-1/2}GR^{-1/2}=\operatorname{diag}(r_i^2)\preceq LI/u$,
so
\begin{equation}\label{eq:sm:negative}
 \mathbb E[y^TGy]
 \ge h_*-\frac L u\operatorname{Tr}(I-\Pi_{vv})
 \ge(1-6L/u)h_*.
\end{equation}

Second, the response-square quadratic form is positive semidefinite.
Since $\Pi\preceq I$, its expectation is at most its trace under
independent isotropic $w,v$. Commutation with $J$ and $|z_i|\le1$ give
\begin{equation}\label{eq:sm:square}
\begin{split}
 &\mathbb E\operatorname{Tr}
 P\bigl(J\mathcal Dy-\mathcal DZ_cy+\mathcal D\omega\bigr)^2\\
 &\quad\le\sum_i
 \frac{\operatorname{Tr}(P(J-z_iI)^2D_i^2)
                         +\operatorname{Tr}(PD_i^2)}{R_{ii}}
 \le5\operatorname{Tr}(R^{-1}\Gamma)\le5h_*.
\end{split}
\end{equation}
Here the estimate uses the positive quadratic form $\Tr(PY^2)$
and the fact that each $D_i$ commutes with $J$.

Third, the mixed term before its factor $1/u$ is $-w^TZ_cv$.
Taking covariance in $(I-E)w=Fv$ gives
$\Pi_{wv}=E\Pi_{wv}+F\Pi_{vv}$. Hence
\begin{equation}\label{eq:sm:mixed}
\begin{split}
 |\operatorname{Tr}(Z_c\Pi_{wv})|
 &\le |\operatorname{Tr}(Z_cF)|
      +\|E\|_{\mathrm{HS}}\|\Pi_{wv}\|_{\mathrm{HS}}
      +\|F\|_{\mathrm{HS}}\|I-\Pi_{vv}\|_{\mathrm{HS}}\\
 &\le 2h_*+\sqrt{h_*\beta_{\rm proj}}
        +2\sqrt{h_*(\beta_{\rm proj}+1)}\\
 &\le(2+\sqrt5+2\sqrt6)h_*.
\end{split}
\end{equation}
For its first term, diagonal similarity preserves diagonal entries, so
\[
 |\operatorname{Tr}(Z_cF)|
 =\left|\sum_i z_i\bigl((T_J)_{ii}-T_{ii}z_i\bigr)\right|
 \le2\sum_i T_{ii}\le2h_*.
\]
The other terms use \eqref{eq:sm:projectionblocks} and Cauchy--Schwarz.

Combining \eqref{eq:sm:negative}--\eqref{eq:sm:mixed} proves
\eqref{eq:sm:generalcurvature}. If its coefficient is positive, some
legal $y\ne0$ with $b^Ty=0$ has $y^TKy<0$, proving the assertion.
\end{proof}

\begin{corollary}[Curvature orthogonal to the current state]\label{thm:sm:curvature}
Use the smooth source \eqref{eq:sourceprofile} at a state where every live
coordinate has margin at least $a$. Suppose there is a nonzero live atom
and every nonzero live label satisfies $\ell_iq_i\le1$. Then there is a
real vector $g$, supported on those labels, such that
\begin{equation}\label{eq:sm:finalcurvature}
 \|g\|_2=1,\qquad x^Tg=0,\qquad
 g^T\nabla^2\mathscr F(x)g<0.
\end{equation}
Thus the raw Hessian restricted to the live part of $x^\perp$ has a
negative eigenvalue.
\end{corollary}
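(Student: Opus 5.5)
The plan is to apply Theorem~\ref{thm:sm:general} with all $c_i=c$ from \eqref{eq:sourceprofile}, restricted to the $k\ge1$ nonzero live labels, and with the linear constraint vector $b$ chosen so that $b^Ty=0$ is equivalent to $x^Th=0$. We then transport the resulting negative value of $K$ back to a physical unit vector $g$.

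\emph{Hypotheses of Theorem~\ref{thm:sm:general}.} By Lemma~\ref{lem:sm:weights}, $c''\le-2u$ on $[-1,1]$, and $c_i>0$ on live labels. Since every live coordinate has margin at least $a$, the secant $\ell_i$ of \eqref{eq:sm:secant} is defined. Lemma~\ref{lem:sm:outward} gives $\ell_i\ge|c'(x_i)|=\alpha_i$, and $\alpha_i\ge0$ because $c'(x_i)$ has sign opposite to $x_i$. Together with $\ell_iq_i\le1$ and $q_i>0$ this yields $|z_i|=\alpha_iq_i\le1$. For the second condition, $\ell_i\ge\tfrac9{10}u$ gives $q_i\le10/(9u)$. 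Combined with $c_i\le2u$, this gives $r_i^2=c_iq_i^2\le200/(81u)$, so \eqref{eq:sm:smallness} holds with $L=200/81$.

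\emph{The constant.} Numerically, $6L+7+\sqrt5+2\sqrt6\approx14.8+7+2.24+4.90<30<64=u$. Hence the coefficient in \eqref{eq:sm:generalcurvature} is positive. Moreover $\sum_ic_iq_i^2=h_*\ge1>0$ by \eqref{eq:source-mass}, so the expectation is strictly negative.

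\emph{Encoding $x^Th=0$.} Since $h=D_d^{1/2}y$, where $D_d=\diag(c_i/u)$ is positive definite on the $k$ nonzero live labels, we take $b=D_d^{1/2}x_{\rm nz}$, the restriction of $x$ to those labels. Then $b^Ty=x^Th$. Theorem~\ref{thm:sm:general} then supplies a centered random legal pair with $b^Ty=0$ almost surely and $\mathbb E[y^TKy]<0$. Some realization $y$ therefore has $y^TKy<0$, and in particular $y\ne0$.

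Set $h=D_d^{1/2}y$, extended by zero to frozen labels and zero atoms. It satisfies
\[
 h^T\nabla^2\mathscr F(x)h=2y^TKy<0 .
\]
Here $\nabla^2\mathscr F$ is the Hessian on the open live face, which Lemma~\ref{lem:transport} provides. We also have $x^Th=b^Ty=0$, and $h\ne0$ because $D_d^{1/2}$ is invertible. Normalizing, $g=h/\|h\|_2$ satisfies \eqref{eq:sm:finalcurvature} and is supported on the nonzero live labels. The final sentence of the corollary follows: restricted to the live part of $x^\perp$, the raw Hessian has a negative Rayleigh quotient, hence a negative eigenvalue.

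\emph{Main obstacle.} The only delicate point is bookkeeping in the setup of Theorem~\ref{thm:sm:general}. All coefficient vectors there live on the $k$ nonzero live labels, while zero live atoms are held fixed. So $x^Tg=0$ must be checked on the full vector, which is fine because $g$ vanishes off those labels. We must also confirm that $b$ may be arbitrary, including $b=0$ when $x$ vanishes on those labels; Theorem~\ref{thm:sm:general} allows any $b$. Finally, the Hessian in $K$ must be the actual optimized Hessian of Lemma~\ref{lem:hessian}, which Theorem~\ref{thm:sm:general} already uses.
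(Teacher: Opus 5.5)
Your proposal is correct and follows the paper's proof step for step. The secant bounds give $L=200/81$, the constraint vector $b=D_d^{1/2}x$ on the nonzero live labels encodes $x^Th=0$, and the rescaling $g\propto D_d^{1/2}y$ returns to physical coordinates.

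There is one harmless slip: $\alpha_i=-c'(x_i)$ has the same sign as $x_i$, so it is negative when $x_i<0$, and your claim $\alpha_i=|c'(x_i)|\ge0$ is false. What the argument needs is $|z_i|=|c'(x_i)|\,q_i\le\ell_iq_i\le1$, and this holds for either sign of $x_i$.
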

\begin{proof}
The source satisfies $c_i''\le-2u$. By the secant bounds,
\[
 |z_i|=|c_i'(x_i)|q_i\le\ell_iq_i\le1,
 \qquad
 r_i^2=c_iq_i^2\le\frac{c_i}{\ell_i^2}
 \le\frac{2u}{(9u/10)^2}=\frac{200}{81u}.
\]
Apply Theorem~\ref{thm:sm:general} with $L=200/81$ and
$b=D_d^{1/2}x$ on the nonzero live labels. Its coefficient is positive,
indeed greater than $1/2$, since
\[
 6(200/81)+7+\sqrt5+2\sqrt6<32=u/2.
\]
Some vector $y$ therefore satisfies $b^Ty=0$ and $y^TKy<0$.
Put $g_0=D_d^{1/2}y$. The live weights are positive, so $g_0\ne0$,
$x^Tg_0=0$, and
$g_0^T\nabla^2\mathscr F(x)g_0=2y^TKy<0$.
Normalize $g=g_0/\|g_0\|_2$ and extend it by zero to all other labels.
\end{proof}

The change of scale $D_d^{1/2}$ is used only to prove existence of the
direction. The algorithm normalizes the physical direction and uses
the raw Hessian restricted to $x^\perp$. This is what gives a fixed
increase $h^2$ in squared coefficient norm for either sign of the step.
The argument also rules out a zero-dimensional movement space on this
branch: the displayed negative direction belongs to that space. In
particular the case of a single nonzero coefficient cannot obstruct
movement after all outward tests have failed.

\section{Small movements and deterministic finite completion}
\label{sec:smallstep}

The curvature estimate can now be used without making progress depend
on the sign of a movement. A physical unit direction perpendicular to
$x$ increases $\|x\|_2^2$ by $h^2$ for either sign. We therefore choose
the sign by its spectral cost. The remaining deficit
$n-\|x\|_2^2$ pays for the permitted spectral drift, and bounds the
number of movements.

We first give the finite argument in terms of a common sufficiently
small $h>0$. The numerical sections provide its explicit polynomial
choice and implement every decision with the stated primitives.

\subsection{Boundary rounding and its total cost}
A label is frozen precisely when its coefficient is an endpoint.
Whenever a live label has $r_i^{\rm snap}=1-|x_i|\le\rho$, set
\begin{equation}\label{eq:rounding}
 s_i=\operatorname{sgn}_+(x_i),\qquad
 x_i\longleftarrow s_i.
\end{equation}
As $\rho<1$, this test never rounds a zero coordinate.
The source is always recomputed from the actual coordinates and hence
its $i$th weight becomes $c(s_i)=0$. The final signed atom stays in
the center $H(x)$.

\begin{lemma}[Near-boundary rounding]\label{lem:rounding}
If $x'$ results from rounding label $i$ in $x$, then
\begin{equation}\label{eq:roundingcost}
 \mathscr F(x')-\mathscr F(x)\le r_i^{\rm snap}\norm{A_i}.
\end{equation}
There are at most $n$ such updates. Define the remaining rounding
allowance, used only in the proof, by
\begin{equation}\label{eq:roundingallowance}
 \mathscr R(x)=\rho\sum_{i:\,|x_i|<1}\norm{A_i}.
\end{equation}
Then $\mathscr F+\mathscr R$ does not increase at a rounding, and
\begin{equation}\label{eq:allowancebound}
 0\le\mathscr R(x)\le\mathscr R(0)=\rho m
 \le\frac{\epsilon}{100n}\le\frac{\delta}{100n}.
\end{equation}
\end{lemma}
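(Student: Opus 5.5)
The plan is to compare $\mathscr F(x')$ and $\mathscr F(x)$ by testing the new objective at the new optimizer, splitting the change into a center part and a source part. Let $S'$ be the optimizing density at $x'$. By the definition of $\mathscr F(x)$ as a maximum,
\[
 \mathscr F(x')-\mathscr F(x)\le L(x',S')-L(x,S').
\]
The regularizer term is identical in both and cancels. The center term changes by
$\Tr(J\widehat H(x')S')-\Tr(J\widehat H(x)S')=(s_i-x_i)\Tr(JB_iS')$. Here $|s_i-x_i|=1-|x_i|=r_i^{\rm snap}$, and $|\Tr(JB_iS')|\le\Tr(B_iS')\le\norm{B_i}=\norm{A_i}$ for a density $S'$. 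This gives exactly the right side of \eqref{eq:roundingcost}, provided the source term does not increase.

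For the source term, rounding changes only the weight $c_i$, and it moves from $c(x_i)\ge0$ to $c(s_i)=0$ by Lemma~\ref{lem:sm:weights}. Hence $\Omega_{x'}(S')\preceq\Omega_x(S')$ as positive semidefinite matrices, since the difference is $c(x_i)\Tr(B_iS')B_i\succeq0$. Fidelity is monotone in its second argument. One way to see this is the infimum form of \eqref{eq:7}, valid for semidefinite inputs as noted in the proof of Lemma~\ref{lem:transport}: each test value $\Tr(SZ^{-1})+\Tr(MZ)$ is monotone in $M$ because $Z\succ0$. One must work on a common support, so I would take the full space with $\eta$-regularization as in that proof. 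Another way is operator monotonicity of the square root after writing $\Fid(S,M)=\Tr\sqrt{S^{1/2}MS^{1/2}}$. Thus $2\Fid(S',\Omega_{x'}(S'))\le2\Fid(S',\Omega_x(S'))$, and the rounding bound follows.

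The counting claims are bookkeeping. A rounded label becomes frozen, and the algorithm never moves frozen labels, since outward and curvature moves are supported on live labels. Each label is therefore rounded at most once, giving at most $n$ updates. At a rounding, $\mathscr R$ drops by $\rho\norm{A_i}\ge r_i^{\rm snap}\norm{A_i}$, because rounding occurs only when $r_i^{\rm snap}\le\rho$. This cancels the increase in $\mathscr F$, so $\mathscr F+\mathscr R$ does not increase. The live set only shrinks, so $0\le\mathscr R(x)\le\mathscr R(0)=\rho\sum_i\norm{A_i}=\rho\sum_i\Tr A_i=\rho m$ by rank one and Parseval. Finally, Lemma~\ref{lem:scales} gives $m\le n\epsilon$, so $\rho m\le\epsilon/(100n)$. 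Since $\epsilon\le1$, we have $\epsilon\le\sqrt\epsilon=\delta$.

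The only delicate point is the monotonicity of fidelity when the source support shrinks at the face. The semidefinite infimum form, or the direct square-root monotonicity, handles this without needing the optimizer at $x'$ to be faithful on the old support.
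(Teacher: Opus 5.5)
Your proposal is correct and matches the paper's proof. Like the paper, you compare the two objectives at a fixed density (you use the new optimizer $S'$, the paper uses an arbitrary density and then maximizes, which amounts to the same thing), bound the center change by $r_i^{\rm snap}\norm{A_i}$, get monotonicity of fidelity in the source by sandwiching with $S^{1/2}$ and operator monotonicity of the square root, and note that the regularizer is unchanged. Your bookkeeping also matches: the count of at most $n$ updates, the drop of $\mathscr R$ by $\rho\norm{A_i}$, and the chain $\rho m\le\epsilon/(100n)\le\delta/(100n)$.
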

\begin{proof}
The doubled center changes by $r_i^{\rm snap}s_iJB_i$.
For every density $S$, its contribution to the objective increases
by at most $r_i^{\rm snap}\norm{JB_i}=r_i^{\rm snap}\norm{A_i}$.
The source loses a nonnegative summand. For each fixed full density,
fidelity is monotone in that source: sandwich by $S^{1/2}$ and use
operator monotonicity of the positive square root. See the integral proof in Lemma~\ref{lem:standardorder}.
The regularizer is unchanged at a fixed density. Comparing the two
objectives and then taking their maxima proves \eqref{eq:roundingcost}.

A frozen label never moves again. Rounding label $i$ decreases
$\mathscr R$ by $\rho\norm{A_i}$, which covers its potential increase
because $r_i^{\rm snap}\le\rho$. Finally rank one and Parseval give
$\sum_i\norm{A_i}=m\le n\epsilon$. Substituting
$\rho=1/(100n^2)$ and using $\epsilon\le\sqrt\epsilon=\delta$
proves \eqref{eq:allowancebound}.
\end{proof}

The allowance is a proof quantity. It records a cost that can occur
at most once per label and is constant on each open face. The
algorithm itself stores only the coefficients and numerical work
arrays, with center $J\widehat H(x)$ throughout.

After applying all available roundings, call the state \emph{prepared}.
Every live coordinate then has margin greater than $\rho$. An outward
movement of length $a=\rho/16$ remains interior. A physical unit live
vector $g$ satisfies $|g_i|\le1$, so both segments
$x+tg$, $|t|\le h\le a$, also remain on the current open face.
The live set, and therefore $\mathscr R$, is unchanged during either
movement; it changes only at the subsequent roundings.

\subsection{Choosing a sign by potential value}
Write
\[
 K_0(x)=\tfrac12\nabla^2\mathscr F(x)
\]
for the raw half-Hessian on the live coordinates. Unlike the balanced
matrix $K$ in \eqref{eq:sm:hessian}, this matrix uses the physical
coefficient scale.

\begin{lemma}[Symmetric potential estimate]\label{lem:symmetric}
Let $g$ be a real unit live vector and suppose
$|\frac{d^4}{dt^4}\mathscr F(x+tg)|\le M_{\rm loc}$ for $|t|\le h$.
Then
\begin{equation}\label{eq:symmetricdrift}
 \frac{\mathscr F(x+hg)+\mathscr F(x-hg)}2-\mathscr F(x)
 \le h^2g^TK_0(x)g+M_{\rm loc}h^4/24.
\end{equation}
If $q_+,q_-$ approximate the two candidate values to accuracy $\nu$,
choosing the candidate with smaller report gives an actual potential
increase at most the right-hand side of \eqref{eq:symmetricdrift}
plus $2\nu$.
\end{lemma}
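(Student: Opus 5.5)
Set $\phi(t)=\mathscr F(x+tg)$ for $|t|\le h$. The plan is a fourth-order Taylor expansion of $\phi$ about $t=0$, which cancels the odd terms, followed by a short comparison argument for the reported values.

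\textbf{Smoothness along the segment.} As noted after Lemma~\ref{lem:rounding}, at a prepared state with $h\le a$ the segment $\{x+tg:|t|\le h\}$ lies in the current open live face. By Lemma~\ref{lem:transport}, $\mathscr F$ is smooth there, so $\phi\in C^\infty([-h,h])$. Also $\phi''(0)=g^T\nabla^2\mathscr F(x)g=2g^TK_0(x)g$.

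\textbf{Taylor expansion.} Expand $\phi(\pm h)$ to third order with the integral remainder:
\[
 \phi(\pm h)=\phi(0)\pm h\phi'(0)+\tfrac{h^2}{2}\phi''(0)\pm\tfrac{h^3}{6}\phi'''(0)
 +\int_0^{h}\frac{(h-s)^3}{6}\phi''''(\pm s)\,ds.
\]
Averaging the two expansions cancels the first- and third-order terms. The remainder average is at most
\[
 M_{\rm loc}\int_0^h\frac{(h-s)^3}{6}\,ds=\frac{M_{\rm loc}h^4}{24}.
\]
The quadratic term is $\tfrac{h^2}{2}\phi''(0)=h^2g^TK_0(x)g$. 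Together these give \eqref{eq:symmetricdrift}. Only the absolute bound on the fourth derivative is used, so no sign information is needed beyond the curvature term.

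\textbf{Choosing by reports.} Let $\varepsilon_\pm\in\{+h,-h\}$ denote the two candidates, and say the chosen one is $x+\varepsilon_c g$ with report $q_c\le q_o$, where $q_o$ is the other report. Since each report has accuracy $\nu$,
\[
 \mathscr F(x+\varepsilon_c g)\le q_c+\nu\le\tfrac12(q_c+q_o)+\nu
 \le\tfrac12\bigl(\mathscr F(x+hg)+\mathscr F(x-hg)\bigr)+2\nu.
\]
The first step is accuracy of $q_c$. The second holds because the minimum of two numbers is at most their average. The third is accuracy of both reports. Subtracting $\mathscr F(x)$ and applying \eqref{eq:symmetricdrift} gives the claim.

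\textbf{Main point to check.} The only delicate step is confirming that the whole segment stays on one open face, so that Lemma~\ref{lem:transport} supplies smoothness and no rounding or change of source support occurs for $|t|\le h$. This follows from $h\le a<\rho$, the margin exceeding $\rho$ at prepared states, and $|g_i|\le1$. The actual value of $M_{\rm loc}$ is deferred to Appendix~\ref{sec:regularity}.
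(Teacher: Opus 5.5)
Your proposal is correct and follows the paper's proof: a symmetric third-order Taylor expansion in which the odd terms cancel and the remainder is bounded by $M_{\rm loc}h^4/24$, followed by a short comparison of the reports. The only difference is cosmetic: you bound the chosen value through the average of the two reports, while the paper bounds it by $\min\{F(h),F(-h)\}+2\nu$. Your check that the segment stays in one open face is correct; the paper leaves it implicit in the fourth-derivative hypothesis.
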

\begin{proof}
For $F(t)=\mathscr F(x+tg)$, Taylor expansion through degree three
at $h$ and $-h$ gives
\[
 \frac{F(h)+F(-h)}2-F(0)=\frac{h^2}2F''(0)+R,
 \qquad |R|\le M_{\rm loc}h^4/24.
\]
Odd terms cancel, and $F''(0)=2g^TK_0(x)g$. If the chosen sign is $s$,
then for either sign $t$,
$F(sh)\le q_s+\nu\le q_t+\nu\le F(th)+2\nu$.
Thus $F(sh)\le\min\{F(h),F(-h)\}+2\nu$, which is bounded by their
average plus $2\nu$.
\end{proof}

This comparison uses symmetry to control spectral cost while retaining
a deterministic choice of sign. The angle with the potential gradient
does not need to be estimated: its linear contribution cancels in the
average, and the chosen report is no larger than either alternative.

\subsection{Progress for every movement}
Define the squared-norm deficit
\begin{equation}\label{eq:radialdeficit}
 U_{\rm rad}(x)=n-\|x\|_2^2=\sum_i(1-x_i^2).
\end{equation}
It is nonnegative on the cube, is initially $n$, and is zero precisely
at its vertices.

\begin{lemma}[Deterministic progress]\label{lem:radialprogress}
A length-$a$ outward movement decreases $U_{\rm rad}$ by at least $a^2$.
If $\|g\|_2=1$ and $x^Tg=0$, then either movement $x\mapsto x\pm hg$
decreases $U_{\rm rad}$ by exactly $h^2$. Boundary rounding cannot
increase $U_{\rm rad}$.
\end{lemma}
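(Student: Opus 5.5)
The three claims are elementary coordinate computations with the deficit $U_{\rm rad}(x)=\sum_i(1-x_i^2)$ from \eqref{eq:radialdeficit}. We treat the three kinds of update separately.

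\emph{Outward movement.} Such a step changes only coordinate $i$, replacing $x_i$ by $x_i+as_i$. Here $s_i=\operatorname{sgn}_+(x_i)$, so $s_ix_i=|x_i|$. The deficit therefore decreases by
\[
 (x_i+as_i)^2-x_i^2=2a|x_i|+a^2\ge a^2 .
\]
Feasibility holds because a prepared state has margin greater than $\rho=16a$. The new coordinate thus stays in $(-1,1)$, and the deficit remains well defined and nonnegative. The same computation also covers a zero coordinate, where $s_i=1$ and the decrease is exactly $a^2$.

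\emph{Curvature movement.} Expand the square:
\[
 \|x\pm hg\|_2^2=\|x\|_2^2\pm2h\,x^Tg+h^2\|g\|_2^2=\|x\|_2^2+h^2 .
\]
This uses $x^Tg=0$ and $\|g\|_2=1$, so either sign lowers $U_{\rm rad}$ by exactly $h^2$. It matters that $g$ is the physical direction. Algorithm~\ref{alg:trial} builds it from the orthonormal frame $Q_x$ of $\mathcal T_x$ in the original live coordinates and extends it by zero on frozen labels. Hence orthogonality to the live part of $x$ is the same as orthogonality to the full vector $x$. One should also note that $x\pm hg$ stays in the cube, since $h\le a$ and $|g_i|\le1$. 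This was already observed after Lemma~\ref{lem:rounding}.

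\emph{Rounding.} Rounding replaces a live $x_i$ with $1-|x_i|\le\rho$ by $s_i=\operatorname{sgn}_+(x_i)$. Since $|x_i|>0$ here, this choice gives $|s_i|=1\ge|x_i|$. The term $1-x_i^2$ becomes $0$, which is no larger than before, and all other terms are unchanged. Applying this to each update in \Call{Prepare}{} proves monotonicity.

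I expect no step to be a genuine obstacle. The only point needing care is the sign convention $\operatorname{sgn}_+$, which makes the outward step and the rounding move toward the nearer endpoint, including at zero.
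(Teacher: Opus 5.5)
Your proof is correct and matches the paper's argument essentially line for line. In both, the outward step raises $x_i^2$ by $2a|x_i|+a^2\ge a^2$, orthogonality gives $\|x\pm hg\|_2^2=\|x\|_2^2+h^2$, and rounding sends one term $1-x_i^2$ to zero. Your feasibility remarks (staying in the cube, the zero-extension of $g$) are correct context that the paper handles elsewhere; they are not needed for the lemma itself.
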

\begin{proof}
An outward change of coordinate $i$ increases squared norm by
$2a|x_i|+a^2\ge a^2$. Orthogonality gives
\[
 \|x\pm hg\|_2^2=\|x\|_2^2\pm2h x^Tg+h^2\|g\|_2^2
 =\|x\|_2^2+h^2.
\]
Rounding increases one squared coordinate from $x_i^2$ to one.
\end{proof}

\subsection{Finite completion and spectral control}
\begin{definition}[Analytic walk properties]\label{def:analytictrial}
Put $\beta=\delta/(100n)$ and choose $0<h\le a$. A walk starts at
$x=0$, stops when all labels are frozen or after
$T=\lceil16n/h^2\rceil$ movements, and has the following properties.
\begin{enumerate}
\item It performs all available roundings \eqref{eq:rounding} before
its first movement and after every subsequent movement.
\item At every nonterminal prepared state it makes either one
length-$a$ outward increment, or one movement $x\mapsto x\pm hg$
with a real unit live vector $g$ satisfying $x^Tg=0$.
\item The chosen movement, measured before rounding, has actual
potential increase at most $\beta h^2$.
\end{enumerate}
Proposition~\ref{prop:exacttrial} realizes these properties with exact
choices. Proposition~\ref{prop:implementedtrial} verifies them for
Algorithm~\ref{alg:trial}.
\end{definition}

\begin{lemma}[Finite completion]\label{lem:finitefinish}
Every walk satisfying Definition~\ref{def:analytictrial} completes
before its horizon.
\end{lemma}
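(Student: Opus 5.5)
The plan is to use the squared-norm deficit $U_{\rm rad}(x)=n-\|x\|_2^2$ of \eqref{eq:radialdeficit} as a counter. It starts at $U_{\rm rad}(0)=n$, and it stays nonnegative because every state lies in $[-1,1]^n$. Every movement decreases it by a fixed positive amount, and roundings never increase it. Hence only finitely many movements can occur before all labels are frozen, and I will check that this number is below the horizon $T=\lceil 16n/h^2\rceil$.

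First, I will confirm that the walk stays in the cube and that each movement is well defined, using the remarks after Lemma~\ref{lem:rounding}. At a prepared state every live coordinate has margin greater than $\rho$. Therefore an outward step of length $a=\rho/16$, or a step $x\pm hg$ with $h\le a$ and $|g_i|\le 1$, keeps $x$ inside $[-1,1]^n$.

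Second, I will bound the decrease per movement by Lemma~\ref{lem:radialprogress}. An outward increment decreases $U_{\rm rad}$ by at least $a^2\ge h^2$, since $h\le a$. A curvature movement with $x^Tg=0$ and $\|g\|_2=1$ decreases it by exactly $h^2$. Rounding does not increase it, since it only raises some $x_i^2$ to $1$.

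Third, I will sum these decreases. After $j$ movements, measured after the subsequent roundings, $U_{\rm rad}\le n-jh^2$. Since $U_{\rm rad}\ge0$, we get $j\le n/h^2<T$. So the loop cannot run $T$ movements. By item~2 of Definition~\ref{def:analytictrial}, a movement is made at every nonterminal prepared state. The walk therefore terminates only when some state has no live label, and this happens after at most $\lfloor n/h^2\rfloor<T$ movements. That state is a vertex, i.e.\ a full signing.

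I do not expect a real obstacle here. The one point to be careful about is the case where the terminal condition is met at the start: if $\Call{Prepare}{0}$ already freezes everything, there is nothing to prove. Apart from that, the argument uses only the geometric progress lemma and does not need the potential bound of item~3.
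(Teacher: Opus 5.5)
Your proposal is correct and follows the paper's proof. Both arguments track the squared norm: you use the deficit $U_{\rm rad}=n-\|x\|_2^2$, and the paper uses $\|x\|_2^2$ directly. In each case every movement raises $\|x\|_2^2$ by at least $h^2$ (using $a\ge h$), roundings do not lower it, and the cube bound $\|x\|_2^2\le n$ rules out reaching the horizon. Your bound $j\le n/h^2<T$ is the direct form of the paper's contradiction $Th^2\ge16n>n$.
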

\begin{proof}
Each nonterminal movement increases squared norm by at least $h^2$,
since $a\ge h$, and the subsequent roundings only increase it further.
If the walk were still unfinished after $T$ movements, its squared
norm would be at least $Th^2\ge16n>n$, contradicting $x\in[-1,1]^n$.
\end{proof}

\begin{lemma}[A nonincreasing spectral account]\label{lem:ledger}
For such a walk, define
\begin{equation}\label{eq:ledger}
 W(x)=\mathscr F(x)+\mathscr R(x)+\beta U_{\rm rad}(x).
\end{equation}
Every movement and every rounding leave $W$ nonincreasing. At every
state, including the final signing,
\begin{equation}\label{eq:ledgerbound}
 \|H(x)\|\le\mathscr F(x)\le W(x)\le W(0)<35\delta.
\end{equation}
\end{lemma}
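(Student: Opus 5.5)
The proof is a telescoping argument on the three summands of $W$, checked separately for movements and for roundings, followed by an evaluation of $W(0)$.

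\emph{Movements.} Consider one movement (outward increment or curvature step) taken from a prepared state. It happens on the current open face, so the live set is unchanged and $\mathscr R$ stays constant, as noted after Lemma~\ref{lem:rounding}. Property~3 of Definition~\ref{def:analytictrial} says $\mathscr F$ increases by at most $\beta h^2$. Lemma~\ref{lem:radialprogress} says $U_{\rm rad}$ decreases by at least $h^2$:
\begin{itemize}
\item for a curvature step the decrease is exactly $h^2$;
\item for an outward step it is at least $a^2\ge h^2$, because $h\le a$.
\end{itemize}
Hence $\beta U_{\rm rad}$ drops by at least $\beta h^2$, and $W$ does not increase.

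\emph{Roundings.} Each rounding is a single-label update. Lemma~\ref{lem:rounding} shows that $\mathscr F+\mathscr R$ does not increase, and Lemma~\ref{lem:radialprogress} shows that $U_{\rm rad}$ does not increase. So $W$ is nonincreasing here too. Preparation performs these updates one label at a time in label order, so we apply the lemma to each update in turn. Every state of the walk is reached from $x=0$ by a finite sequence of such operations, so by induction $W(x)\le W(0)$ at every state, including the terminal signing.

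\emph{The chain of inequalities.} The first inequality $\|H(x)\|\le\mathscr F(x)$ is Lemma~\ref{lem:floor}. The second, $\mathscr F(x)\le W(x)$, holds because $\mathscr R\ge0$ by \eqref{eq:allowancebound} and $U_{\rm rad}\ge0$ on the cube.

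\emph{The bound on $W(0)$.} Write
\[
W(0)=\mathscr F(0)+\mathscr R(0)+\beta n .
\]
By Lemma~\ref{lem:floor}, $\mathscr F(0)\le34\delta$. By \eqref{eq:allowancebound}, $\mathscr R(0)\le\delta/(100n)$. Finally $\beta n=\delta/100$. Adding these gives
\[
W(0)\le\delta\bigl(34+\tfrac1{100n}+\tfrac1{100}\bigr)<35\delta,
\]
which is strict because $\delta>0$ when $m>0$.

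\emph{Points to check.} Two details need care, and neither is a real obstacle.
\begin{enumerate}
\item The initial preparation of $0$ does nothing, since $\rho<1$ means no zero coordinate is rounded. So the walk genuinely starts at $W(0)$.
\item Every movement is measured from a prepared state, which is what lets property~3 and the constancy of $\mathscr R$ apply.
\end{enumerate}
Both follow directly from the definitions.
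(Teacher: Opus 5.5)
Your proposal is correct and follows the paper's proof essentially step for step. For movements you use the constancy of $\mathscr R$ and the $\beta h^2$ drift bound against the radial progress; for roundings you use Lemma~\ref{lem:rounding} together with the monotonicity of $U_{\rm rad}$; and you bound $W(0)$ by the same three terms. Your added checks, that preparing $0$ rounds nothing and that roundings are accounted for one label at a time, are details the paper leaves implicit.
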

\begin{proof}
During a movement the rounding allowance is constant. The potential
increase is at most $\beta h^2$, while
Lemma~\ref{lem:radialprogress} decreases $\beta U_{\rm rad}$ by at
least that amount. At a rounding, Lemma~\ref{lem:rounding} controls
$\mathscr F+\mathscr R$ and $U_{\rm rad}$ decreases. Finally the
potential dominates discrepancy, the other two terms are nonnegative,
and
\[
 W(0)\le34\delta+\frac{\delta}{100n}+\frac{\delta}{100}
 \le34.02\delta<35\delta.
\]
\end{proof}

\begin{theorem}[Deterministic signing]\label{thm:trial}
A walk satisfying Definition~\ref{def:analytictrial} returns a full
signing with discrepancy less than $35\delta$.
\end{theorem}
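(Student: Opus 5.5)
The proof combines the two preceding lemmas, so the plan is short. First I will invoke Lemma~\ref{lem:finitefinish}. Every walk satisfying Definition~\ref{def:analytictrial} reaches a state with all labels frozen after fewer than $T$ movements. The loop therefore stops because no label is live, not because the horizon is reached. I will also check that the returned state is a full signing in $\{-1,1\}^n$. A label is frozen exactly when its coefficient is an endpoint, and frozen labels are never moved again: outward increments and curvature movements act only on live coordinates, and a unit live vector $g$ vanishes on frozen labels. Together with the fact that coordinates stay in $[-1,1]$, every coordinate of the terminal state is $\pm1$. So the terminal $x$ equals some $\sigma\in\{-1,1\}^n$, and $H(x)=\sum_i\sigma_iA_i$.

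Second, I will apply Lemma~\ref{lem:ledger} along the whole walk. The account $W=\mathscr F+\mathscr R+\beta U_{\rm rad}$ does not increase at any movement or rounding. There are finitely many events: at most $T$ movements, and at most $n$ roundings by Lemma~\ref{lem:rounding}. Induction over this finite sequence therefore gives $W(x)\le W(0)$ at the terminal state.

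Combining this with $\|H(x)\|\le\mathscr F(x)$ from Lemma~\ref{lem:floor}, and the nonnegativity of $\mathscr R$ and $U_{\rm rad}$, gives
\[
\Bigl\|\sum_i\sigma_iA_i\Bigr\|\le W(0)\le34.02\,\delta<35\delta.
\]
The terms of this chain are bounded as follows: $\mathscr F(0)\le34\delta$, $\mathscr R(0)\le\delta/(100n)$, and $\beta U_{\rm rad}(0)=\beta n=\delta/100$.

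The main point needing care is not an estimate but a consistency check. Lemma~\ref{lem:ledger} is valid at the final vertex, where the source has vanished. That is a boundary point, and on the closed cube $\mathscr F$ is only continuous (Lemma~\ref{lem:continuity}). The ledger inequalities, however, use only potential values at states actually visited, together with the bound $\|H\|\le\mathscr F$, which holds on the closed cube. Smoothness is needed only inside the movement steps, which stay on an open face, so no further argument is required.
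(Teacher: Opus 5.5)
Your proposal is correct and is essentially the paper's proof, which simply combines Lemma~\ref{lem:finitefinish} (the walk stops at a vertex before reaching its horizon) with the account bound \eqref{eq:ledgerbound} from Lemma~\ref{lem:ledger}. Your extra checks are accurate but already contained in those two lemmas: that frozen labels stay fixed, that $W(0)\le34.02\,\delta$, and that only continuity of $\mathscr F$ is needed at boundary states.
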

\begin{proof}
Combine finite completion with \eqref{eq:ledgerbound}.
\end{proof}

\subsection{Existence without computational assumptions}
\begin{proposition}[Exact local choices]\label{prop:exacttrial}
For each fixed input with $m>0$, some common $h>0$ and exact choices
satisfy Definition~\ref{def:analytictrial}. Thus the signing and
partition existence conclusions do not assume Model~\ref{model:solver}.
\end{proposition}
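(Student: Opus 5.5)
We construct the walk with exact choices and check the three properties of Definition~\ref{def:analytictrial} for one common $h>0$ depending only on the input. The main work is a uniform fourth-derivative bound over all states the walk can visit.

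\emph{Exact rule.} At a prepared nonterminal state $x$, first test every nonzero live label. If some label has $\ell_iq_i\ge1$, take its length-$a$ outward increment. Lemma~\ref{lem:sm:outward} gives $\mathscr F(x+as_ie_i)\le\mathscr F(x)\le\mathscr F(x)+\beta h^2$. Any zero live atom can also move outward at exactly zero potential cost (Remark~\ref{rem:zero}). Preparation guarantees margin $>\rho>a$, so the increment stays on the current open face.

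Otherwise every nonzero live label has $\ell_iq_i<1$. Corollary~\ref{thm:sm:curvature} then yields a real unit live vector $g$ with $x^Tg=0$ and $g^TK_0(x)g<0$. We pick the sign $\pm$ minimizing the exact value $\mathscr F(x\pm hg)$. Lemma~\ref{lem:symmetric} with $\nu=0$ bounds the increase by $h^2g^TK_0g+M_{\rm loc}h^4/24\le M_{\rm loc}h^4/24$. This is at most $\beta h^2$ once $h^2\le24\beta/M_{\rm loc}$. Since $|g_i|\le1$ and $h\le a<\rho$, both segments $x+tg$ with $|t|\le h$ stay in the open face, where $\mathscr F$ is smooth by Lemma~\ref{lem:transport}. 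Roundings are performed as prescribed. Thus properties 1--3 hold, provided $M_{\rm loc}$ can be chosen independently of the state.

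\emph{Main obstacle: a uniform fourth-derivative bound.} We need one finite $M$ bounding $|\frac{d^4}{dt^4}\mathscr F(x+tg)|$ for all unit live $g$ and all $x$ on open faces with every live margin $\ge\rho-h\ge\rho/2$. The set of such points is not compact in the face variables: the source support can shrink as atoms freeze, and small source eigenvalues can make the transport $Z$ large.

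We handle this as follows. There are finitely many faces, so it suffices to bound each face separately. On a fixed face, the live weights satisfy $c_i\ge u(1-x_i^2)\ge u\rho/2$, which bounds them below. Every nonzero live atom then contributes a positive multiple of $p_i\ge\mu\|A_i\|$, using the density floor \eqref{eq:densityfloor}. Hence $M=\Omega_x(S_0)$ has a uniform positive lower bound on $\mathcal K$. This keeps $Z$, $Z^{-1}$, $S^{-1/2}$ and all the operators in Lemma~\ref{lem:coupledresponse} uniformly bounded.

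With these bounds, the KKT map of \eqref{eq:nonlinearresponse} is smooth with a uniformly invertible derivative; the invertibility follows from the coercivity \eqref{eq:responsecoercivity} together with the density floor. The implicit function theorem, applied on the compact set of $(x,S,Z)$ satisfying these lower bounds, then bounds all derivatives of the optimizer up to order three. The envelope formula for $\mathscr F'$ then gives a bound on the fourth derivative of $\mathscr F$. Alternatively, one can argue by compactness: the closure of the admissible region with margins $\ge\rho/2$, intersected with a face, is compact, and the smooth extension of $\mathscr F$ to a neighborhood of it has bounded fourth derivatives.

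Take $M_{\rm loc}$ as the maximum of these bounds over the finitely many faces, and set $h=\min\{a,\sqrt{24\beta/M_{\rm loc}}\}$. Theorem~\ref{thm:trial} then gives the existence conclusion, without any appeal to Model~\ref{model:solver}.
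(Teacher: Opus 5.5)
Your proposal is correct and follows essentially the paper's route. You take an outward move when one is certified (you test $\ell_iq_i\ge1$ directly rather than comparing exact potential values, which works equally well), and otherwise the sign-minimizing curvature step from Corollary~\ref{thm:sm:curvature} and Lemma~\ref{lem:symmetric} with $\nu=0$. The step $h$ is chosen so that $M_0h^2/24\le\beta$, using a fourth-derivative bound obtained face by face over finitely many faces with live margins at least $\rho/2$.

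One correction: on a fixed face that region is a closed box and the source support is constant, so it is compact. That is what both your ``alternative'' and the paper use. Your explicit density-floor and source-lower-bound argument is a quantitative version of the same compactness argument, not a workaround for non-compactness.
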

\begin{proof}
At a prepared state, use an outward candidate whose exact potential
is no larger than the current value, whenever such a candidate exists.
Otherwise every outward candidate strictly increases the potential.
A zero atom would leave it unchanged, so every live atom is nonzero.
Lemma~\ref{lem:sm:outward} forces the failed secant inequalities,
and Corollary~\ref{thm:sm:curvature} supplies a physical unit direction
$g$ with $x^Tg=0$ and $g^TK_0(x)g<0$. Choose the sign with the smaller
exact candidate value.

For fixed input there is a finite uniform fourth-derivative bound on
all relevant segments. Fix a live set and endpoint signs for its
complement, restrict live coordinates to margin at least $\rho/2$,
and take physical directions with $\|g\|_2=1$. These are compact sets.
Positive live weights keep the source support fixed; by
Lemma~\ref{lem:transport}, the optimized potential is smooth in a
neighborhood of every such state. Its fourth directional derivatives
have a finite maximum on this compact set. There are only finitely
many live sets and endpoint choices, so one bound $M_0<\infty$ suffices.

Choose $0<h\le a$ with $M_0h^2/24\le\beta$. In the curvature branch,
Lemma~\ref{lem:symmetric} with exact reports gives a chosen potential
increase at most $\beta h^2$. The outward branch has nonpositive
increase. Both branches make deterministic progress, and the roundings
have the prescribed cost. Selecting a direction at each encountered
state yields the required finite walk.
\end{proof}

The implementation needs no lower bound on the negative eigenvalue.
A small positive curvature tolerance still leaves enough of the
$\beta h^2$ account to pay for value errors and the fourth-order
remainder. The next sections replace the compactness bound by explicit
polynomial scales and compute each local choice.

\section{The exact semidefinite value problem}\label{sec:sdp}

The analytic argument used the potential and its derivatives. The
implementation obtains the required information from nearby values.
Two block-matrix identities convert the density maximum into a finite
SDP. We prove them and the corresponding dual equality, so the
optimization queried by the algorithm is the same one whose response
was analyzed above. The Schur-complement reformulations follow the
standard SDP framework of Boyd--Vandenberghe
\cite[Section~4.6.2 and Appendix~A.5.5]{boydvandenberghe2004}.

\begin{assumption}[Value and spectral primitives]\label{model:solver}
Real arithmetic, comparisons, and nonnegative scalar square roots have
unit cost. Complex entries use two real registers. Fix constants $A_*,b_*\in\mathbb N$ and a
deterministic solver that, for every feasible finite-valued SDP
constructed here, of scalar data size $L$, returns an additive-$\nu$
value in at most $A_*(L+1+\nu^{-1})^{b_*}$ operations.
A second deterministic primitive returns an exact orthogonal
eigendecomposition of a real symmetric matrix, with a fixed choice of
basis and order at ties. Its invocation has unit cost; forming its
input, reading its output, and all subsequent scalar work are counted.
\end{assumption}
The runtime bounds count operations in this real-arithmetic model,
with the polynomial depending only on the fixed primitive bounds.
No binary-encoding bound is asserted.

\begin{lemma}[Two semidefinite identities]\label{lem:sdpidentities}
For $D\times D$ positive semidefinite matrices $S,T$, the following
identities hold, including at singular inputs. In the first formula
$W\in\C^{D\times D}$ is unrestricted; in the second, $Y$ is
Hermitian positive semidefinite.
\begin{equation}\label{eq:O5}
 \boxed{\quad
 \operatorname{Fid}(S,T)
 =\max_W\left\{\operatorname{Re}\operatorname{Tr}W:
     \begin{pmatrix}S&W\\W^*&T\end{pmatrix}\succeq0\right\}.
 \quad}
\end{equation}
\begin{equation}\label{eq:O6}
 \boxed{\quad
 \operatorname{Tr}\sqrt S
 =\max_{Y=Y^*\succeq0}\left\{\operatorname{Tr}Y:
       \begin{pmatrix}S&Y\\Y&I_D\end{pmatrix}\succeq0\right\}.
 \quad}
\end{equation}
Both maxima are attained. The first is the fidelity SDP of
\cite[Theorem~3.17 and equation~(3.110)]{watrous2018}.
\end{lemma}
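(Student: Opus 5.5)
For \eqref{eq:O5}, the plan is to reduce to the positive definite case and then pass to the limit. Assume first that $S,T\succ0$. The block matrix is then positive semidefinite if and only if $W^*S^{-1}W\preceq T$. Equivalently, $W=S^{1/2}KT^{1/2}$ for some contraction $K$ with $\|K\|\le1$. This is the standard Schur-complement and contraction parametrization, and it holds without invertibility if we take $K=S^{-1/2}WT^{-1/2}$ on supports. The objective then becomes $\Re\Tr(KT^{1/2}S^{1/2})$. By the duality between the operator norm and the trace norm, its maximum over contractions is the trace norm $\|T^{1/2}S^{1/2}\|_1$.

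The next step identifies this trace norm with the fidelity. Writing $C=T^{1/2}S^{1/2}$, the trace norm is $\Tr\sqrt{C^*C}=\Tr\sqrt{S^{1/2}TS^{1/2}}$. As in the proof of Lemma~\ref{lem:transport}, $CC^*$ and $C^*C$ have the same nonzero eigenvalues, so this also equals $\Tr\sqrt{T^{1/2}ST^{1/2}}=\Fid(S,T)$. The maximum is attained at the unitary $K$ from the polar decomposition of $C$, by taking $K=V^*$ where $C=V|C|$.

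Singular inputs are handled directly rather than by a limit. For arbitrary positive semidefinite $S,T$, a matrix $W$ makes the block matrix positive semidefinite exactly when $W=S^{1/2}KT^{1/2}$ for some contraction $K$. Sufficiency follows by factoring the block matrix as $\begin{pmatrix}S^{1/2}&0\\0&T^{1/2}\end{pmatrix}\begin{pmatrix}I&K\\K^*&I\end{pmatrix}\begin{pmatrix}S^{1/2}&0\\0&T^{1/2}\end{pmatrix}$. Necessity follows from the range inclusions forced by positivity, together with Douglas' lemma. The same trace-norm duality then applies verbatim, and the polar unitary still attains the maximum. The feasible set is also compact, since $\|W\|^2\le\|S\|\|T\|$, which gives attainment independently.

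For \eqref{eq:O6}, the same criterion with $T=I_D$ shows that the constraint is equivalent to $Y^2\preceq S$, that is, $YY^*\preceq S$. Given $Y\succeq0$ with $Y^2\preceq S$, operator monotonicity of the square root (Lemma~\ref{lem:standardorder}) gives $Y\preceq\sqrt S$, hence $\Tr Y\le\Tr\sqrt S$. The choice $Y=\sqrt S$ is feasible and attains this bound. Both arguments are valid at singular $S$, because the Schur-complement characterization has the supported form above.

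I expect the main obstacle to be the careful singular-case Schur-complement characterization via Douglas factorization. The remaining steps are routine duality and monotonicity arguments.
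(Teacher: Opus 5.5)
Your proposal is correct and follows essentially the paper's proof. It uses the Schur complement to obtain the contraction parametrization $W=S^{1/2}KT^{1/2}$, then trace-norm duality with the polar factor for attainment, and for \eqref{eq:O6} the reduction to $Y^2\preceq S$ plus operator monotonicity of the square root. Your direct treatment of singular inputs via the general contraction factorization (Douglas) is the same range-inclusion argument that the paper phrases as restricting to the supports using $\ker S\subseteq\ker W^*$ and $\ker T\subseteq\ker W$; despite your opening sentence, no limit is needed, and you correctly do not use one.
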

\begin{proof}
To prove it first take \(S,T\succ0\). Congruence by
\(\operatorname{diag}(S^{-1/2},T^{-1/2})\) shows that block
positivity is equivalent to \(W=S^{1/2}CT^{1/2}\) with
\(\|C\|\le1\): the Schur complement is \(I-C^*C\succeq0\).
The maximum of
\(\operatorname{Re}\operatorname{Tr}(T^{1/2}S^{1/2}C)\)
over contractions is its trace norm. Indeed a singular-value
decomposition bounds this quantity by the sum of singular values,
and the adjoint polar factor attains that sum. This trace norm is
\(\operatorname{Tr}\sqrt{S^{1/2}TS^{1/2}}\).

If \(S\) or \(T\) is singular, positivity of the block forces
\(\ker S\subseteq\ker W^*\) and \(\ker T\subseteq\ker W\).
For example, apply its nonnegative quadratic form to \((v,t w)\)
with \(v\in\ker S\) and both signs and phases of small \(t\).
Restrict to the two supports and apply the preceding proof there,
extending the resulting contraction by zero. Equivalently one can
use inverses on these supports. Thus \eqref{eq:O5} is exact at singular inputs.

The Schur complement gives \(Y^2\preceq S\). Operator monotonicity
of the square root gives \(Y\preceq\sqrt S\), hence the upper bound.
It is attained by $Y=\sqrt S$. The Schur-complement and order
facts used here were proved in Lemma~\ref{lem:standardorder}.
\end{proof}
\begin{proposition}[The actual value is an SDP]\label{prop:sdp}
Set $M_x=J\widehat H(x)$. The potential equals the
following optimum over $D\times D$ matrices:
\begin{equation}\label{eq:O7}
 \begin{array}{ll}
 \text{maximize}_{S=S^*,\,Y=Y^*,\,W}
 &\operatorname{Tr}(M_xS)+2\operatorname{Re}\operatorname{Tr}W
                         +2\theta\operatorname{Tr}Y\\[2mm]
 \text{subject to}
 &\operatorname{Tr}S=1,\qquad Y\succeq0,\\[1mm]
 &\begin{pmatrix}S&W\\W^*&\Omega_x(S)\end{pmatrix}\succeq0,
 \qquad
 \begin{pmatrix}S&Y\\Y&I_D\end{pmatrix}\succeq0.
 \end{array}
\end{equation}
Its direct real formulation has $4D^2-1=16m^2-1$ affine variables
and one block-diagonal pencil of order $10D=20m$.
\end{proposition}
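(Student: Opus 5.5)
The plan is to substitute the two identities of Lemma~\ref{lem:sdpidentities} into the definition \eqref{eq:potential} and then exchange the order of the maxima.

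Fix a density $S$, so $S\succeq0$ and $\Tr S=1$. Complete positivity (Lemma~\ref{lem:kraus}) gives $\Omega_x(S)\succeq0$, so \eqref{eq:O5} applies with $T=\Omega_x(S)$, including when the source is singular or vanishes at a face. It gives
\[
2\Fid(S,\Omega_x(S))=\max_W 2\Re\Tr W
\]
over $W$ satisfying the first block constraint. Likewise \eqref{eq:O6} gives $2\theta\Tr\sqrt S=\max 2\theta\Tr Y$ over $Y\succeq0$ satisfying the second block constraint. The two inner maxima involve independent variables $W$ and $Y$, and both are attained. Hence for fixed $S$ the joint maximum over $(W,Y)$ of the objective in \eqref{eq:O7} equals $L(x,S)$.

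Next we maximize over $S$. The first block constraint itself forces $S\succeq0$, so together with $\Tr S=1$ the feasible $S$ are exactly the densities. Taking the supremum over $S$ of the inner maxima therefore gives $\mathscr F(x)$. The optimum of \eqref{eq:O7} is attained at the unique optimizer of Lemma~\ref{lem:transport} on an open face. On the closed cube it is attained because $L(x,\cdot)$ is continuous on a compact set (Lemma~\ref{lem:continuity}), and one then takes the attaining $W,Y$. This settles the value identity.

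It remains to count variables for the real formulation. A Hermitian $D\times D$ matrix has $D^2$ real parameters, and a general complex $W$ has $2D^2$. This gives $D^2$ for $S$, $D^2$ for $Y$, and $2D^2$ for $W$, totalling $4D^2$; the affine trace constraint removes one, leaving $4D^2-1=16m^2-1$. The entries of $\Omega_x(S)$ are linear in $S$, so both block matrices are affine in these variables.

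For the pencil, a complex Hermitian $N\times N$ block is positive semidefinite iff its real embedding $\begin{pmatrix}\Re&-\Im\\ \Im&\Re\end{pmatrix}$ of order $2N$ is. The two $2D\times2D$ blocks thus contribute $4D$ each, and the $D\times D$ block $Y\succeq0$ contributes $2D$, for a total order of $10D=20m$.

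The only delicate point is that the singular-input cases are exact. This is already supplied by Lemma~\ref{lem:sdpidentities}, so no real obstacle is expected beyond this bookkeeping.
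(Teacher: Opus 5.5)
Your proposal is correct and follows the paper's proof. For each fixed density you identify the two independent, attained inner maximizations via \eqref{eq:O5}--\eqref{eq:O6}, which are exact at singular sources; you note that the top-left block already forces $S\succeq0$; and you count $4D^2-1$ real affine variables and realified block orders $4D+4D+2D=10D$, exactly as the paper does (the block orders also appear in Lemma~\ref{lem:materialize}).
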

\begin{proof}
Equations~\eqref{eq:O5}--\eqref{eq:O6} identify the two auxiliary
maximizations at every density.
The top-left blocks already impose \(S\succeq0\). The objective
and every constraint are affine in the real coordinates of the
Hermitian matrices \(S,Y\) and the complex matrix \(W\). For each
fixed \(S\), the two maximizations over \(W,Y\) are independent
and attain \eqref{eq:O5}--\eqref{eq:O6}, proving equality with the actual potential \eqref{eq:potential}.
There are \(4D^2-1\) real affine variables, two PSD blocks of size
\(2D\), and one PSD block of size \(D\). Complex Hermitian cones
may be replaced by real symmetric cones via the usual realification.
For \(\lambda>0\), replace only the lower-right source block in
\eqref{eq:O7} by \(\Omega_x(S)+\lambda I_D\), and call its optimum
\(\mathscr F_{\lambda}(x)\). Then
\begin{equation}\label{eq:O8}
 0\le\mathscr F_{\lambda}(x)-\mathscr F(x)
       \le2\sqrt{\lambda D}.
\end{equation}
The perturbation bound is uniform in $S$, including singular densities.
Set $A=S^{1/2}\Omega_x(S)S^{1/2}$ and write
$\|C\|_1=\Tr\sqrt{C^*C}$ for the trace norm of a possibly rectangular
matrix $C$. The trace norm of a horizontal
block matrix gives
\[
 \begin{split}
 \operatorname{Tr}\sqrt{A+\lambda S}
 &=\left\|[A^{1/2},\sqrt\lambda S^{1/2}]\right\|_1\\
 &\le\|[A^{1/2},0]\|_1+
            \|[0,\sqrt\lambda S^{1/2}]\|_1\\
 &=\operatorname{Tr}\sqrt A+
            \sqrt\lambda\operatorname{Tr}\sqrt S\\
 &\le\operatorname{Tr}\sqrt A+\sqrt{\lambda D}.
 \end{split}
\]
The trace-norm triangle inequality follows immediately from the same
contraction duality used in \eqref{eq:O5}. Monotonicity supplies the lower
bound. Multiply by two and maximize over the unchanged trace-one
set to obtain \eqref{eq:O8}.

For requested error \(0<\nu\le1\), choose
\begin{equation}\label{eq:O9}
 \lambda=\frac{\nu^2}{16D}.
\end{equation}
The perturbation error is at most \(\nu/2\). Therefore a value of
the regularized SDP accurate to \(\nu/2\) is accurate to \(\nu\)
for the original actual potential. This regularization is used only
inside the value oracle; it does not replace the source or its weights in
the mathematical walk.
\end{proof}
\begin{remark}[Which value program the algorithm queries]\label{rem:valueprogram}
The algorithm in Theorem~\ref{thm:main} uses the exact program
\eqref{eq:O7} under Model~\ref{model:solver}. The positive source
perturbation in \eqref{eq:O8}--\eqref{eq:O9} is an optional realization
of accurate values by a strictly feasible program. It is internal to
that realization and is never added to the source weights in the walk.
\end{remark}
The programs have polynomial-size scalar descriptions. Their explicit
coordinate construction and a strictly feasible perturbed realization
are given in Appendix~\ref{sec:sdpdetails}. The next duality argument
uses the value perturbation bound \eqref{eq:O8} to handle source
support loss.

\subsection{A dual derived from the same variational identities}
The value representation also has an explicit dual. We derive equality
of the two values directly, including singular sources, to clarify
which optimization is being differentiated in the analytic proof.
The Lagrangian calculation follows the standard framework for
generalized inequalities and SDP duality
\cite[Section~5.9]{boydvandenberghe2004}; equality for our potential
is established below by an explicit transport construction.
In this subsection write $M_x=J\widehat H(x)$ and keep
$x$ fixed.

\begin{proposition}[Transport and semidefinite dual]\label{prop:dual}
For full positive definite $D\times D$ matrices $Z,Q$, the potential
equals the following infimum:
\begin{equation}\label{eq:transportdual}
 \mathscr F(x)=\inf_{Z\succ0,\ Q\succ0}
 \left\{\lambda_{\max}(M_x+Z^{-1}+\Omega_x(Z)+Q)
                         +\theta^2\Tr Q^{-1}\right\}.
\end{equation}
The inverse variables in this infimum occupy the full $D$-dimensional
space; an infimum need not be attained when the source is singular.
Equivalently, it is the semidefinite dual value
\begin{equation}\label{eq:conicdual}
 \begin{array}{ll}
 \text{infimize}_{\lambda\in\R,\ P=P^*,\ Z=Z^*,\ Q=Q^*,\ R=R^*}
       &\lambda+\Tr R\\[1mm]
 \text{subject to}
       &M_x+P+\Omega_x(Z)+Q\preceq\lambda I_D,\\[1mm]
       &\begin{pmatrix}P&-I_D\\-I_D&Z\end{pmatrix}\succeq0,
        \qquad
        \begin{pmatrix}Q&-\theta I_D\\-\theta I_D&R\end{pmatrix}
                                                       \succeq0.
 \end{array}
\end{equation}

\end{proposition}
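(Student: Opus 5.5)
We prove the two inequalities between $\mathscr F(x)$ and the infimum in \eqref{eq:transportdual} separately, and then identify \eqref{eq:transportdual} with \eqref{eq:conicdual} by Schur complements.

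\emph{Weak duality.} Fix full $Z,Q\succ0$ and any density $S$. The unconstrained form of \eqref{eq:7} on the full space, the infimum version proved in Lemma~\ref{lem:transport} for singular inputs, gives
\[
 2\Fid(S,\Omega_x(S))\le\Tr(SZ^{-1})+\Tr(\Omega_x(S)Z)
 =\Tr\bigl(S(Z^{-1}+\Omega_x(Z))\bigr),
\]
using self-adjointness of $\Omega_x$. For the regularizer, the scalar inequality $2\theta\sqrt s\le qs+\theta^2/q$ holds for every $q>0$. Its matrix form is
\[
 2\theta\Tr\sqrt S\le\Tr(QS)+\theta^2\Tr Q^{-1}.
\]
To prove it, expand $\Tr\bigl((Q^{1/2}S^{1/2}-\theta Q^{-1/2})^*(Q^{1/2}S^{1/2}-\theta Q^{-1/2})\bigr)\ge0$, noting that the cross term is $-2\theta\Tr\sqrt S$. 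Adding these bounds gives
\[
 L(x,S)\le\Tr\bigl(S(M_x+Z^{-1}+\Omega_x(Z)+Q)\bigr)+\theta^2\Tr Q^{-1}.
\]
Since $\Tr S=1$, the first trace is at most $\lambda_{\max}(M_x+Z^{-1}+\Omega_x(Z)+Q)$. Maximizing over $S$ yields $\mathscr F(x)\le$ infimum.

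\emph{Reverse inequality.} When the source support $\mathcal K$ is the whole space, which holds for instance if every atom is live and nonzero, use the optimizer of Lemma~\ref{lem:transport}: $S\succ0$, the transport $Z\succ0$, and take $Q=\theta S^{-1/2}$. Stationarity \eqref{eq:9} says $M_x+Z^{-1}+\Omega_x(Z)+Q=\lambda I$, so the $\lambda_{\max}$ term equals $\lambda$. Also $\theta^2\Tr Q^{-1}=\theta\Tr\sqrt S$. The proof of Lemma~\ref{lem:floor} gives $\lambda=\mathscr F(x)-\theta\Tr\sqrt S$, so the dual objective equals $\mathscr F(x)$ exactly.

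For general $x$, with a singular source or the closed cube, we perturb. Replacing $\Omega_x$ by $\Omega_x+\eta\,\Tr(\cdot)I$, or equivalently adding a small positive weight on a full-rank Kraus family, makes the source faithful. Denote the resulting potential $\mathscr F_\eta$. Both the primal value and the dual objective are monotone in the source: fidelity is monotone, and the term $\Omega(Z)$ in the dual is monotone. The dual infimum for the true source is therefore at most that for the perturbed source, which equals $\mathscr F_\eta(x)$ by the faithful case. It remains to show $\mathscr F_\eta(x)\to\mathscr F(x)$. Bound this exactly as in \eqref{eq:O8}, via the trace-norm triangle inequality: $\mathscr F_\eta-\mathscr F\le2\sqrt{\eta D}$. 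This step is the main obstacle. The faithful case also needs the existence and positivity of the optimizer, with its transport defined on the full space. I expect to reprove this for the perturbed source by running the argument of Lemma~\ref{lem:transport} verbatim, since that proof only used positivity of the source on its support.

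\emph{Conic form.} In \eqref{eq:conicdual}, the Schur complement with $Z\succ0$ turns the first block into $P\succeq Z^{-1}$. Likewise, with $Q\succ0$ the second block becomes $R\succeq\theta^2Q^{-1}$. Optimal choices take equality. At fixed $P,Z,Q$ the best $\lambda$ is the largest eigenvalue of the constraint matrix. Feasibility forces $Z\succeq0$ with $Z$ invertible on the range needed by the block, and similarly for $Q$. Degenerate $Z$ or $Q$ can be perturbed to positive definite ones at arbitrarily small cost, using monotonicity and continuity of $\lambda_{\max}$. Hence the two infima coincide, which proves the proposition.
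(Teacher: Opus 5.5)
Your proposal is correct and follows the paper's proof. The paper uses the same three ingredients: weak duality from the transport inequality together with $2\theta\Tr\sqrt S\le\Tr(SQ)+\theta^2\Tr Q^{-1}$; attainment at $Q=\theta S^{-1/2}$ with the stationary transport when the source support is full; and the perturbation $\Omega_x+\eta\Tr(\cdot)I_D$ with the $2\sqrt{\eta D}$ bound. Your monotonicity step is exactly the paper's observation that the perturbed optimizer has original dual value $\mathscr F_\eta(x)-\eta\Tr Z_\eta$.

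The only slack is in the conic part. The off-diagonal blocks $-I_D$ and $-\theta I_D$ are invertible, so feasibility already forces $Z\succ0$ and $Q\succ0$, and no perturbation of degenerate $Z,Q$ is needed. The paper additionally derives \eqref{eq:conicdual} as a Lagrangian dual of \eqref{eq:O7}, but that is not needed for the value equality.
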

\begin{proof}
The transport inequality, valid for every full $Z\succ0$, gives
\[
 2\Fid(S,\Omega_x(S))
 \le\Tr(SZ^{-1})+\Tr(\Omega_x(S)Z)
 =\Tr\bigl(S(Z^{-1}+\Omega_x(Z))\bigr).
\]
The last equality is self-adjointness of the explicit source map.
The same variational identity applied to $S$ and $\theta^2I_D$ gives,
for every $Q\succ0$,
\[
 2\theta\Tr\sqrt S\le\Tr(SQ)+\theta^2\Tr Q^{-1}.
\]
Maximizing their sum over trace-one densities gives the right side of
\eqref{eq:transportdual} as an upper bound on $\mathscr F(x)$.
This is weak duality, proved without interchanging a maximum and a
minimum.

Suppose first that the source is positive definite at every faithful
density, as happens when its support is full. Let $S_*$ be the faithful
maximizer and $Z_*$ its full positive transport. Choose
$Q_*=\theta S_*^{-1/2}$. Density stationarity \eqref{eq:9} reads
\[
 M_x+Z_*^{-1}+\Omega_x(Z_*)+Q_*=\lambda_* I_D.
\]
Moreover
$\lambda_*=\mathscr F(x)-\theta\Tr\sqrt{S_*}$ and
$\theta^2\Tr Q_*^{-1}=\theta\Tr\sqrt{S_*}$.
The upper bound is therefore attained with value exactly
$\mathscr F(x)$.

For a possibly singular source, introduce the linear self-adjoint map
\[
 \Omega_\eta(Y)=\Omega_x(Y)+\eta\Tr(Y)I_D,\qquad\eta>0.
\]
On trace-one densities this is the source perturbation in
\eqref{eq:O8}. Its value $\mathscr F_{\eta}(x)$ satisfies
$0\le\mathscr F_{\eta}(x)-\mathscr F(x)\le2\sqrt{\eta D}$.
Let $S_\eta$ be its optimizing density, $Z_\eta$ its full transport,
and $\lambda_\eta$ the multiplier for $\Tr S_\eta=1$ in density
stationarity. Set $Q_\eta=\theta S_\eta^{-1/2}$. The preceding stationarity argument
gives
\[
 M_x+Z_\eta^{-1}+\Omega_x(Z_\eta)+Q_\eta
   =(\lambda_\eta-\eta\Tr Z_\eta)I_D.
\]
Consequently the objective in \eqref{eq:transportdual} at these
variables is
$\mathscr F_{\eta}(x)-\eta\Tr Z_\eta\le\mathscr F_{\eta}(x)$.
Weak duality puts it above $\mathscr F(x)$. Letting $\eta\downarrow0$
proves equality of the infimum with $\mathscr F(x)$ in all cases.
This also explains why a full transport may diverge outside the source
support while the supported analytic transport remains finite.

We next derive the displayed conic dual from a Lagrangian, rather
than merely recognizing its Schur complements. Keep the primal fidelity auxiliary matrix $W$. Replace the primal
Hermitian positive variable $Y$ by an unrestricted complex matrix $V$,
use the block $\left(\begin{smallmatrix}S&V\\V^*&I_D\end{smallmatrix}\right)$,
and replace $\Tr Y$ in the objective by $\Re\Tr V$.
This is an equivalent value program: \eqref{eq:O5} with $T=I_D$
shows that its maximum over $V$ is $\Tr\sqrt S$, attained at
$V=\sqrt S$, which was already feasible in the original program.
Keep the redundant constraint $S\succeq0$ explicitly. Thus the primal
value is unchanged, though the auxiliary coordinates used only for
this dual derivation differ from the implemented program.

Introduce positive semidefinite block multipliers
\[
 U_1=\begin{pmatrix}P&C\\C^*&Z\end{pmatrix},\qquad
 U_2=\begin{pmatrix}Q&E\\E^*&R\end{pmatrix}
\]
and a real scalar $\lambda$ for $\Tr S=1$. The upper-bounding
Lagrangian is the primal objective plus the trace pairings with the
two positive blocks and $\lambda(1-\Tr S)$. Expanding it gives
\[
\begin{split}
 \mathcal L_{\rm dual}(S,W,V)
 ={}&\lambda+\Tr R
 +\Tr\bigl((M_x+P+\Omega_x(Z)+Q-\lambda I_D)S\bigr)\\
 &+2\Re\Tr((I_D+C^*)W)
   +2\Re\Tr((\theta I_D+E^*)V).
\end{split}
\]
Here the symbol $\mathcal L_{\rm dual}$ denotes only this Lagrangian.
Its supremum over unrestricted complex $W,V$ is finite precisely when
$C=-I_D$ and $E=-\theta I_D$. Its supremum over $S\succeq0$ is
finite precisely when the coefficient of $S$ is negative semidefinite;
otherwise a rank-one $S$ of arbitrarily large size makes it diverge.
Under these conditions the supremum equals $\lambda+\Tr R$.
Minimizing over the multipliers gives exactly \eqref{eq:conicdual}.
In particular the sign, transpose, and factor two in each off-diagonal
block follow from the trace pairing. This calculation proves weak
conic duality directly.

Finally, block positivity in the first block constraint of
\eqref{eq:conicdual} forces $Z\succ0$, because its off-diagonal block
has no kernel. The Schur complement then says $P\succeq Z^{-1}$.
Similarly the second block forces $Q\succ0$ and
$R\succeq\theta^2Q^{-1}$, since $\theta>0$.
Thus every feasible dual tuple has objective at least the expression
in \eqref{eq:transportdual} for its $Z,Q$. Conversely, for any such
$Z,Q$, take
\[
 P=Z^{-1},\qquad R=\theta^2Q^{-1},\qquad
 \lambda=\lambda_{\max}(M_x+P+\Omega_x(Z)+Q).
\]
Both blocks are positive semidefinite by their Schur complements and
the first constraint holds. This proves equality of the two infima. Combined with the preceding
explicit transport construction, it also proves that this Lagrangian
dual has no value gap with the primal. Full source support gives an
attaining dual tuple; for singular sources the positive perturbations
give a sequence of feasible dual tuples whose values converge to the
primal optimum. The approximating dual tuples establish equality even when the
unperturbed primal has no strictly feasible point.
\end{proof}

The dual variables recover the density--transport stationarity system
used in the analytic argument. The primal, meanwhile, supplies the
value interface used by the algorithm. This separates the role of
the variational equations in the proof from the information that must
be returned by the solver.

\section{The finite value-query walk}\label{sec:walk}

We verify Algorithm~\ref{alg:trial}. Its value tolerances ensure that
an accepted outward report has small true cost, while rejection of
every report preserves the curvature hypotheses. Finite differences
then approximate the raw Hessian. Restricting it to the space
orthogonal to the current coefficients makes progress independent of
the selected sign.

The scales are those in \eqref{eq:algorithmsetup}, with $M$ given by
\eqref{eq:regularitybudget}. Appendix~\ref{sec:regularity} proves that,
at a prepared state and along a live direction $b$ with $|b_i|\le2$,
the second through fourth derivatives of $\mathscr F(x+tb)$ are bounded
by $M$ for $|t|\le\rho/16$. This covers all value stencils and movement
segments used below.

\subsection{Local tests imply constrained curvature}
At a prepared state put $s_i=\operatorname{sgn}_+(x_i)$ and consider
$x^{(i)}=x+as_ie_i$ for each live label. Query the actual potential at
$x$ and these candidates to accuracy $\nu_{\rm loc}=\beta h^2/8$.

\begin{lemma}[The local value test]\label{lem:localtest}
An accepted outward movement has actual potential increase at most
$\beta h^2/2$. If every candidate is rejected, then every live atom is
nonzero and $\ell_iq_i<1$ for every live label, where
$\ell_i=[c(x_i)-c(x_i+as_i)]/a$. In that case the raw Hessian restricted
to the live part of $x^\perp$ has a negative eigenvalue.
\end{lemma}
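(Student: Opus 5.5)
The argument has three parts, each read off the acceptance rule $\widehat f_i-\widehat f_0\le2\nu_{\rm loc}$ together with the accuracy guarantee $|\Call{Value}{y,\nu_{\rm loc}}-\mathscr F(y)|\le\nu_{\rm loc}$ from Model~\ref{model:solver}.

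\emph{Accepted moves.} Suppose the test accepts label $i$. Then
\[
 \mathscr F(x^{(i)})-\mathscr F(x)\le(\widehat f_i+\nu_{\rm loc})-(\widehat f_0-\nu_{\rm loc})\le4\nu_{\rm loc}=\beta h^2/2 .
\]
This is the first claim.

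\emph{Rejected moves.} Suppose instead that every candidate is rejected. For each live label,
\[
 \mathscr F(x^{(i)})-\mathscr F(x)\ge(\widehat f_i-\nu_{\rm loc})-(\widehat f_0+\nu_{\rm loc})>2\nu_{\rm loc}-2\nu_{\rm loc}=0,
\]
so every outward candidate strictly increases the actual potential. Three consequences follow.
\begin{enumerate}[label=(\alph*),leftmargin=*]
\item \emph{No live atom is zero.} Suppose $A_i=0$ for a live label $i$. Moving $x_i$ leaves $J\widehat H$ unchanged, and it leaves the source unchanged because $B_i=0$. Hence $L(x^{(i)},S)=L(x,S)$ for every density, the two potentials are equal, and the candidate would have been accepted. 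This is Remark~\ref{rem:zero}.
\item \emph{The candidates are legitimate.} The state is prepared, so every live margin exceeds $\rho>a$. Each outward candidate therefore stays on the current open face, and the hypothesis $1-|x_i|\ge a$ of Lemma~\ref{lem:sm:outward} holds. The probes $p_i,q_i$ at the unique faithful optimizer of Lemma~\ref{lem:transport} are then positive for all live labels.
\item \emph{The secant inequalities fail.} By contraposition, if $\ell_iq_i\ge1$ held for some $i$, Lemma~\ref{lem:sm:outward} would give $\mathscr F(x^{(i)})\le\mathscr F(x)$, contradicting strict increase. Hence $\ell_iq_i<1$ for every live label.
\end{enumerate}
One bookkeeping point needs care. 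The algorithm uses $s_i=\operatorname{sgn}_+(x_i)$, and this agrees with the choice in Lemma~\ref{lem:sm:outward}, namely $s_ix_i=|x_i|$ with $s_i=1$ at zero. The secant $\ell_i=[c(x_i)-c(x_i+as_i)]/a$ equals $[c(r)-c(r+a)]/a$ because $c$ is even.

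\emph{Negative curvature.} There is at least one live label, since otherwise the loop would have stopped, and by (a) it is nonzero. Every live margin is at least $a$, and $\ell_iq_i\le1$ on all nonzero live labels. Corollary~\ref{thm:sm:curvature} therefore applies and gives a unit $g$ supported on the live labels with $x^Tg=0$ and $g^T\nabla^2\mathscr F(x)g<0$. The raw Hessian restricted to the live part of $x^\perp$ then has a negative Rayleigh quotient, hence a negative eigenvalue.

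The only step needing genuine care is (a)–(b): the zero-atom exclusion and checking that the prepared-state margins meet the hypotheses of Lemma~\ref{lem:sm:outward}. The rest is the triangle inequality on the reported values combined with the earlier lemmas.
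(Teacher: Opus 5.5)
Your proposal is correct and follows the paper's proof essentially step for step. The triangle inequality on the two reports gives the $4\nu_{\rm loc}=\beta h^2/2$ bound for an accepted move, and strict actual increase when every candidate is rejected. Zero atoms are excluded because their candidates leave the potential unchanged and would be accepted. The contrapositive of Lemma~\ref{lem:sm:outward} then gives $\ell_iq_i<1$, and Corollary~\ref{thm:sm:curvature} finishes. Your extra bookkeeping fills in details the paper leaves implicit: matching the sign convention via evenness of $c$, the prepared-state margin $\rho>a$, and the existence of a live label.
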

\begin{proof}
Two value errors contribute at most $2\nu_{\rm loc}$ to a difference.
Acceptance at reported difference at most $2\nu_{\rm loc}$ therefore
implies an actual increase at most $4\nu_{\rm loc}=\beta h^2/2$.
Rejection implies a strictly positive actual difference.
Lemma~\ref{lem:sm:outward} says that $\ell_iq_i\ge1$ would make that
candidate nonincreasing, so this inequality must fail. A zero atom
would leave the potential unchanged, and its reported difference
would be at most $2\nu_{\rm loc}$; thus no zero atom remains in this
branch. Corollary~\ref{thm:sm:curvature} now applies.
\end{proof}

\subsection{Computing the tangent frame}
Let $I=\{i_1<\cdots<i_k\}$ be the live labels and let
$z=(x_{i_1},\ldots,x_{i_k})\in\R^k$. The matrix $Q_x$ defined in
Section~\ref{sec:earlyalgorithm} has either $k$ columns, when $z=0$,
or $k-1$ columns otherwise.

\begin{lemma}[A scalar construction of the frame]\label{lem:tangentframe}
The columns of $Q_x$ are an orthonormal basis of $z^\perp$.
It can be computed using polynomially many scalar arithmetic operations,
comparisons, and nonnegative square roots. No spectral decomposition
is needed for this construction.
\end{lemma}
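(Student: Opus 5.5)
We split into the two cases of the construction. If $z=0$, then $Q_x=I_k$ and $z^\perp=\R^k$, so nothing needs proving beyond noting that no arithmetic is required. Otherwise we have $\|z\|_2>0$. Computing $\widehat z=z/\|z\|_2$ uses $k$ multiplications, $k-1$ additions, one nonnegative square root and $k$ divisions. We then set $w=\widehat z+\operatorname{sgn}_+(\widehat z_1)e_1$, which needs one comparison to evaluate $\operatorname{sgn}_+$.

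The first point to check is that $\|w\|_2^2\ge2$, so the reflection is well defined. Expanding gives
\[
\|w\|_2^2=\|\widehat z\|_2^2+2\operatorname{sgn}_+(\widehat z_1)\widehat z_1+1=2+2|\widehat z_1|\ge2,
\]
since $\operatorname{sgn}_+(\widehat z_1)\widehat z_1=|\widehat z_1|$ in both sign cases, including $\widehat z_1=0$. Hence $R_x=I_k-2ww^T/\|w\|_2^2$ is a Householder reflection. It is symmetric and satisfies $R_x^2=I_k$, so it is orthogonal and its columns form an orthonormal basis of $\R^k$. Forming $R_x$ entrywise costs $O(k^2)$ scalar operations.

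Next we identify the first column $R_xe_1$. Write $s=\operatorname{sgn}_+(\widehat z_1)$. Then $w^Te_1=\widehat z_1+s$ and $\|w\|_2^2=2(1+s\widehat z_1)=2s(\widehat z_1+s)$, using $s^2=1$. Therefore
\[
R_xe_1=e_1-\frac{2(\widehat z_1+s)}{2s(\widehat z_1+s)}\,w=e_1-s\,w=-s\,\widehat z .
\]
The cancellation is legitimate because $\widehat z_1+s\ne0$, which follows from $|\widehat z_1+s|=1+|\widehat z_1|\ge1$. So the first column is a signed copy of $\widehat z$. Since the columns are orthonormal, columns $2,\ldots,k$ are $k-1$ orthonormal vectors orthogonal to $\widehat z$. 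They therefore span the $(k-1)$-dimensional space $z^\perp$.

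Counting operations, the whole construction uses $O(k^2)$ scalar arithmetic operations, one comparison and one square root, and no call to \textsc{EVD}. The only potential obstacle is the sign choice that keeps $\|w\|_2$ bounded away from zero, which is exactly what the computation of $\|w\|_2^2$ above verifies. When $k=1$ and $z\ne0$, the frame is empty, consistent with $z^\perp=\{0\}$.
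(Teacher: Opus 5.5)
Your proposal is correct and follows the paper's proof essentially step for step. Both use the Householder reflection with $w=\widehat z+\operatorname{sgn}_+(\widehat z_1)e_1$, the bound $\|w\|_2^2=2(1+|\widehat z_1|)\ge2$, the identity $R_xe_1=-s\widehat z$, and orthogonality of the remaining $k-1$ columns to $\widehat z$; your explicit cancellation check and operation count just spell out what the paper calls direct substitution.
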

\begin{proof}
The claim is immediate for $z=0$. Otherwise let
$q=z/\|z\|_2$, $s=\operatorname{sgn}_+(q_1)$ and $w=q+se_1$.
Then
\[
 \|w\|_2^2=2(1+s q_1)=2(1+|q_1|)\ge2.
\]
The matrix $R=I-2ww^T/\|w\|_2^2$ is symmetric and satisfies $R^2=I$.
Direct substitution gives $Re_1=-sq$. Its other columns are therefore
orthogonal to $q$ and form an orthonormal basis of $q^\perp=z^\perp$.
These are precisely the columns retained in $Q_x$.

The construction computes a Euclidean norm, one sign, the entries of
$w$, and the displayed rank-one formula for $R$. The normalization
uses $\|z\|_2>0$ only on the nonzero branch, and the remaining denominator
is at least two. All operations belong to Model~\ref{model:solver}.
\end{proof}

If all outward tests fail, Lemma~\ref{lem:localtest} guarantees a nonzero
direction in $z^\perp$. In particular the compressed matrix below has
positive order, so selecting its smallest eigenvalue is well-defined.

\subsection{Finite differences and direction selection}\label{sec:stencils}
\begin{lemma}[A finite approximate curvature direction]\label{lem:direction}
At a prepared state where every local value test fails, the value
stencils \eqref{eq:algorithmstencils}, tangent frame, and one exact
eigendecomposition produce a real unit live vector $g$ satisfying
\[
 x^Tg=0,\qquad g^TK_0(x)g<\beta/2,
 \qquad K_0(x)=\tfrac12\nabla^2\mathscr F(x).
\]
\end{lemma}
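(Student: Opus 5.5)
The plan is to combine three ingredients: the existence of a strictly negative direction from Lemma~\ref{lem:localtest}, a uniform entrywise error bound for the finite-difference matrix $\widetilde K$ against $K_0(x)$, and the variational characterization of the smallest eigenvalue of the compressed matrix $Q_x^T\widetilde KQ_x$.

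\textbf{Step 1: entrywise stencil error.} At a prepared state every live coordinate has margin greater than $\rho$, and $t\le a=\rho/16$. Hence every stencil point $x\pm te_{i_p}$ and $x\pm te_{i_p}\pm te_{i_q}$ lies on the segment $x+sb$ with $|s|\le t$. Here $b=e_{i_p}$ or $b=e_{i_p}\pm e_{i_q}$, so $|b_i|\le2$ and $b$ is live.

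The regularity bound from Appendix~\ref{sec:regularity} controls the second through fourth derivatives along these segments by $M$. We use only the third- or fourth-derivative Taylor remainder.

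For the diagonal entries, the exact central second difference of $F(s)=\mathscr F(x+se_{i_p})$ equals $F''(0)$ up to $Mt^2/12$. The value errors contribute at most $4\nu_{\rm H}/t^2$.

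For the off-diagonal entries, write the mixed stencil as $[F_+(t)+F_+(-t)-F_-(t)-F_-(-t)]/(4t^2)$, where $F_\pm(s)=\mathscr F(x+s(e_{i_p}\pm e_{i_q}))$. The second derivatives satisfy $F_+''(0)-F_-''(0)=4\partial_{pq}\mathscr F$. Each symmetric pair therefore reproduces its second derivative up to $O(Mt^2)$, and the value errors contribute at most $4\nu_{\rm H}/(4t^2)$.

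Halving, every entry of $\widetilde K-K_0(x)$ has size at most roughly $Mt^2/12+2\nu_{\rm H}/t^2$. With $t^2\le\beta/(128nM)$ and $\nu_{\rm H}=\beta t^2/(128n)$, this is at most $\beta/(128n)+\beta/(64n)<\beta/(32n)$.

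\textbf{Step 2: operator-norm error.} Since $\widetilde K$ and $K_0$ are $k\times k$ with $k\le n$, we get $\|\widetilde K-K_0\|\le\|\widetilde K-K_0\|_{\rm F}\le k\cdot\beta/(32n)\le\beta/32$. The same bound holds after compression by the isometry $Q_x$.

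\textbf{Step 3: direction selection.} By Lemma~\ref{lem:localtest}, every live atom is nonzero and there is a live unit $g_0\in x^\perp$ with $g_0^TK_0g_0<0$. By Lemma~\ref{lem:tangentframe}, $g_0=Q_xc$ with $\|c\|_2=1$, so $Q_x$ has at least one column.

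The minimal eigenvalue satisfies $\Lambda_{p_*p_*}\le c^TQ_x^T\widetilde KQ_xc<\beta/32$. The eigenvector $V_{\cdot p_*}$ is a unit vector, and $Q_x$ has orthonormal columns. So $g$, obtained by placing $Q_xV_{\cdot p_*}$ on the live labels, is a real unit vector with $z^Tg=0$, hence $x^Tg=0$ since $g$ vanishes on frozen labels. Then
\[
g^TK_0g\le\Lambda_{p_*p_*}+\beta/32<\beta/16<\beta/2.
\]

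\textbf{Main obstacle.} The only delicate step is Step~1. One must check that all stencil points stay within the region covered by the appendix bound; this uses $|b_i|\le2$, $t\le\rho/16$, and preparedness. One must also track the constants in the mixed-difference remainder carefully enough that the $t$ and $\nu_{\rm H}$ choices in \eqref{eq:algorithmsetup} absorb them, but there is ample slack (a factor of about $8$).
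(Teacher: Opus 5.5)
Your proof is correct and follows the paper's argument. It uses the same four ingredients: Taylor-remainder and value-error bounds for the two stencils, an operator-norm bound of order $k$ times the entrywise error, compression by the isometry $Q_x$, and a Rayleigh-quotient comparison against the negative direction supplied by Lemma~\ref{lem:localtest}. The differences are cosmetic: you pass through the Frobenius norm and obtain $\beta/32$ where the paper states $\beta/16$, and you test $Q_x^T\widetilde KQ_x$ on the specific vector $c$ rather than comparing the two minimum eigenvalues directly.
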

\begin{proof}
A centered diagonal stencil has error at most
$Mt^2/12+4\nu_{\rm H}/t^2$. The four-point mixed stencil has error at
most $Mt^2/12+\nu_{\rm H}/t^2$, by symmetric Taylor expansion along
$e_i+e_j$ and $e_i-e_j$. All stencil points remain on the live face.
The real symmetric Hessian estimate $\widetilde{\mathcal H}$ therefore
satisfies, with $\widetilde K=\widetilde{\mathcal H}/2$,
\[
 \|\widetilde K-K_0(x)\|
 \le\frac{k}{2}\left(\frac{Mt^2}{12}
                         +\frac{4\nu_{\rm H}}{t^2}\right)
 \le\frac{\beta}{16}.
\]
Here we used $k\le n$, $Mt^2\le\beta/(128n)$, and
$\nu_{\rm H}=\beta t^2/(128n)$.

Put $B=Q_x^TK_0(x)Q_x$ and $\widetilde B=Q_x^T\widetilde KQ_x$.
Since $Q_x$ is an isometry, $\|B-\widetilde B\|\le\beta/16$.
Lemma~\ref{lem:localtest} gives $\lambda_{\min}(B)<0$.
Apply the exact EVD primitive to $\widetilde B$ and take its unit
column $v$ with the smallest signed eigenvalue, using the stipulated
tie rule. Extend $g=Q_xv$ by zero on frozen labels. Then
\[
 \|g\|_2=1,\quad x^Tg=0,\qquad
 g^TK_0(x)g=v^TBv
 \le\lambda_{\min}(B)+2\beta/16<\beta/2.
\]
The two error terms compare the selected Rayleigh value and the
minimum eigenvalue of the true and estimated compressed matrices.
\end{proof}

\begin{remark}[Hessian evaluation by implicit differentiation]
\label{rem:implicit-hessian}
The value stencils are one way to compute the curvature.
At a fixed open live face, given the optimizing density $S$ and
supported transport $Z$, one can instead differentiate the nonlinear
stationarity equations \eqref{eq:nonlinearresponse} once and solve
\eqref{eq:coupledresponse} for their first responses $X,U,\dot\lambda$.
Substitution in \eqref{eq:second-envelope}, including its explicit
source-curvature term, gives $\mathscr F''(x)[h,h]$; polarization then
recovers $K_0=\tfrac12\nabla^2\mathscr F(x)$.
The solve keeps $X$ on the full trace-zero density tangent and
restricts only the transport response $U$ to the source support.
No second derivatives of the optimizers are needed.

To replace the value-stencil routine in the algorithm, a numerical
implementation of these solves would need to certify
$\|\widetilde K-K_0\|\le\beta/16$, accounting for optimizer and
linear-solve errors, with polynomial work in Model~\ref{model:solver}.
The finite-difference implementation above establishes this accuracy
using only SDP value queries.
\end{remark}

\begin{lemma}[Implemented deterministic drift]\label{lem:movement}
Every implemented movement, measured before subsequent rounding, has
actual potential increase at most $\beta h^2$.
\end{lemma}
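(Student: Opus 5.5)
The proof splits according to the two branches of Algorithm~\ref{alg:trial}, and in each branch the movement is measured before the subsequent \textsc{Prepare}.

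\emph{Outward branch.} Suppose some outward candidate is accepted, so the algorithm moves to $x+a\operatorname{sgn}_+(x_{i_*})e_{i_*}$. Lemma~\ref{lem:localtest} applies directly. Acceptance means the reported difference is at most $2\nu_{\rm loc}$, and each report has error at most $\nu_{\rm loc}$. Hence the actual increase is at most $4\nu_{\rm loc}=\beta h^2/2\le\beta h^2$.

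\emph{Curvature branch.} Suppose every outward test fails. Lemma~\ref{lem:direction} then gives a real unit live vector $g$ with $x^Tg=0$ and $g^TK_0(x)g<\beta/2$. I would apply Lemma~\ref{lem:symmetric} with $\nu=\nu_{\rm loc}$ and $M_{\rm loc}=M$.

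\emph{Checking the hypothesis of Lemma~\ref{lem:symmetric}.} This is the one point needing care. The fourth-derivative bound of Appendix~\ref{sec:regularity} holds along live directions $b$ with $|b_i|\le2$ for $|t|\le\rho/16=a$. A unit vector satisfies $|g_i|\le1$. The step satisfies $h\le a$, so the segment $x+tg$, $|t|\le h$, stays on the current open live face. The state is prepared, so every live margin exceeds $\rho>a$. Hence $\bigl|\tfrac{d^4}{dt^4}\mathscr F(x+tg)\bigr|\le M$ on the segment.

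\emph{Bounding the chosen candidate.} Lemma~\ref{lem:symmetric} now gives an actual increase of at most
\[
 h^2 g^TK_0(x)g+\frac{Mh^4}{24}+2\nu_{\rm loc}
 <\frac{\beta h^2}{2}+\frac{Mh^4}{24}+\frac{\beta h^2}{4}.
\]
The choice $h\le\sqrt{\beta/(4M)}$ gives $Mh^2\le\beta/4$, so $Mh^4/24\le\beta h^2/96$. The total is therefore below $(1/2+1/4+1/96)\beta h^2<\beta h^2$.

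\emph{The negative case.} If $g^TK_0(x)g$ is negative, the bound only improves, since the estimate uses $g^TK_0(x)g$ as an upper bound rather than a lower one.

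The only real obstacle is the domain check for the Taylor bound, and it reduces to the inequality $h\le a<\rho$ together with $|g_i|\le1$.
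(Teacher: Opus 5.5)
Your proof is correct and follows the paper's argument. The outward branch is Lemma~\ref{lem:localtest}, and the curvature branch combines Lemma~\ref{lem:direction} with Lemma~\ref{lem:symmetric} at $M_{\rm loc}=M$, $\nu=\nu_{\rm loc}$, using the derivative bound of Appendix~\ref{sec:regularity} on the segment; your use of $Mh^2\le\beta/4$ (rather than the paper's weaker $Mh^2\le\beta$) only sharpens the constant.
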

\begin{proof}
For a curvature movement, Lemmas~\ref{lem:symmetric} and
\ref{lem:direction}, together with the two candidate reports of
accuracy $\nu_{\rm loc}$, give
\[
 \mathscr F(x_{\rm new})-\mathscr F(x)
 \le\frac{\beta h^2}{2}+\frac{Mh^4}{24}+2\nu_{\rm loc}
 \le\left(\frac12+\frac1{24}+\frac14\right)\beta h^2
 <\beta h^2.
\]
We used $Mh^2\le\beta$, which follows from the chosen movement scale.
The outward branch has increase at most $\beta h^2/2$ by
Lemma~\ref{lem:localtest}. The subsequent rounding cost is covered
separately by $\mathscr R$ in Lemma~\ref{lem:rounding}.
\end{proof}

\subsection{Verification of the full walk}
\begin{proposition}[Implementation of the analytic walk]
\label{prop:implementedtrial}
Algorithm~\ref{alg:trial} satisfies Definition~\ref{def:analytictrial}.
\end{proposition}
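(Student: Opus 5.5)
The plan is to check the three items of Definition~\ref{def:analytictrial} one at a time against the lines of Algorithm~\ref{alg:trial}. The common step $h$ from \eqref{eq:algorithmsetup} satisfies $0<h\le a$, and the horizon $T=\lceil16n/h^2\rceil$ and $\beta=\delta/(100n)$ are the same constants used in the definition. The walk starts at $x=0$ and stops either when every label is frozen or when $j=T$, so the start and stopping rules agree.

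\textbf{Item 1.} The first line applies $\Call{Prepare}{0}$, and the last line of each loop iteration applies $\Call{Prepare}{x}$. $\Call{Prepare}{}$ rounds every live coordinate within $\rho$ of an endpoint, in label order, using $\operatorname{sgn}_+$. Rounding one label does not change any other coordinate, so after one pass no live coordinate lies within $\rho$ of an endpoint. This is exactly the prepared condition required by \eqref{eq:rounding}.

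\textbf{Item 2.} At a nonterminal prepared state the algorithm takes one of two branches.
\begin{itemize}
\item If $i_*$ exists, it makes an outward increment of length $a$. The prepared margin exceeds $\rho>a$, so the increment stays interior.
\item Otherwise every outward test failed. Lemma~\ref{lem:localtest} then shows that all live atoms are nonzero and that the restricted Hessian has a negative eigenvalue. Lemma~\ref{lem:tangentframe} guarantees that $Q_x$ has orthonormal columns spanning $z^\perp$. Lemma~\ref{lem:direction} then shows that the constructed $g$ is a real unit vector supported on live labels with $x^Tg=0$. Its proof uses the frame as an isometry and extends $g$ by zero on frozen labels. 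The algorithm moves to $x\pm hg$ by comparing the reports $q_\pm$.
\end{itemize}
Both movements stay on the current open face, because $|g_i|\le1$ and $h\le a<\rho$. This matters because it means the derivative bound $M$ from Appendix~\ref{sec:regularity} applies along the segment.

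\textbf{Item 3.} This is exactly Lemma~\ref{lem:movement}. In the outward branch the increase is at most $\beta h^2/2$. In the curvature branch it is at most $(1/2+1/24+1/4)\beta h^2$.

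The main obstacle is making sure the hypotheses of the cited lemmas hold at every state the algorithm actually reaches, rather than only at idealized states. Three things need checking:
\begin{itemize}
\item Every state at which a test is run must be prepared.
\item All stencil points $x\pm te_{i_p}\pm te_{i_q}$ and both movement segments must lie within the $|t|\le\rho/16$ region where the bound $M$ holds. This uses directions $b$ with $|b_i|\le2$ together with $t\le a$.
\item The report tolerances $\nu_{\rm H}$ and $\nu_{\rm loc}$ must match those assumed in Lemmas~\ref{lem:localtest} and~\ref{lem:direction}.
\end{itemize}
I would handle these by induction on the loop counter $j$, with the invariant that $x$ is prepared at the top of each iteration. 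Given that invariant, the three items follow from the lemmas above.
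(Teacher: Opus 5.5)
Your proposal is correct and follows the paper's own proof, which also checks the definition item by item. It uses Lemma~\ref{lem:direction} for the orthogonal unit curvature step, the bounds $h\le a=\rho/16$ and $|g_i|\le1$ to keep both segments on the current open face, and Lemma~\ref{lem:movement} for the drift bound $\beta h^2$. Your explicit induction on the prepared-state invariant spells out what the paper leaves implicit.
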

\begin{proof}
Preparation performs finitely many near-boundary roundings with the
cost in Lemma~\ref{lem:rounding}. A movement is either the specified
outward step or a physical unit step orthogonal to $x$, by
Lemma~\ref{lem:direction}. Since $h\le a=\rho/16$ and $|g_i|\le1$,
the entire segments remain in the current open face and keep frozen
coordinates fixed. Lemma~\ref{lem:movement} bounds the actual spectral
drift. The algorithm uses the common step size and horizon in
Definition~\ref{def:analytictrial}, so all its properties hold.
\end{proof}

\begin{proof}[Proof of Theorem~\ref{thm:main}]
For $m=0$ return the all-positive signing. Otherwise
$0<\epsilon\le\varepsilon$. Proposition~\ref{prop:implementedtrial}
and Theorem~\ref{thm:trial} show that the algorithm completes and
returns discrepancy less than $35\delta\le35\sqrt\varepsilon$.
Theorem~\ref{thm:runtime} supplies the arithmetic bound.
Outward displacements have size $a=\rho/16$, and physical curvature
steps have coordinate magnitudes at most $h\le\rho/16$.
Each final rounding occurs within distance $\rho$ of the selected
endpoint, after which that coordinate stays fixed.
\end{proof}

\section{AI Usage}
The author conceived the approach of using operator-valued
$K$-transforms, or variants based on regularizers such as Tsallis--$1/2$
rather than log-determinants, as discrepancy potentials in 2022--2023.
The semidefinite representation of the Tsallis--$1/2$ regularized
potential was subsequently conceived by an internal version of Google's
Gemini, used with a specific harness. Extensive calculations with
ChatGPT 5.5--5.6 Sol Ultra helped develop and test proof strategies,
and rule out simpler approaches, leading to the strategy presented here.
GPT 6 Astra Ultra was used to develop Lean formalizations of the
Matrix Spencer and Weaver/Kadison--Singer arguments, including separate
existence and algorithmic formulations.

\section{Acknowledgements}
The idea of using free probability to approach Weaver's discrepancy
problem, potentially algorithmically, was encouraged by inspiring
conversations with Adam W. Marcus in 2016--2017. The author is
profoundly grateful for his encouragement. Numerous conversations and
continued encouragement from Nikhil Srivastava have been invaluable
over the years, as were early discussions with Nick Ryder and Jonathan
Leake.

The author thanks Daniel A. Spielman for hosting him at Yale University
in the summer of 2019, and for extensive discussions of scalar-valued
free probability and its possible applications to matrix discrepancy.
The approach began to crystallize in the summer of 2023, during a
visit hosted by Ramon van Handel at Princeton University. Ramon's
persistent questions about concrete special cases, bottlenecks, and
obstacles pushed the author to examine the approach carefully and
eliminate many of those obstacles, strengthening his conviction that
it could succeed. The author warmly thanks Ramon for his hospitality,
many discussions of operator-valued free probability, and insistence
on an exceptionally high but necessary standard of clarity and rigor.

Finally, the author thanks Tibo from OpenAI for providing numerous
Codex resets to paid Codex subscribers, which were invaluable in
helping this project go \emph{Ultra Fast}.

\appendix
\section{Standard matrix and analytic facts}\label{sec:preliminaries}

This section states the standard results used in the proof. The matrix
order and fidelity facts are finite-dimensional; their use does not
require a background in quantum mechanics. The only complex-analytic
inputs are the principal matrix square root and Cauchy's estimate.
The source-specific stationarity and duality arguments are derived in
the subsequent sections.

\subsection{Positive matrices and fidelity}
\begin{lemma}[Schur complement and square-root order]\label{lem:standardorder}
Let $A=A^*$ and $C\succ0$ be square matrices, and let $X$ have the
compatible rectangular size. Then
\[
 \begin{pmatrix}A&X\\X^*&C\end{pmatrix}\succeq0
 \quad\Longleftrightarrow\quad A-XC^{-1}X^*\succeq0.
\]
For same-size positive semidefinite matrices $A,B$,
$A\preceq B$ implies $A^{1/2}\preceq B^{1/2}$.
\end{lemma}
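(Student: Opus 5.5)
The plan is to prove the two assertions separately: first the Schur complement equivalence by an explicit block congruence, then square-root monotonicity by an integral representation. The integral route is the one the proof of Lemma~\ref{lem:rounding} already points to.

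\emph{Schur complement.} Since $C\succ0$, the matrix
\[
 E=\begin{pmatrix}I&-XC^{-1}\\0&I\end{pmatrix}
\]
is invertible. A direct multiplication gives
\[
 E\begin{pmatrix}A&X\\X^*&C\end{pmatrix}E^*
 =\begin{pmatrix}A-XC^{-1}X^*&0\\0&C\end{pmatrix}.
\]
Congruence by an invertible matrix preserves positive semidefiniteness in both directions. A block-diagonal matrix is positive semidefinite exactly when each block is. Because $C\succ0$, the block matrix is therefore positive semidefinite if and only if $A-XC^{-1}X^*\succeq0$. This argument works for rectangular $X$, since only the sizes of $A$ and $C$ need to be square.

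\emph{Square-root order.} For $s\ge0$ I will use the scalar identity
\[
 \sqrt s=\frac1\pi\int_0^\infty\Bigl(\frac1{\sqrt t}-\frac{\sqrt t}{t+s}\Bigr)\,dt
 =\frac1\pi\int_0^\infty\frac{s}{\sqrt t\,(t+s)}\,dt .
\]
It follows from the substitution $t=s\tau^2$, which reduces the integral to $\int_0^\infty 2\,d\tau/(1+\tau^2)=\pi$. Applying it through the spectral decomposition gives, for $A\succeq0$,
\[
 A^{1/2}=\frac1\pi\int_0^\infty t^{-1/2}\bigl(I-t(t+A)^{-1}\bigr)\,dt .
\]
The integral converges in norm: near $t=0$ the integrand is bounded by $t^{-1/2}$, and for large $t$ it is $O(t^{-3/2})$. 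It therefore suffices to show that $A\preceq B$ implies $(t+B)^{-1}\preceq(t+A)^{-1}$ for each $t>0$; integrating against a positive weight then preserves the order.

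\emph{Resolvent order.} Put $P=t+A\succeq t I\succ0$ and $Q=t+B$, so $P\preceq Q$. Conjugating by $P^{-1/2}$ gives $I\preceq P^{-1/2}QP^{-1/2}$, so every eigenvalue of $K=P^{-1/2}QP^{-1/2}$ is at least one. Hence $K^{-1}\preceq I$, that is $P^{1/2}Q^{-1}P^{1/2}\preceq I$. Conjugating back by $P^{-1/2}$ gives $Q^{-1}\preceq P^{-1}$, which is the required resolvent inequality. Nothing here needs $A$ or $B$ to be invertible, because $t>0$ makes $P$ and $Q$ positive definite.

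\emph{Main obstacle.} The only step needing care is justifying the matrix integral. This means checking convergence and checking that integrating a pointwise Loewner inequality preserves it. Both follow by pairing with a fixed vector $v$, which turns everything into scalar integrals $\int_0^\infty t^{-1/2}\langle v,(\cdot)v\rangle\,dt$ with nonnegative integrands.
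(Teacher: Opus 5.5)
Your proposal is correct and follows essentially the same route as the paper: a block congruence reducing the Schur-complement criterion to positivity of $\operatorname{diag}(A-XC^{-1}X^*,C)$, and the integral representation $A^{1/2}=\pi^{-1}\int_0^\infty t^{-1/2}\bigl(I-t(t+A)^{-1}\bigr)\,dt$ combined with order reversal of inversion via congruence and eigenvalues. One small slip: the substitution $t=s\tau^2$ gives $\sqrt s\int_0^\infty 2\,d\tau/(1+\tau^2)=\pi\sqrt s$, so the factor $\sqrt s$ should appear explicitly.
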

\begin{proof}
For the first assertion use the congruence factorization
\[
 \begin{pmatrix}A&X\\X^*&C\end{pmatrix}
 =\begin{pmatrix}I&XC^{-1}\\0&I\end{pmatrix}
   \begin{pmatrix}A-XC^{-1}X^*&0\\0&C\end{pmatrix}
   \begin{pmatrix}I&0\\C^{-1}X^*&I\end{pmatrix}.
\]
This is the Schur-complement criterion; see
\cite[Theorem~1.3.3]{bhatia2007}. The displayed factorization also
covers an indefinite $A$ whenever $C\succ0$.
For the second assertion, the spectral theorem gives
\[
 A^{1/2}=\frac1\pi\int_0^\infty t^{-1/2}A(A+tI)^{-1}\,dt.
\]
For positive definite matrices, inversion reverses order, as follows
by congruence and scalar inversion of eigenvalues. Consequently
$A(A+tI)^{-1}=I-t(A+tI)^{-1}$ preserves order for $t>0$.
Integrating proves the assertion, with singular endpoints obtained by
continuity. This is the exponent-$1/2$ case of
\cite[Theorem~1.5.9]{bhatia2007}.
\end{proof}

\begin{theorem}[Standard fidelity formulas]\label{thm:standardfidelity}
For same-size positive semidefinite matrices $P,Q$, define
$\Fid(P,Q)=\Tr\sqrt{Q^{1/2}PQ^{1/2}}$. Then
\[
 \Fid(P,Q)=\max_W\left\{\Re\Tr W:
          \begin{pmatrix}P&W\\W^*&Q\end{pmatrix}\succeq0\right\},
\]
where $W$ is an unrestricted complex matrix of that size, and
\[
 2\Fid(P,Q)=\inf_{Z\succ0}\{\Tr(PZ^{-1})+\Tr(QZ)\}.
\]
The first maximum is attained. When $P,Q\succ0$, the second infimum
has a unique minimizer, characterized by $ZQZ=P$.
For singular inputs the infimum need not be attained.
\end{theorem}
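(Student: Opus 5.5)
The plan is to prove both formulas first for positive definite $P,Q$ and then extend them to singular inputs by restricting to supports and by continuity. This mirrors the arguments already given in Lemma~\ref{lem:transport} and Lemma~\ref{lem:sdpidentities}, so the work is mainly organizing those steps in the stated generality.

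\emph{The block formula.} For $P,Q\succ0$, congruence by $\diag(P^{-1/2},Q^{-1/2})$ and the Schur complement criterion of Lemma~\ref{lem:standardorder} show that the block is positive semidefinite exactly when $W=P^{1/2}CQ^{1/2}$ with $\|C\|\le1$. Then $\Re\Tr W=\Re\Tr(Q^{1/2}P^{1/2}C)$. By a singular value decomposition, this is at most the trace norm $\|Q^{1/2}P^{1/2}\|_1$, and equality holds when $C$ is the adjoint polar factor. The trace norm equals $\Tr\sqrt{Q^{1/2}PQ^{1/2}}=\Fid(P,Q)$, because $CC^*$ and $C^*C$ share their nonzero spectrum.

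For singular inputs, testing the quadratic form on vectors $(v,tw)$ with $v\in\ker P$, over small $t$ of both signs and phases, forces $\ker P\subseteq\ker W^*$; symmetrically, $\ker Q\subseteq\ker W$. Hence $W$ maps between the two supports, and the positive definite argument applies there, with the contraction extended by zero. This shows that the maximum is attained. The only step requiring care is checking that the compressed fidelity equals $\Tr\sqrt{Q^{1/2}PQ^{1/2}}$. It does, because $Q^{1/2}$ already kills $\ker Q$, and $\sqrt{\cdot}$ of the compression agrees with the original on $\operatorname{ran}P$ by the same $CC^*$/$C^*C$ argument.

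\emph{The transport formula.} For $P,Q\succ0$, substitute $Y=P^{1/2}Z^{-1}P^{1/2}$ and $Q_0=P^{1/2}QP^{1/2}$. The objective becomes $\Tr Y+\Tr(Q_0Y^{-1})$, which equals $2\Tr\sqrt{Q_0}+\|Y^{1/2}-Y^{-1/2}Q_0^{1/2}\|_{\rm HS}^2$. It follows that the minimum is $2\Tr\sqrt{Q_0}=2\Fid(P,Q)$, attained only when $Y^{1/2}=Y^{-1/2}Q_0^{1/2}$, that is, $Y=Q_0^{1/2}$. This gives uniqueness, and translating back yields $ZQZ=P$. One point needs care: the square completion uses that $Y^{-1/2}Q_0^{1/2}$ has trace pairing $\Tr(Y^{1/2}\,Y^{-1/2}Q_0^{1/2})=\Tr Q_0^{1/2}$, which is real.

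For singular inputs, apply the formula to $P+\eta I$ and $Q+\eta I$. Every fixed test $Z$ gives an objective decreasing to the unperturbed one, so the infimum is at least $\limsup 2\Fid(P+\eta I,Q+\eta I)$; by continuity of $\Fid$ this limit is $2\Fid(P,Q)$. Conversely, evaluating the unperturbed objective at the perturbed minimizer gives a value no larger than $2\Fid(P+\eta I,Q+\eta I)$, since the objective is monotone in $(P,Q)$. Letting $\eta\downarrow0$ gives equality.

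Non-attainment is shown by the example $P=\diag(1,0)$, $Q=\diag(0,1)$. Here $\Fid=0$, but $\Tr(PZ^{-1})+\Tr(QZ)>0$ for every $Z\succ0$.

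I expect the main obstacle to be the careful singular-support bookkeeping in the first formula, specifically the kernel containment and the identification of the compressed fidelity. The rest is routine.
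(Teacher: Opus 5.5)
Your proposal is correct and takes the same route as the paper. The paper proves this theorem by citing Watrous and deferring to Lemma~\ref{lem:sdpidentities} (congruence to a contraction, trace-norm duality via the polar factor, kernel containment for singular blocks) and Lemma~\ref{lem:transport} (the substitution $Y=P^{1/2}Z^{-1}P^{1/2}$, completing the square, and the $\eta I$ perturbation for singular inputs), which are exactly your steps. Your explicit example $P=\diag(1,0)$, $Q=\diag(0,1)$ is a small addition, since the paper only asserts that the singular infimum may fail to be attained.
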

These are \cite[Theorems~3.17 and~3.19, equation~(3.110)]{watrous2018}.
We give direct proofs in Lemmas~\ref{lem:transport} and
\ref{lem:sdpidentities}, including the support conventions used here.
The word fidelity always denotes the unsquared quantity in this paper.

For each positive integer $r$, let $\operatorname{id}_r$ be the identity
map on $r\times r$ matrices. A linear map $\Psi$ on complex matrices
is \emph{completely positive} if $\operatorname{id}_r\otimes\Psi$
preserves positive semidefinite matrices for every $r$.
For a finite family of complex matrices $C_a$, a representation
$\Psi(Y)=\sum_a C_aYC_a^*$ implies this property, since each amplified
summand is a congruence by $I_r\otimes C_a$. Such a representation is
called a Kraus representation. The converse is standard
\cite[Theorem~2.22]{watrous2018}, but only this direct implication is
needed below.

\subsection{Implicit differentiation and complex extensions}
\begin{theorem}[Smooth implicit function theorem]\label{thm:standardift}
Let $d_{\rm par},d_{\rm var}$ be positive integers and let $G$ be a
smooth map from an open subset of
$\R^{d_{\rm par}}\times\R^{d_{\rm var}}$ to $\R^{d_{\rm var}}$.
The symbol $D$ in this theorem denotes differentiation; a subscript
specifies the variable being differentiated. Let $(p_0,y_0)$ belong
to the domain of $G$. Suppose $G(p_0,y_0)=0$ and the
partial derivative $D_yG(p_0,y_0)$ is invertible. Near $(p_0,y_0)$,
the solutions are exactly $y=g(p)$ for a unique smooth function $g$.
For every parameter direction $h$, its derivative satisfies
\[
 D_yG(p,g(p))\,Dg(p)[h]+D_pG(p,g(p))[h]=0.
\]
\end{theorem}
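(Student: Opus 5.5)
The statement to prove is the smooth implicit function theorem, Theorem~\ref{thm:standardift}. I would reduce it to the inverse function theorem applied to the augmented map $\Psi(p,y)=(p,G(p,y))$ on the open domain of $G$.

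\textbf{Invertibility of $D\Psi$.} The derivative of $\Psi$ at $(p_0,y_0)$ is block lower triangular, with diagonal blocks $I$ and $D_yG(p_0,y_0)$. It is therefore invertible exactly when $D_yG(p_0,y_0)$ is.

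\textbf{Local inverse and the solution branch.} The inverse function theorem (proved by a Banach contraction on a small closed ball, as a standard analytic fact) gives neighborhoods $\mathcal U\ni(p_0,y_0)$ and $\mathcal V\ni(p_0,0)$ on which $\Psi:\mathcal U\to\mathcal V$ is a diffeomorphism. Its inverse is $C^1$, and it is in fact $C^\infty$ because its derivative is $(D\Psi)^{-1}\circ\Psi^{-1}$ and matrix inversion is smooth. Since $\Psi$ preserves the first coordinate, its inverse has the form $(p,w)\mapsto(p,\phi(p,w))$. Define $g(p)=\phi(p,0)$ for $p$ near $p_0$; this is smooth. In $\mathcal U$, a point satisfies $G(p,y)=0$ iff $\Psi(p,y)=(p,0)$ iff $y=g(p)$, which gives both existence and uniqueness near $(p_0,y_0)$. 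I would also shrink the neighborhood so that $D_yG$ stays invertible along the graph, using continuity of the determinant.

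\textbf{Derivative formula.} Differentiating the identity $G(p,g(p))=0$ by the chain rule in a direction $h$ yields $D_yG\,Dg[h]+D_pG[h]=0$.

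The only step with real content is the inverse function theorem itself, specifically the contraction estimate on the map $y\mapsto y-D_yG(p_0,y_0)^{-1}G(p,y)$. The key bound there is that its Lipschitz constant is below $1/2$ in a small ball, uniformly for $p$ near $p_0$; it follows from continuity of $D_yG$. Everything else is bookkeeping, so I would cite a standard reference for the inverse function theorem rather than reprove it.
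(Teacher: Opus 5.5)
Your proposal is correct and matches the paper's treatment: the paper gives no independent proof but cites the finite-dimensional implicit function theorem \cite[Appendix~B]{guilleminhaine2019}, which is the standard reduction to the inverse function theorem that you sketch. Like you, the paper obtains the derivative identity by differentiating $G(p,g(p))=0$ with the chain rule.
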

This is the finite-dimensional implicit function theorem; see
\cite[Appendix~B]{guilleminhaine2019}. The derivative equation follows
by differentiating $G(p,g(p))=0$. We apply it to real coordinates on
Hermitian matrix space, after checking invertibility of the specific
stationarity system.

\begin{theorem}[Principal matrix square root]\label{thm:principalroot}
Let $A$ be a complex square matrix with no eigenvalue in $(-\infty,0]$.
There is a unique square root $A^{1/2}$ whose eigenvalues have positive
real parts. This root depends holomorphically on the entries of $A$
on this open set. It respects similarity, and its eigenvalues are the
principal scalar square roots of those of $A$, with multiplicities.
\end{theorem}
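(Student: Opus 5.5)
The plan is the standard holomorphic functional calculus construction. Existence, uniqueness, and holomorphy all follow from a Cauchy integral for the principal branch. Let $\sqrt z$ denote the principal branch, holomorphic on $\C\setminus(-\infty,0]$, with $\Re\sqrt z>0$ there.

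\emph{Construction.} For $A$ with spectrum in the slit plane, choose a contour $\Gamma$ in $\C\setminus(-\infty,0]$ enclosing the spectrum of $A$, and define
\[
 A^{1/2}=\frac1{2\pi\mathrm i}\oint_\Gamma\sqrt z\,(zI-A)^{-1}\,dz.
\]
The same $\Gamma$ works for every matrix in a neighborhood of $A$, because eigenvalues depend continuously on entries. On that neighborhood the integrand is holomorphic in the entries, uniformly on $\Gamma$. Differentiating under the integral then gives holomorphic dependence.

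\emph{Properties from the calculus.} Next I would check three facts, each in the Jordan basis.
\begin{itemize}
\item The calculus is multiplicative, so $(A^{1/2})^2=A$.
\item It respects similarity, since $(zI-VAV^{-1})^{-1}=V(zI-A)^{-1}V^{-1}$.
\item On a Jordan block with eigenvalue $\lambda$, the integral yields an upper triangular matrix with diagonal $\sqrt\lambda$. This gives the eigenvalue and multiplicity statement, and hence positive real parts.
\end{itemize}

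\emph{Uniqueness.} This is the main obstacle. Let $B$ be any square root of $A$ whose eigenvalues have positive real parts.
\begin{itemize}
\item First, $B$ commutes with $A=B^2$, so $B$ commutes with every resolvent $(zI-A)^{-1}$ and hence with $A^{1/2}$.
\item Commuting matrices can be simultaneously triangularized. In a common triangular basis, the diagonals satisfy $b_{jj}^2=a_{jj}=(A^{1/2})_{jj}^2$.
\item Both square roots have positive real part, so they coincide: $b_{jj}=(A^{1/2})_{jj}$.
\item Since $B^2=(A^{1/2})^2$ and the two commute, $(B-A^{1/2})(B+A^{1/2})=0$.
\item The sum $B+A^{1/2}$ is triangular with diagonal entries $2b_{jj}$, which have positive real part. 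So it is invertible, and $B=A^{1/2}$.
\end{itemize}

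The delicate points are the commutation step and the choice of a locally uniform contour. I would state both explicitly rather than leave them implicit.
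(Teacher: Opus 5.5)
Your proposal is correct. For holomorphic dependence you argue exactly as the paper does. The paper uses the resolvent integral $\frac{1}{2\pi\mathrm i}\int_\Gamma z^{1/2}(zI-A)^{-1}\,dz$ over a contour in $\C\setminus(-\infty,0]$, with one contour serving a whole neighborhood of $A$, so the integral inherits holomorphy from the resolvent.

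The difference lies in the remaining claims. The paper does not prove existence, uniqueness, similarity invariance, or the spectral statement; it cites Higham (Theorem~1.29, Sections~1.2.3 and~1.7) for them. You derive them directly:
\begin{itemize}
\item Evaluating the integral on a Jordan block gives $(A^{1/2})^2=A$ and the diagonal $\sqrt\lambda$.
\item For uniqueness, a root $B$ with right-half-plane spectrum commutes with $A=B^2$, hence with every resolvent and with $A^{1/2}$. Simultaneous triangularization then matches the diagonals, and invertibility of $B+A^{1/2}$ finishes the argument.
\end{itemize}
This uniqueness argument is short and needs no classification of all square roots of $A$.

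The citation buys the paper brevity and places the root in the general framework of primary matrix functions. Your route is self-contained. Its uniqueness step also yields the fact the paper records right after the theorem and relies on in the regularity appendix: on positive definite matrices this root is the usual positive square root.
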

See \cite[Theorem~1.29 and Sections~1.2.3, 1.7]{higham2008}.
For the holomorphic dependence, choose a positively oriented contour
$\Gamma$, or union of contours, inside $\C\setminus(-\infty,0]$
enclosing the spectrum of $A$. The matrix-function contour formula
\cite[Definition~3.3 and Theorem~3.4]{highamlin2013} gives
\[
 A^{1/2}=\frac1{2\pi\mathrm i}\int_\Gamma
       z^{1/2}(zI-A)^{-1}\,dz.
\]
One common contour works in a neighborhood of a fixed $A$; the
resolvent and hence the integral are holomorphic in its entries.
On positive definite Hermitian matrices this is the usual positive
square root. In Section~\ref{sec:regularity} we verify the spectral
hypothesis for each complex perturbation before using this theorem.

\begin{theorem}[Cauchy's derivative estimate]\label{thm:cauchy}
Let $R>0$ and $V\ge0$. If a scalar function $f$ is holomorphic on $|z|<R$ and
$|f(z)|\le V$ there, then, for every integer $j\ge0$,
\[
 |f^{(j)}(0)|\le j!V/R^j.
\]
\end{theorem}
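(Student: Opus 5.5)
The estimate reduces to Cauchy's integral formula, so I plan to derive it from that formula rather than quote it. I first fix $0<r<R$. Holomorphy on the open disk $|z|<R$ makes $f$ holomorphic on a neighbourhood of the closed disk $|z|\le r$. Cauchy's integral formula for derivatives on the positively oriented circle $|w|=r$ then gives, for every integer $j\ge0$,
\[
 f^{(j)}(0)=\frac{j!}{2\pi\mathrm i}\oint_{|w|=r}\frac{f(w)}{w^{j+1}}\,dw .
\]
The case $j=0$ is just the mean value identity $f(0)=\frac1{2\pi}\int_0^{2\pi}f(re^{\mathrm i\phi})\,d\phi$, and the same argument covers it.

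Next I bound the integral by its length times the supremum of the integrand. The circle has length $2\pi r$. On it $|f(w)|\le V$ and $|w|^{j+1}=r^{j+1}$. This gives
\[
 |f^{(j)}(0)|\le\frac{j!}{2\pi}\cdot2\pi r\cdot\frac{V}{r^{j+1}}=\frac{j!\,V}{r^{j}} .
\]

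The only point needing care is that the hypothesis is stated on the open disk, so I cannot integrate directly on $|w|=R$. I handle this by letting $r\uparrow R$. The left side does not depend on $r$, and $j!V/r^j\to j!V/R^j$, which gives $|f^{(j)}(0)|\le j!V/R^j$. For $j=0$ the bound is simply $|f(0)|\le V$, which is immediate. If $V=0$, then $f\equiv0$ and every derivative vanishes, consistent with the bound.

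There is no real obstacle here; this is the standard Cauchy estimate. Formally I could also just cite it as a textbook result in complex analysis, as the paper does for the other standard facts in this appendix. It will later be applied to the holomorphic extension of $t\mapsto\mathscr F(x+tb)$ to bound its second through fourth derivatives by $M$.
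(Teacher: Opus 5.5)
Your proof is correct and follows the same route the paper sketches after citing Stein--Shakarchi: Cauchy's integral formula on circles $|w|=r<R$, the length-times-supremum bound $j!V/r^j$, and then the limit $r\uparrow R$. One small correction to your closing remark: the paper does not apply the estimate to $t\mapsto\mathscr F(x+tb)$ directly, but to the holomorphic extension of the joint objective $L(x+zb,S+Y)$ along complex lines, and it reaches the derivatives of $\mathscr F$ only afterwards through the envelope identities; this does not affect your proof.
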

This is \cite[Chapter~2, Corollary~4.3]{steinshakarchi2003}.
Apply Cauchy's integral formula on circles of radius $r<R$ and let
$r$ increase to $R$. We use the estimate on complex lines through a
matrix-valued parameter space; the function being bounded is scalar.

\begin{lemma}[Polarization bound]\label{lem:polarization}
Let $T$ be a symmetric $j$-linear scalar-valued form on a real normed
space, for an integer $j\ge1$, and set $P(v)=T[v,\ldots,v]$. Then
\[
 T[v_1,\ldots,v_j]=\frac1{2^j j!}
 \sum_{\varepsilon\in\{-1,1\}^j}
 \left(\prod_{i=1}^j\varepsilon_i\right)
 P\left(\sum_{i=1}^j\varepsilon_i v_i\right).
\]
Consequently its multilinear norm satisfies
\[
 \sup_{\|v_i\|\le1}|T[v_1,\ldots,v_j]|
 \le\frac{j^j}{j!}\sup_{\|v\|\le1}|P(v)|.
\]
\end{lemma}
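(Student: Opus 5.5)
The statement to prove is Lemma~\ref{lem:polarization}. I will first establish the polarization identity by expanding the right-hand side multilinearly, and then deduce the norm bound by rescaling.

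For the identity, I expand $P\bigl(\sum_i\varepsilon_iv_i\bigr)=T\bigl[\sum_i\varepsilon_iv_i,\ldots,\sum_i\varepsilon_iv_i\bigr]$ by multilinearity. The result is a sum over maps $\phi:\{1,\ldots,j\}\to\{1,\ldots,j\}$ of
\[
 \Bigl(\prod_{a}\varepsilon_{\phi(a)}\Bigr)\,T[v_{\phi(1)},\ldots,v_{\phi(j)}].
\]
Multiplying by $\prod_i\varepsilon_i$ and summing over $\varepsilon\in\{-1,1\}^j$, each term picks up the character sum
\[
 \sum_\varepsilon\prod_i\varepsilon_i^{1+m_i(\phi)},
\]
where $m_i(\phi)=|\phi^{-1}(i)|$. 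This sum equals $2^j$ if every exponent $1+m_i$ is even and $0$ otherwise.

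Since $\sum_i m_i=j$, all $m_i$ being odd forces $m_i=1$ for every $i$. So the surviving $\phi$ are exactly the $j!$ permutations. By symmetry of $T$, each permutation contributes $T[v_1,\ldots,v_j]$. The total is $2^jj!\,T[v_1,\ldots,v_j]$, which gives the identity.

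For the bound, I take $\|v_i\|\le1$ and note that $\bigl\|\sum_i\varepsilon_iv_i\bigr\|\le j$. Since $P$ is $j$-homogeneous, $|P(w)|\le\|w\|^j\sup_{\|v\|\le1}|P(v)|$; this is trivial for $w=0$ and follows by scaling otherwise. Each of the $2^j$ terms is therefore at most $j^j\sup|P|$. Dividing by $2^jj!$ gives the claimed factor $j^j/j!$.

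There is no real obstacle here. The only care needed is the parity bookkeeping in the character sum and the homogeneity rescaling, both of which are routine.
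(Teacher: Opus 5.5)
Your proposal is correct and follows the same route as the paper: expand by multilinearity, observe that the sign sum kills every term except the $j!$ permutation terms (each contributing $2^j$), then bound each of the $2^j$ evaluations using $\bigl\|\sum_i\varepsilon_iv_i\bigr\|\le j$ and $j$-homogeneity of $P$. Your explicit character-sum and parity bookkeeping simply spells out the cancellation step that the paper states in one line.
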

\begin{proof}
Expand the signed sum by multilinearity. Summing each sign cancels
all terms except those containing every $v_i$ exactly once. There
are $j!$ such terms, each with sign sum $2^j$. For the bound,
$\|\sum_i\varepsilon_i v_i\|\le j$ and homogeneity of $P$ give
the displayed factor. This proves the precise form used below.
\end{proof}

\section{Quantitative regularity of the optimized value}\label{sec:regularity}

The finite walk needs one step size and one set of query accuracies
that work throughout all its possible states. The qualitative
smoothness argument does not bound them. We obtain a quantitative
bound by extending the objective to a complex neighborhood and
applying Cauchy's estimate. Relative perturbations of the source
keep the neighborhood independent of its smallest positive eigenvalue.

The first two derivatives of the weights drove the curvature argument.
Higher derivatives enter here through the finite-step remainder and
the curvature stencil. For the actual source,
\[
 c^{(3)}(t)=\frac{3u\zeta^2t}{(t^2+\zeta^2)^{5/2}},\qquad
 c^{(4)}(t)=\frac{3u\zeta^2(\zeta^2-4t^2)}{(t^2+\zeta^2)^{7/2}}.
\]
These derivatives can grow as $\zeta$ shrinks. The estimate below
controls their effect together with the changing density; it does not
assume they have a constant bound. Throughout the walk $|x_i|\le1$. The source satisfies
$0\le c(x_i)\le128$ throughout the cube. Therefore
\[
 \Omega_x(I_D)\preceq256\epsilon I_D.
\]
At a real optimizer, Lemma~\ref{lem:floor} gives
\[
 \theta S^{-1/2}\preceq40I_D,
 \qquad S\succeq\mu I_D,
 \qquad \mu=(\theta/40)^2.
\]
To see the first inequality directly, pair stationarity with $S$ to
obtain $\lambda=\mathscr F(x)-\theta\Tr\sqrt S\le\mathscr F(x)$,
drop the positive transport terms, and use
$\norm{J\widehat H(x)}\le1$.

The following are scalar budgets: $R_{\mathrm c}$ is a complex
neighborhood radius, $C_{\mathrm{val}}$ bounds the joint objective, $C_{\mathrm{der}}$ bounds its
mixed derivatives, $g$ is a density coercivity bound, and $M$ bounds
the derivatives after optimizing the density. 
Set
\begin{equation}\label{eq:regularitybudget}
\begin{gathered}
 R_{\rm c}=\min\{1,\mu,\rho,\zeta\}/10^4,\qquad
 C_{\rm val}=10^4(D+1)^2,\\
 C_{\rm der}=C_{\rm val}(10/R_{\rm c})^4,\qquad g=\theta/2,\\
 M=1+(C_{\rm der}+3C_{\rm der}^2/g)(1+C_{\rm der}/g)^4.
\end{gathered}
\end{equation}
Appendix~\ref{sec:runtime} bounds these expressions by polynomials in
$n,m$. Their definitions involve only the density and geometric scales.

\begin{lemma}[A uniform complex neighborhood]\label{lem:complex-neighborhood}
Fix a face and a real state $x$ whose live coordinates have margin at
least $\rho/2$. Let $S$ be its optimizing density, let
$\mathcal K=\operatorname{ran}\Omega_x(I_D)$, and let $b\in\R^n$
vanish on frozen coordinates with $|b_i|\le2$. Let
$\mathcal I=\{i:|x_i|<1,\ A_i\ne0\}$.
With the principal square roots, the joint objective
$L(x+zb,S+Y)$ has a holomorphic scalar extension on
$|z|<R_{\rm c}$, $\|Y\|_{\rm F}<R_{\rm c}$, where $Y$ is a full
complex matrix. Its absolute value is at most $C_{\rm val}$.
Its real mixed derivatives at $(z,Y)=(0,0)$, of orders two through
four, are bounded by $C_{\rm der}$ in the joint norm $|z|+\|Y\|_{\rm F}$.
\end{lemma}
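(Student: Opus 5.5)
The plan is to treat the three terms of $L(x+zb,S+Y)$ separately, extend each holomorphically, bound each in absolute value on the stated polydisc, and then obtain the derivative bounds from Cauchy's estimate (Theorem~\ref{thm:cauchy}) on complex lines, combined with the polarization bound (Lemma~\ref{lem:polarization}). Throughout, the radii are taken an order of magnitude smaller than $R_{\rm c}$, so that a disc of radius $R_{\rm c}/10$ around any base point of interest still lies inside the domain. This accounts for the factor $(10/R_{\rm c})^4$ in $C_{\rm der}$.

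\emph{Center and weights.} The center term $\Tr(J\widehat H(x+zb)(S+Y))$ is polynomial in $(z,Y)$, and its modulus is at most $(1+2|z|)(1+\sqrt D\|Y\|_{\rm F})$. For the weights, $c(x_i+zb_i)$ contains $\sqrt{(x_i+zb_i)^2+\zeta^2}$. For $|z b_i|\le2R_{\rm c}\le\zeta/5000$, the argument $(x_i+zb_i)^2+\zeta^2$ stays within relative distance about $10^{-3}$ of the positive real number $x_i^2+\zeta^2\ge\zeta^2$, so the principal root is holomorphic there, and $|c|\le 3u$. The margin $\rho/2$ together with $R_{\rm c}\le\rho/10^4$ keeps the coordinates off the face. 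The frozen coordinates have $b_i=0$ and weight exactly zero, so the source support $\mathcal K$ is fixed.

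\emph{Regularizer.} We have $S+Y$ with $S\succeq\mu I_D$ (Lemma~\ref{lem:floor}) and $\|Y\|\le\|Y\|_{\rm F}<\mu/10^4$. Hence $S+Y$ has numerical range in the half-plane $\Re w\ge\mu(1-10^{-4})$, so its spectrum avoids $(-\infty,0]$. Theorem~\ref{thm:principalroot} then gives a holomorphic $\Tr\sqrt{S+Y}$, bounded by $D\sqrt{1+\|Y\|}$ via the contour formula or via the eigenvalue description.

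\emph{Fidelity (main obstacle).} The difficulty is that $\Omega_x(S)$ may have tiny positive eigenvalues on $\mathcal K$, so a naive extension of $M^{1/2}$ fails at a radius depending on the source gap. I would use the relative structure instead. Write
\[
\Fid(S,M)=\Tr\sqrt{S^{1/2}MS^{1/2}},
\]
and factor the perturbed source as
\[
\Omega_{x+zb}(S+Y)=\sum_i c_i(z)\,\Tr(B_i(S+Y))\,B_i .
\]
Each coefficient is a relative perturbation of the real coefficient $c_ip_i>0$ on $\mathcal I$. Indeed,
\[
|\Tr(B_iY)|\le\|Y\|\Tr B_i\le\|Y\|\,p_i/\mu ,
\]
since $p_i\ge\mu\Tr B_i$, and $|c_i(z)-c_i|/c_i\le 10^{-3}$ by the bound $c_i\ge u(1-x_i^2)\gtrsim u\rho$ together with $R_{\rm c}\le\rho$.

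Thus the perturbed source is $\mathcal D\Delta\mathcal D^*$-type: it equals $G^*(I+E)G$ with $G$ having rows $\sqrt{c_ip_i}\,B_i^{1/2}$ and $E$ diagonal with $\|E\|\le10^{-2}$. The fidelity becomes
\[
\Tr\sqrt{(S+Y)^{1/2}G^*(I+E)G(S+Y)^{1/2}} .
\]
By the $CC^*$ versus $C^*C$ trick from Lemma~\ref{lem:transport}, this equals the trace of the principal root of the Gram-side matrix
\[
(I+E)^{1/2}G(S+Y)G^*(I+E)^{1/2}.
\]
This is a holomorphic function only if that matrix has no spectrum on $(-\infty,0]$, which in turn needs its real part to be positive relative to itself. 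I would handle it by compressing to the range of the positive matrix $GSG^*$ and writing the perturbed matrix as $N^{1/2}(I+F)N^{1/2}$ with $\|F\|\le 10^{-1}$. Here $N=GSG^*$, and the relative bound on $GYG^*$ follows from $Y$ being small relative to $S$. Such a matrix is similar to $(I+F)N$, whose spectrum lies in the sector $|\arg w|<\pi/10$. This gives holomorphy, and a bound of size $\sqrt{\Tr N}\cdot\mathrm{poly}(D)\le C_{\rm val}$.

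Summing the three pieces gives $|L|\le C_{\rm val}$. The derivative bound then follows by restricting to each complex line through the origin in the joint norm and applying Cauchy at radius $R_{\rm c}/10$, then polarizing with $j^j/j!\le 4^4/4!\cdot$ slack absorbed in $10^4$. The step I expect to require care is the sector argument for the fidelity root, specifically ensuring that the compression to $\operatorname{ran}N$ is $z$-independent, which holds because the support is fixed on the face.
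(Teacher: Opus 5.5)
\textbf{There is a genuine gap in the fidelity step.} Your treatment of the center, the weights, the regularizer, and the relative bounds on the source coefficients agrees with the paper. Your diagonal $E$ has exactly the paper's entries $(1+\Delta c_i/c_i)(1+\Delta p_i/p_i)-1$. The gap is in the fidelity step, which is the only delicate part of the lemma.

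Put $N_Y=G(S+Y)G^*$, and let $\Pi$ project onto $\operatorname{ran}G=\operatorname{ran}N$. This subspace has dimension $\dim\mathcal K$ inside $\bigoplus_{i\in\mathcal I}\C^D$.

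\emph{The factor $(I+E)^{1/2}$ moves the range.} Because $E$ carries different scalars on different atom blocks, $(I+E)^{1/2}$ does not preserve $\operatorname{ran}G$. The vector $(I+E)^{1/2}Gv=(\sqrt{c_ip_i(1+e_i)}\,B_i^{1/2}v)_i$ lies in $\operatorname{ran}G$ only if a single $v'$ satisfies $B_i^{1/2}v'=(1+e_i)^{1/2}B_i^{1/2}v$ for every $i$. This fails once the ranges of the $B_i$ overlap. Hence $(I+E)^{1/2}N_Y(I+E)^{1/2}$ has range $(I+E)^{1/2}\operatorname{ran}G$, which moves with $(z,Y)$, and the matrix is not of the form $N^{1/2}(I+F)N^{1/2}$.

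\emph{Compression does not repair this.} Compressing to the fixed space $\operatorname{ran}N$ changes the spectrum. The compression has the same spectrum as $N_Y\bigl(\Pi(I+E)^{1/2}\Pi\bigr)^2$, whereas the nonzero spectrum you need is that of $N_Y\,\Pi(I+E)\Pi$. This discrepancy already occurs at real $z$ and Hermitian $Y$, so the compressed function is not an extension of $L$. Without compression, the Gram-side matrix has a zero eigenvalue of large multiplicity, so its principal root is undefined. The fact that $\mathcal K$ is fixed on the face concerns a subspace of $\C^D$, not this moving Gram range.

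\emph{How the paper avoids this.} The paper never leaves $\mathcal K$. For real inputs the fidelity sees only the compression $S_0$, and the source is supported on $\mathcal K$. So the paper extends $2\Tr_{\mathcal K}\bigl((S_0^YM^{z,Y})^{1/2}\bigr)$, where $S_0^Y$ is the compression of $S+Y$ and $M^{z,Y}=\Omega_{x+zb}(S_0^Y)$.
\begin{itemize}
\item Conjugating your factorization by $M_0^{-1/2}$ (the map $GM_0^{-1/2}$ is an isometry on $\mathcal K$) gives $M^{z,Y}=M_0^{1/2}(I+F)M_0^{1/2}$. Here $F=\sum_ie_i\widehat B_i$ with $\widehat B_i=M_0^{-1/2}c_ip_iB_iM_0^{-1/2}$ and $\sum_i\widehat B_i=I$, so $\|F\|\le\max_i|e_i|$.
\item Likewise $S_0^Y=S_0^{1/2}(I+F')S_0^{1/2}$ with $F'$ small.
\item For an eigenvector $u$ of $S_0^YM^{z,Y}$, the eigenvalue equals $u^*M^{z,Y}u/u^*(S_0^Y)^{-1}u$. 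This is a ratio of two numbers of argument below $\arctan(1/15)$.
\item So the spectrum stays in a sector of half-angle $2\arctan(1/15)$, and the principal root is holomorphic on the whole neighborhood. No inner root of $S+Y$ and no moving subspace appear.
\end{itemize}

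\emph{Repair within your Gram picture.} If you want to keep the Gram-side formulation, replace the symmetric form by the similar matrix $N_Y(I+E)$. It maps into $\operatorname{ran}G$, so it is block upper triangular with respect to $\operatorname{ran}G\oplus\ker G^*$. Its nonzero spectrum is therefore that of $N_Y\,\Pi(I+E)\Pi$ on the fixed space $\operatorname{ran}G$, where the same ratio argument applies.

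\emph{A smaller bookkeeping slip.} Applying Cauchy at radius $R_{\rm c}/10$ and then the polarization factor $4^4/4!$ gives $256\,C_{\rm val}(10/R_{\rm c})^4$, which exceeds $C_{\rm der}$. Only derivatives at the origin are needed, so use radius $R_{\rm c}$. This gives $j^jC_{\rm val}/R_{\rm c}^j\le C_{\rm der}$ for $j=2,3,4$.
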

\begin{proof}
On $\mathcal I$, the weights and probes satisfy
$c_i\ge u\rho/2$ and $p_i=\Tr(B_iS)>0$, while $S\succeq\mu I_D$.
Extend $c$ to complex arguments by the same formula, using the
principal scalar square root, and extend $\Omega$ by its same finite
sum. This extension is holomorphic on the neighborhood specified below.
The scalar
square root $\sqrt{w^2+\zeta^2}$ is holomorphic in the strip
$|\Im w|<\zeta$. On $|\Im w|\le\zeta/2$,
$|w/\sqrt{w^2+\zeta^2}|\le1$: after squaring this inequality, it follows
from
\[
 |w^2+\zeta^2|^2-|w|^4
 =2\zeta^2(\Re w)^2-2\zeta^2(\Im w)^2+\zeta^4\ge\zeta^4/2.
\]
For $|w|\le3$ this bounds $|c'(w)|$ by $7u$. For a complex scalar
displacement $|z|\le R_{\mathrm c}$ in a direction with $|b_i|\le2$,
\[
 \frac{|c(x_i+zb_i)-c(x_i)|}{c(x_i)}
 \le\frac{28|z|}{\rho}.
\]
The coefficient displacement $z$ and the full complex matrix
perturbation $Y$ are independent variables. For
$\|Y\|_{\mathrm F}\le R_{\mathrm c}$, define the full matrix
$S^Y=S+Y$, its compression $S_0^Y$ to $\mathcal K$, and the supported
matrix $M^{z,Y}=\Omega_{x+zb}(S_0^Y)$. The same finite sums define
$H(x+zb)$ and $\Omega_{x+zb}$ for complex $z$. Write
$\Delta c_i=c(x_i+zb_i)-c(x_i)$ and $\Delta p_i=\Tr(B_iY)$.
The density perturbation changes $p_i=\Tr(B_iS)$ relatively by at most $R_{\mathrm c}/\mu$.
If $\mathcal I$ is empty, the source and fidelity are zero on this
fixed face and the following source estimates are unnecessary.
Otherwise let $S_0$ be the compression of $S$ and
$M_0=\Omega_x(S_0)\succ0$ on $\mathcal K$ and
\[
 E_i=M_0^{-1/2}c_i p_iB_iM_0^{-1/2},\qquad \sum_i E_i=I.
\]
The relative source perturbation is a sum of these $E_i$ with scalar
coefficients
$(1+\Delta c_i/c_i)(1+\Delta p_i/p_i)-1$. Its operator norm is
at most the largest absolute coefficient: write the sum as a compression
of a scalar block diagonal matrix by the isometry
$w\mapsto(E_i^{1/2}w)_i$. The choices above make this norm, and the
relative compressed-density perturbation, smaller than $1/16$.

The complex extension requires the spectral hypothesis of
Theorem~\ref{thm:principalroot}. We check it using numerical ranges.
For a complex matrix $A$, its numerical range is the set of scalar
values $v^*Av$ with $\|v\|=1$. If
$A=A_0^{1/2}(I+E)A_0^{1/2}$, $A_0\succ0$, and $\|E\|<1/16$,
then every nonzero quadratic form value has argument of absolute
value less than $\alpha=\arctan(1/15)$. Indeed, after writing
$w=A_0^{1/2}v$, the real part of $w^*(I+E)w$ is at least
$(15/16)\|w\|^2$ and the imaginary part has absolute value at most
$\|w\|^2/16$. The inverse has the same bound: with $w=A^{-1}v$,
$v^*A^{-1}v=\overline{w^*Aw}$.
Apply this to $S_0^Y$ and $M^{z,Y}$. An eigenvalue $\lambda$ of
$S_0^YM^{z,Y}$ satisfies
\[
 \lambda=\frac{v^*M^{z,Y}v}{v^*(S_0^Y)^{-1}v}
 \quad\text{for some }v\ne0,
 \qquad |\arg\lambda|<2\alpha<\pi.
\]
Thus this product avoids the closed negative real axis. The same
sector argument applies to the full matrix $S^Y$.

Consequently the following scalar function is holomorphic in the
independent complex variables $(z,Y)$ on the stated neighborhood:
\[
\begin{split}
 \mathcal L_{\mathrm{ext}}(z,Y)
 ={}&\Tr\bigl((J\widehat H(x+zb))S^Y\bigr)\\
 &+2\Tr_{\mathcal K}\bigl((S_0^YM^{z,Y})^{1/2}\bigr)
   +2\theta\Tr\bigl((S^Y)^{1/2}\bigr).
\end{split}
\]
Here $\Tr_{\mathcal K}$ is the trace on the supported space, and
all other traces are full-dimensional. If $\mathcal K=\{0\}$, the
middle term is zero. For real $z$ and Hermitian $Y$ with $S+Y\succ0$,
this function equals $L(x+zb,S+Y)$. For the fidelity term this follows
by similarity of $S_0^YM^{z,Y}$ to
$(M^{z,Y})^{1/2}S_0^Y(M^{z,Y})^{1/2}$ at positive real inputs.
Theorem~\ref{thm:principalroot} therefore supplies a complex extension
of exactly the real objective used by the walk.

The bound $C_{\mathrm{val}}$ dominates the absolute joint density objective on this
complex ball. For example, $\norm{S^Y}\le2$, the center has norm at
most $6$, and
\[
 \norm{M^{z,Y}}\le6400m\le3200D
\]
follows by summing the conservative bound $|c(x_i+zb_i)|\le1600$,
$|\Tr(B_iS^Y)|\le4\norm{A_i}$, and
$\sum_i\norm{A_i}^2\le m$. The trace of the principal square root of
$S_0^YM^{z,Y}$ is bounded by $D\sqrt{\norm{S^Y}\norm{M^{z,Y}}}$.
Theorem~\ref{thm:cauchy} and Lemma~\ref{lem:polarization} bound all joint derivatives
of orders two through four by $C_{\mathrm{der}}$. More explicitly, equip a joint
direction $(s,Y)$ with norm $|s|+\norm{Y}_{\mathrm F}$. The fourth
derivative in a unit direction has bound $4!C_{\mathrm{val}}/R_{\mathrm c}^4$.
Lemma~\ref{lem:polarization} gives a factor at most $4^4/4!$ in
passing from diagonal to mixed evaluations. The factor $10^4$ in $C_{\mathrm{der}}$ exceeds
this allowance and the smaller-order allowances. We use the same
$C_{\mathrm{der}}$ for all orders two through four.

\end{proof}

\begin{lemma}[Uniform derivatives of the optimized potential]\label{lem:fourth}
At every prepared state, along every fixed live direction with
$|b_i|\le2$, the second through fourth derivatives of
$t\mapsto\mathscr F(x+tb)$ have absolute value at most $M$ for
$|t|\le\rho/16$.
\end{lemma}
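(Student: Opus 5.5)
The plan is to pass from the joint objective bounds of Lemma~\ref{lem:complex-neighborhood} to the optimized value through the implicit function theorem, with coercivity of the density Hessian controlling each inverse. Write $F(t)=\mathscr F(x+tb)$ and $\Phi(t,S)=L(x+tb,S)$ on the trace-one affine slice. First I check that every point $x+tb$ with $|t|\le\rho/16$ and $|b_i|\le2$ is itself a real state on the same open face whose live coordinates have margin at least $\rho-\rho/8>\rho/2$. Hence Lemma~\ref{lem:complex-neighborhood} applies at the base point $x+tb$ with its own optimizer $S(t)$. It suffices to bound $F^{(j)}(0)$ at an arbitrary such base point, for $j=2,3,4$, uniformly.

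By the envelope identity (Lemma~\ref{lem:hessian}), $F'(t)=\partial_t\Phi(t,S(t))$, so the derivatives of $F$ of order $j$ involve derivatives of $S(t)$ of order at most $j-1$. Differentiating the stationarity equation $\mathcal P_0\nabla_S\Phi(t,S(t))=0$ gives
\[
 \mathcal H\,S^{(k)}=-(\text{polynomial in }\partial^{\le k+1}\Phi,\ S',\ldots,S^{(k-1)}),
 \qquad \mathcal H=-\mathcal P_0\nabla^2_S\Phi|_{\mathcal T}.
\]
Every mixed partial derivative of $\Phi$ of order two through four at $(t,S(t))$ is bounded by $C_{\rm der}$ in the joint norm, by Lemma~\ref{lem:complex-neighborhood}.

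The key quantitative input is a lower bound $\mathcal H\succeq g\,\mathrm{Id}=(\theta/2)\mathrm{Id}$ on $\mathcal T$ in Frobenius norm. The center is linear in $S$, and the fidelity term is concave by \eqref{eq:7}, so $\mathcal H\succeq\mathcal A_\theta$. By \eqref{eq:10} and $S\preceq I$, each eigenvalue $s_a\le1$, so the kernel satisfies $\sqrt{s_a s_b}(\sqrt{s_a}+\sqrt{s_b})\le2$. This gives $\mathcal A_\theta\succeq(\theta/2)\mathrm{Id}$, which is exactly $g$.

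Then $\|S'\|_{\rm F}\le C_{\rm der}/g$. Inductively, $\|S''\|_{\rm F}$ and $\|S'''\|_{\rm F}$ are bounded by $g^{-1}C_{\rm der}$ times powers of $(1+\|S'\|)$; multilinearity of each term in the joint direction $(1,S',\ldots)$ produces factors of $(1+C_{\rm der}/g)$. Substituting into the Faà di Bruno expansion of $F^{(j)}=\frac{d^{j-1}}{dt^{j-1}}\partial_t\Phi(t,S(t))$ yields
\[
 |F^{(j)}|\le\bigl(C_{\rm der}+3C_{\rm der}^2/g\bigr)\bigl(1+C_{\rm der}/g\bigr)^4\le M .
\]
The term $3C_{\rm der}^2/g$ accounts for the terms where an $S^{(k)}$ of top order appears linearly with one inverse.

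The main obstacle is this bookkeeping: the combinatorial constants in the higher chain rule must fit inside the generous $M$. I would organize the count by treating the Hessian identity exactly as in \eqref{eq:optimizedresponse}, so that $F''=\partial_t^2\Phi+\langle b,\mathcal H^{-1}b\rangle$ with $\|b\|\le C_{\rm der}$, which is within the budget. For $F'''$ and $F''''$ I would differentiate the response equation once and twice, using $\|S^{(k)}\|\le(C_{\rm der}/g)(1+C_{\rm der}/g)^{k}$. A secondary point to verify is that the Frobenius-norm derivative bounds of Lemma~\ref{lem:complex-neighborhood} remain valid when restricted to trace-zero Hermitian directions, which holds since these directions are real points of the complex ball.
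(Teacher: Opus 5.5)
Your proposal is correct and follows essentially the paper's own route: implicit differentiation of density stationarity on the trace-zero tangent, coercivity $g=\theta/2$ from \eqref{eq:10} with $s_a\le1$ plus concavity of the fidelity term, the bounds $\|S'\|_{\rm F}\le C_{\rm der}/g$ and $\|S''\|_{\rm F}\le(C_{\rm der}/g)(1+C_{\rm der}/g)^2$, and the joint bounds of Lemma~\ref{lem:complex-neighborhood} applied at each re-centred point of the segment. One slip: your pattern $\|S^{(k)}\|_{\rm F}\le(C_{\rm der}/g)(1+C_{\rm der}/g)^k$ is not what the recursion yields at $k=3$, which is $(C_{\rm der}/g)(1+3C_{\rm der}/g)(1+C_{\rm der}/g)^3$. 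With that corrected bound, your direct expansion $F''''=\Phi^{(4)}[w,w,w,e]+3\Phi^{(3)}[w,(0,S''),e]+\Phi^{(2)}[(0,S'''),e]$, where $w=(1,S')$ and $e=(1,0)$, still totals exactly $(C_{\rm der}+3C_{\rm der}^2/g)(1+C_{\rm der}/g)^4<M$. The paper instead avoids $S'''$ altogether by using $\Phi^{(2)}[w,(0,Y)]=0$ for trace-zero $Y$ to write $F'''=\Phi^{(3)}[w,w,w]$ and $F''''=\Phi^{(4)}[w,w,w,w]+3\Phi^{(3)}[w,w,(0,S'')]$.
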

\begin{proof}
Every point on the segment has live margin at least $\rho/2$.
Lemma~\ref{lem:complex-neighborhood} bounds the joint derivatives there.
It remains to control the change in the maximizing density. At a
point of this segment, write $S$ for its optimizer.
On the real full trace-zero density tangent, write $s_a$ for the
eigenvalues of $S$. The negative regularizer
Hessian has eigenbasis coefficients
\[
 \frac{\theta}{\sqrt{s_a}\sqrt{s_b}(\sqrt{s_a}+\sqrt{s_b})}
 \ge g.
\]
Fidelity concavity only increases coercivity. Write
$\ell(t,T)=L(x+tb,T)$ for the real joint objective, let $S(t)$ be its
optimizing full density, and set $\varphi(t)=\mathscr F(x+tb)$.
The symbols $S'$ and $S''$ below are derivatives of $S(t)$.
For $j\ge1$, $\ell^{(j)}$ denotes the $j$th joint derivative evaluated
as a multilinear form on pairs consisting of a scalar displacement
and a full Hermitian variation, at $(t,S(t))$.
Subscripts $t$ and $S$ denote differentiation in the scalar and
density variables, respectively; all density differentials in the
following calculation are restricted to trace-zero Hermitian variations.
Density stationarity on that tangent says $\ell_S=0$.
Its first derivative is
\[
 \ell_{SS}S'+\ell_{tS}=0,
 \qquad \ell^{(2)}[(1,S'),(0,Y)]=0\quad(\Tr Y=0).
\]
The first equation is an equality of linear functionals on that
tangent. Coercivity gives $\|S'\|_{\mathrm F}\le C_{\mathrm{der}}/g$.
Differentiating again yields, for every trace-zero $Y$,
\[
 \ell^{(2)}[(0,S''),(0,Y)]
 +\ell^{(3)}[(1,S'),(1,S'),(0,Y)]=0.
\]
The mixed derivative bound and coercivity therefore give
\[
 \|S''\|_{\mathrm F}
 \le(C_{\mathrm{der}}/g)(1+C_{\mathrm{der}}/g)^2.
\]
The first differentiated stationarity identity above holds for every
trace-zero variation, including $S''$.
It cancels the terms containing $S''$ when differentiating
the second envelope derivative. For the joint direction $w=(1,S')$,
the resulting envelope identities are
\[
 \varphi^{(3)}(t)=\ell^{(3)}[w,w,w],
\]
\[
 \varphi^{(4)}(t)
 =\ell^{(4)}[w,w,w,w]+3\ell^{(3)}[w,w,(0,S'')].
\]
These bounds give $M$; they also bound the second and third derivatives
by $M$. Theorem~\ref{thm:standardift} justifies the differentiated
optimizer throughout the fixed face.
\end{proof}

\section{Scalar realization and interior bounds for the SDP}
\label{sec:sdpdetails}
The value identity in Section~\ref{sec:sdp} determines the queried
optimization. Here we give its scalar encoding and quantitative bounds
for the optional positive-source perturbation. These estimates also
show that a degenerating source introduces no extra input parameter
into the construction.

\begin{lemma}[Boundedness and a strictly feasible realization]\label{lem:slater}
For $V=4u\epsilon$ the source-perturbed SDP has the bounds
\eqref{eq:O10}--\eqref{eq:O13}. In particular its reciprocal inner
radius is polynomial in $D$ and the reciprocal value accuracy.
\end{lemma}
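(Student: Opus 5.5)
The plan is to build one explicit strictly feasible point for the source-perturbed program \eqref{eq:O7} (lower-right block $\Omega_x(S)+\lambda I_D$, with $\lambda=\nu^2/(16D)$ from \eqref{eq:O9}). Around this point we then bound the inner radius and the outer box in which all feasible variables lie. These are the bounds \eqref{eq:O10}--\eqref{eq:O13}. We use $V=4u\epsilon$ as the uniform upper bound $\Omega_x(I_D)\preceq VI_D$ from Lemma~\ref{lem:kraus}, so that no smallest source eigenvalue enters.

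\textbf{Outer bounds.} Feasibility gives $S\succeq0$ and $\Tr S=1$, so $\norm{S}\le1$. The second block gives $Y^2\preceq S$, hence $\norm{Y}\le1$. In the first block the Schur-complement argument of Lemma~\ref{lem:sdpidentities} (via Lemma~\ref{lem:standardorder}) yields $W=S^{1/2}CT^{1/2}$ with $\norm{C}\le1$ and $T=\Omega_x(S)+\lambda I_D\preceq(V+\lambda)I_D$. This gives $\norm{W}\le\sqrt{V+\lambda}$. Every feasible tuple therefore lies in a Frobenius ball of radius polynomial in $D$, since $V\le256$. The objective is bounded by $1+2D\sqrt{V+\lambda}+2\theta D$, which gives the boundedness statement.

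\textbf{Interior point.} Take $S_c=I_D/D$, $W_c=0$ and $Y_c=\tfrac12 D^{-1/2}I_D$. At this point the blocks become $\diag(I_D/D,\ \Omega_x(I_D)/D+\lambda I_D)\succeq\min\{1/D,\lambda\}I$, and $\left(\begin{smallmatrix}I/D&Y_c\\Y_c&I\end{smallmatrix}\right)$ has Schur complement $\tfrac34 I/D$. Also $Y_c\succeq\tfrac12D^{-1/2}I$. Now perturb the affine variables by $(\Delta S,\Delta W,\Delta Y)$ with $\Tr\Delta S=0$ and Frobenius norm at most $r$. The blocks move in operator norm by at most a constant times $r(1+\norm{\Omega_x})$, and $\norm{\Omega_x}\le V$ as a map on matrices by Lemma~\ref{lem:kraus}. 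So choosing
\[
 r=c\,\min\{\lambda,1/D\}/(1+V)
\]
for a small absolute $c$ keeps all three blocks positive definite. Since $\lambda=\nu^2/(16D)$, the reciprocal inner radius is $O(D\nu^{-2})$ up to constants, which is polynomial in $D$ and $\nu^{-1}$. This is the claimed conclusion.

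\textbf{Main obstacle.} The delicate point is the source block when $\Omega_x(S)$ is singular. Without $\lambda$ the program has no interior, and this is exactly why the perturbation is introduced. I would therefore route all lower bounds through $\lambda I_D$ alone and never use eigenvalues of $\Omega_x$. Second, the affine hull must be handled correctly: the radius is measured within the trace-one hyperplane, and the realification doubles dimensions only by constant factors. The remaining items are the explicit constants in \eqref{eq:O10}--\eqref{eq:O13}; these are routine once $V$, $\lambda$ and $D$ are substituted.
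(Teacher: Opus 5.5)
Your proposal is correct and is essentially the paper's proof. It uses the same central point $(I_D/D,\,0,\,I_D/(2\sqrt D))$, the same contraction factorization $W=S^{1/2}C(\Omega_x(S)+\lambda I_D)^{1/2}$ with $\Omega_x(S)\preceq\Omega_x(I_D)\preceq VI_D$ for the outer bounds, and the same comparison of block margins against perturbation sizes for the inner radius. Your positivity estimate $\|\Omega_x(\Delta S)\|\le V\|\Delta S\|$ is dimension-free, whereas the paper uses the termwise bound $L_\Omega\le2uD$.

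The remaining fixes are cosmetic:
\begin{enumerate}
\item \eqref{eq:O10} is stated in Frobenius norm, and your operator-norm bounds are weaker than stated. The same argument gives the Frobenius form via $\Tr Y^2\le\Tr S=1$ and $\|W\|_{\rm F}\le\|S^{1/2}\|_{\rm F}\|C\|\,\|(\Omega_x(S)+\lambda I_D)^{1/2}\|$.
\item \eqref{eq:O13} bounds the Hilbert norm $(\|M_x\|+2+2\theta)\sqrt D$ of the linear cost functional, not the objective value you bounded.
\item The quantitative margin of the second block should be its least eigenvalue, which is at least $3/(4(D+1))$, not merely the Schur complement $\tfrac34 I_D/D$. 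Alternatively, perturb the Schur complement directly, since the corner $I_D$ is fixed.
\end{enumerate}
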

\begin{proof}
Before the estimates, note that
$\norm{\Omega_x(Y)}_{\mathrm F}\le L_\Omega\norm Y_{\mathrm F}$, where
\[
 L_\Omega=2\sum_i c_i\norm{A_i}^2
 \le4u\epsilon m=2u\epsilon D\le2uD.
\]
This follows termwise from the Kraus representation and
$\sum_i\norm{A_i}=m$.
Use the real Hilbert norm
\(\|(S,W,Y)\|^2=\|S\|_{\rm F}^2+\|W\|_{\rm F}^2+\|Y\|_{\rm F}^2\)
on the affine hyperplane \(\operatorname{Tr}S=1\).
Every feasible point of the regularized SDP satisfies
\begin{equation}\label{eq:O10}
 \|S\|_{\rm F}\le1,\qquad \|Y\|_{\rm F}\le1,\qquad
 \|W\|_{\rm F}\le\sqrt{V+\lambda}.
\end{equation}
The first bound uses \(S\succeq0,\operatorname{Tr}S=1\).
The second follows from \(Y^2\preceq S\).
For the third, \eqref{eq:O5}'s factorization gives
\(W=S^{1/2}C(\Omega_x(S)+\lambda I)^{1/2}\), with \(\|C\|\le1\).
Since \(0\preceq S\preceq I\), positivity and \eqref{eq:kraus} imply
\(\Omega_x(S)+\lambda I\preceq(V+\lambda)I\); use
\(\|AB\|_{\rm F}\le\|A\|_{\rm F}\|B\|\).
Thus the feasible body has a polynomial outer radius. For example,
a radius \(R=2\sqrt{2+V+\lambda}+1\) about the point below suffices.

An explicit strict feasible point is the tuple $(S_0,W_0,Y_0)$ below;
these subscripts denote this reference point, not source compression:
\begin{equation}\label{eq:O11}
 S_0=I_D/D,\qquad W_0=0,\qquad Y_0=I_D/(2\sqrt D).
\end{equation}
The first block is block diagonal and has minimum eigenvalue at least
\(\min\{1/D,\lambda\}\). For the second block, each scalar
\(2\times2\) sector has determinant \(3/(4D)\) and trace
\(1+1/D\); its minimum eigenvalue is at least
\(3/[4(D+1)]\ge1/(4D)\). Also \(Y_0\succeq I_D/(2\sqrt D)\).

Set
\begin{equation}\label{eq:O12}
 \sigma=\min\{\lambda,1/(4D)\},\qquad
 r=\frac{\sigma}{4(3+2uD)}.
\end{equation}
Every affine perturbation of \((S_0,W_0,Y_0)\) of Hilbert norm at
most \(r\) remains feasible. Indeed the first block changes in
operator norm by at most \((3+L_\Omega)r\le(3+2uD)r\), the second
by at most \(3r\), and \(Y\) by at most \(r\). These are smaller
than the corresponding positive margins at \eqref{eq:O11}. The trace equality
is retained by taking the ball within its affine hyperplane.
Under \eqref{eq:O9}, \(\sigma=\lambda\), so \(r^{-1}\) is polynomial in
\(D,u,\nu^{-1}\), without any source inverse eigenvalue.

The objective has Hilbert norm at most
\begin{equation}\label{eq:O13}
 L_{\rm obj}=(\|M_x\|+2+2\theta)\sqrt D.
\end{equation}
Parseval and \(|x_i|\le1\) give \(\|M_x\|\le1\).
\end{proof}
\begin{lemma}[Materializing a value query]\label{lem:materialize}
All coefficients of \eqref{eq:O7} are computed from $((v_i)_{i=1}^n,x,\theta)$
in polynomially many real arithmetic operations. A direct scalar-data
bound is
\begin{equation}\label{eq:sdpdatasize}
 L_m=6400m^4+16m^2+2.
\end{equation}
\end{lemma}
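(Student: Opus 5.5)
We will simply count, in the real coordinates of \eqref{eq:O7}, the scalars that the program needs and the work to produce them. The data consist of the objective vector $(M_x,\,2\cdot\text{coefficient of }\Re\Tr W,\,2\theta\cdot\text{coefficient of }\Tr Y)$, the trace-one equality, and the affine map sending the variables $(S,W,Y)$ to the block-diagonal pencil of order $10D$ described in Proposition~\ref{prop:sdp}. Every block except the source block $\Omega_x(S)$ is a fixed $0/\pm1$ selection of variable coordinates or a constant ($I_D$). The data size is therefore dominated by the linear map $S\mapsto\Omega_x(S)$, written as a $D^2\times D^2$ coefficient array.

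\textbf{Computing the coefficients.} First compute $c_i=c(x_i)$ from \eqref{eq:sourceprofile}. This takes $O(1)$ arithmetic operations and two nonnegative square roots per label. Next form $A_i=v_iv_i^*$ in $O(m^2)$ operations per label. Then compute $M_x=\diag(H(x),-H(x))$, with $H(x)=\sum_ix_iA_i$, in $O(nm^2)$ operations. For the source, use $\Omega_x(S)=\sum_ic_i\Tr(B_iS)B_i$ with $B_i=I_2\otimes A_i$. The coefficient of the input entry $S_{(\alpha a),(\beta b)}$ in the output entry $(\gamma c),(\delta d)$ is
\[
\sum_ic_i[\alpha=\beta]\,(A_i)_{ba}\,[\gamma=\delta]\,(A_i)_{cd}.
\]
Each of these $D^4=16m^4$ complex coefficients costs $O(n)$ operations. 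The total is $O(nm^4+nm^2)$ operations, which is polynomial. Finally, $\theta=\delta/\sqrt D$ needs one square root of $\max_i\|A_i\|=\max_i\|v_i\|^2$.

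\textbf{Counting scalars.} In realified coordinates each complex coefficient uses two real registers, and a Hermitian $S$ has $D^2$ real coordinates. The source block then contributes at most $2\cdot D^2\cdot D^2\le 2\cdot16m^4\cdot(\text{a small constant})$ real scalars. I expect the constant $6400$ in \eqref{eq:sdpdatasize} to come from a deliberately loose product: $D^4=16m^4$, times factors for realification and for both Hermitian output halves, times the padding of the $10D$ pencil. I will bound each of these factors crudely so that the product is at most $6400m^4$.

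The remaining data are the following.
\begin{itemize}
\item The objective: $M_x$ gives $D^2=4m^2$ reals, and the $W$ and $Y$ weights give $O(D)$ entries, for a total of at most $16m^2$ after realification.
\item The constant blocks $I_D$ and the trace equality: these are absorbed into the $16m^2$ term.
\item Two further scalars: $\theta$ and the right-hand side $1$ of the trace constraint, which give the $+2$.
\end{itemize}

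The main obstacle is only bookkeeping: fixing a single convention for what counts as scalar data (the nonzero coefficients of the affine pencil map together with the objective and constraint constants) and checking that every entry is counted. No analytic difficulty arises. If I cannot match the exact constant, I will still obtain a bound $O(m^4)$ that is independent of $n$, because all $n$-dependence is summed into the coefficients before they are stored. That suffices for the runtime use of the lemma.
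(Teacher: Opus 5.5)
Your computation of the coefficients is correct and matches the paper. The paper evaluates $\Omega_x$ on each chart matrix, which is the same as filling in your $D^2\times D^2$ coefficient array, and your entry formula and $O(nm^4)$ operation count are fine.

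The gap is in the lemma's only quantitative claim, the bound $L_m$. You never produce a count. You say you ``expect'' $6400$ to arise from loose factors, then fall back on an unspecified $O(m^4)$, which is weaker than the statement. Your guess about where $6400$ comes from is also off: it is not a refinement of the source block's $D^4$, but the size of the entire dense pencil.

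The paper's count runs as follows.
\begin{enumerate}
\item Fix a real chart in which $\Tr S=1$ eliminates one diagonal entry of $S$. This leaves $(D^2-1)+D^2+2D^2=4D^2-1=16m^2-1$ real variables and no stored equality constraint.
\item The three Hermitian blocks realify to orders $4D,4D,2D$, for a total order of $10D=20m$.
\item Store one constant matrix and $16m^2-1$ coefficient matrices, each a full real matrix of order $20m$. This costs $(16m^2)(20m)^2=6400m^4$ scalars.
\item The objective and two extra scalars supply the remaining $16m^2+2$.
\end{enumerate}
Because this counts densely (``even storing each entire pencil''), the bound does not depend on any sparsity convention.

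The elimination matters if you count densely. Keeping the trace equality explicit, as in your plan, gives $16m^2+1$ matrices, and already $(16m^2+1)(20m)^2>L_m$.

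If you prefer your sparse convention, you must actually finish it. The realified source block has $(2D)^2$ real entries, each a linear combination of the $D^2$ real coordinates of $S$, so it needs at most $4D^4=64m^4$ coefficients. The $0/\pm1$ selection entries, the identity block, the objective, and the trace row add only $O(m^2)$ scalars, so the total is far below $L_m$. Writing out either count closes the gap.
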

\begin{proof}
Represent a Hermitian density by its real diagonal entries and the
real and imaginary parts of its strict upper triangle. Eliminate one
diagonal entry using $\Tr S=1$. Use the same Hermitian chart for $Y$
and two real coordinates for every entry of $W$. This gives
$(D^2-1)+D^2+2D^2=4D^2-1$ variables. Realify a Hermitian block $Q$ as
$\left(\begin{smallmatrix}\Re Q&-\Im Q\\\Im Q&\Re Q\end{smallmatrix}\right)$;
its quadratic form is nonnegative exactly when that of $Q$ is.
The block orders are $4D,4D,2D$, with total $10D=20m$.
Store one constant and $16m^2-1$ coefficient matrices, and the same
number of objective coefficients. Even storing each entire pencil,
the scalar count is at most
$(16m^2)(20m)^2+16m^2+2=L_m$.
For each chart matrix $E$, compute
$\Omega_x(E)=\sum_i c(x_i)\Tr(B_iE)B_i$ by entrywise products and
sums. Its construction requires polynomial work in $n,m$; it uses
only scalar square roots for $c(x_i)$ and no source inverse. The center and
objective coefficients are obtained by the same finite sums.
\end{proof}

\section{Polynomial parameters and arithmetic work}
\label{sec:runtime}

The deterministic signing argument is complete. We now count the
execution that realizes it, including all rejected local candidates
and both potential reports in a curvature movement.
The explicit powers below certify polynomial dependence; no effort is
made to optimize them.

\begin{lemma}[Polynomial derivative and precision budgets]
\label{lem:polyparameters}
Let $s=10^8(n+m+1)$ and $B_*=s^{160}$. The quantity $M$ defined in
Section~\ref{sec:regularity} satisfies $1<M\le B_*$. The actual
numerical parameters obey
\begin{equation}\label{eq:inverseparameters}
 \begin{gathered}
 t^{-2}\le s^{166},\qquad h^{-2}\le s^{165},\qquad
 \nu_{\rm H}^{-1}\le s^{171},\qquad
 \nu_{\rm loc}^{-1}\le s^{169},\\
 T\le s^{168},\qquad \delta^{-1}\le s.
 \end{gathered}
\end{equation}
In particular the walk makes at most $s^{168}$ movements.
\end{lemma}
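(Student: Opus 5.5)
The bound is pure bookkeeping: chain explicit polynomial bounds through the definitions \eqref{eq:normalization}, \eqref{eq:geometry}, \eqref{eq:regularitybudget} and \eqref{eq:algorithmsetup}, always replacing quantities by powers of $s=10^8(n+m+1)$. First I would record the elementary scales. By Lemma~\ref{lem:scales}, $\delta^{-1}\le\sqrt n\le s$, $\theta^{-1}\le\sqrt{2n}\le s$, and $D=2m\le s$. From \eqref{eq:geometry}, $\rho^{-1}=100n^2\le s^2$, $a^{-1}=16\rho^{-1}\le s^2$, and $\zeta^{-1}=10a^{-1}\le s^3$ (the generous factor $10^8$ absorbs all small constants). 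From Lemma~\ref{lem:floor}, $\mu^{-1}=(40/\theta)^2\le s^3$. Hence $R_{\rm c}^{-1}=10^4\max\{1,\mu^{-1},\rho^{-1},\zeta^{-1}\}\le s^4$, and $C_{\rm val}=10^4(D+1)^2\le s^3$.

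Next I would propagate these bounds through \eqref{eq:regularitybudget}. We have $C_{\rm der}=C_{\rm val}10^4R_{\rm c}^{-4}\le s^{3+1+16}=s^{20}$ and $g^{-1}=2\theta^{-1}\le s^2$. Therefore $C_{\rm der}/g\le s^{22}$ and $1+C_{\rm der}/g\le s^{23}$, so $(1+C_{\rm der}/g)^4\le s^{92}$. Also $C_{\rm der}+3C_{\rm der}^2/g\le s^{20}+3s^{42}\le s^{43}$. This gives $M\le 1+s^{135}\le s^{160}=B_*$, and $M>1$ holds trivially because $C_{\rm der}>0$.

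Then I would treat the numerical parameters. With $\beta=\delta/(100n)$ we get $\beta^{-1}\le s^2$. Since $t^{-2}=\max\{a^{-2},128nM/\beta\}$, this is at most $\max\{s^4,s^{1+160+2}\}\le s^{166}$. Similarly $h^{-2}=\max\{a^{-2},4M/\beta\}\le s^{165}$. Next, $\nu_{\rm H}^{-1}=128n\beta^{-1}t^{-2}\le s^{1+2+166}\le s^{171}$ and $\nu_{\rm loc}^{-1}=8\beta^{-1}h^{-2}\le s^{169}$. Finally $T=\lceil16n/h^2\rceil\le 16n\,s^{165}+1\le s^{168}$. The walk performs at most $T$ movements by the loop guard in Algorithm~\ref{alg:trial}, which gives the last sentence of the statement.

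The only step needing care is the exponent arithmetic for $M$, so that the product $(C_{\rm der}/g)$ raised to the fifth power, times $C_{\rm der}$, stays below $s^{160}$. At each stage I would keep one spare power of $s$ to absorb numeric constants such as $10^4$, $128$ and $3$, all of which are below $10^8\le s$. Everything else is monotone substitution.
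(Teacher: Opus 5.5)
Your proposal is correct and follows the same route as the paper. You bound the elementary scales by powers of $s$, push them through \eqref{eq:regularitybudget} to bound $M$, and then substitute into the minima defining $t,h$ and the expressions for $\nu_{\rm H},\nu_{\rm loc},T$. Your intermediate exponents are slightly sharper than the paper's ($C_{\rm der}\le s^{20}$ and $M\le 1+s^{135}$, versus $s^{23}$ and $s^{155}$), which only adds slack to the stated bounds.
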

\begin{proof}
All inequalities below concern explicitly defined arithmetic quantities.
Parseval gives $D\le2n$, $\delta^{-1}\le\sqrt n$, and
$\theta^{-1}\le\sqrt{2n}$. The geometric definitions give
\[
 \rho^{-1}=100n^2,\quad a^{-1}=1600n^2,\quad
 \zeta^{-1}=16000n^2,\quad \mu^{-1}=1600\theta^{-2}\le3200n.
\]
Consequently
\[
 R_{\rm c}^{-1}\le s^4,\qquad C_{\mathrm{val}}\le s^3,\qquad
 C_{\mathrm{der}}=C_{\mathrm{val}}(10/R_{\rm c})^4\le s^{23},\qquad g^{-1}\le s^2.
\]
For example each reciprocal inside the definition of $R_{\rm c}$ is
at most $s^3$, and its extra factor $10^4$ is at most $s$.
It follows that
\[
 C_{\mathrm{der}}+3C_{\mathrm{der}}^2/g\le s^{50},\qquad 1+C_{\mathrm{der}}/g\le s^{26},\qquad
 M\le1+s^{154}\le s^{155}<B_*.
\]
Also $\beta^{-1}=100n\delta^{-1}\le s^3$.
The reciprocal square of a minimum is a maximum, so
\[
 \begin{split}
 t^{-2}&=\max\{256\rho^{-2},128nM\beta^{-1}\}
           \le s^{166},\\
 h^{-2}&=\max\{256\rho^{-2},4M\beta^{-1}\}
           \le s^{165}.
 \end{split}
\]
The slack in these exponents includes every numerical constant.
Multiplication by $128n\beta^{-1}$ gives the stated inverse bound for
$\nu_{\rm H}$; multiplication by $8\beta^{-1}$ gives the one for
$\nu_{\rm loc}$. Finally
$T\le16nh^{-2}+1\le s^{168}$.
\end{proof}

\subsection{Work of the tangent frame and spectral calls}
A curvature branch first constructs the explicit tangent frame from
Lemma~\ref{lem:tangentframe}. At $z=0$ this is the identity. Otherwise
normalization, the sign comparison, and the Householder entries use
only scalar expressions. The zero test can be made by computing
$\|z\|_2$ and comparing it with zero. All divisions occur on their
proved nonzero branches.

Forming $Q_x^T\widetilde KQ_x$ requires two matrix multiplications of
order at most $n$. Its exact eigendecomposition is one call to the
spectral primitive. Scanning the diagonal selects the least signed
eigenvalue, and multiplication by $Q_x$ returns the physical unit
direction. These operations have polynomial work in $n$. There are
at most $T$ such spectral calls; the discrepancy guarantee follows
from the nonincreasing account, so the algorithm returns its final
coefficient array directly.

\begin{theorem}[Complete arithmetic accounting]\label{thm:runtime}
For a fixed solver in Model~\ref{model:solver}, a polynomial in $n,m$
bounds the arithmetic work of Algorithm~\ref{alg:trial}.
\end{theorem}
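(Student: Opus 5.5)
The proof is a count of operations: the number of loop iterations, times the work of one iteration, plus the cost of computing the scales once. All inverse parameters are already polynomially bounded by Lemma~\ref{lem:polyparameters}, so the work lies in charging each primitive call correctly.

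First, the setup. The scales in \eqref{eq:normalization}--\eqref{eq:algorithmsetup} and the budget $M$ in \eqref{eq:regularitybudget} are explicit arithmetic expressions in $n$, $m$, $\epsilon$, $\theta$ and fixed constants. Computing them needs $O(n)$ operations for $\epsilon=\max_i\|A_i\|$, since $\|A_i\|=\|v_i\|_2^2$ for a rank-one atom. Everything else takes constantly many operations plus square roots. $\Call{Prepare}{0}$ is a scan of length $n$.

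Second, the number of iterations. The while loop runs at most $T\le s^{168}$ times by Lemma~\ref{lem:polyparameters}, and each iteration ends with one $\Call{Prepare}{}$ call costing $O(n)$. Within an iteration, the outward branch makes at most $n+1$ calls to $\Call{Value}{\cdot,\nu_{\rm loc}}$. The curvature branch additionally makes $O(k^2)\le O(n^2)$ stencil queries with accuracy $\nu_{\rm H}$, two candidate queries with accuracy $\nu_{\rm loc}$, one tangent frame (polynomial by Lemma~\ref{lem:tangentframe}), two matrix products of order at most $n$, one unit-cost EVD call, and the $O(n^2)$ work of scanning the eigenvalues and forming $g=Q_xV_{\cdot p_*}$.

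Third, the cost of one value query. By Lemma~\ref{lem:materialize}, the SDP is materialized in polynomial time in $n,m$, and its scalar data size is at most $L_m=6400m^4+16m^2+2$. Model~\ref{model:solver} then bounds the solver cost by $A_*(L_m+1+\nu^{-1})^{b_*}$. With $\nu^{-1}\le s^{171}$ for both tolerances, each query costs at most $A_*(L_m+1+s^{171})^{b_*}$, which is polynomial in $n,m$ for fixed $A_*,b_*$.

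Multiplying the three counts gives a bound of the form
\[
 s^{168}\cdot O(n^2)\cdot\bigl(\mathrm{poly}(n,m)+A_*(L_m+1+s^{171})^{b_*}\bigr),
\]
which is polynomial in $n,m$.

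The step needing the most care is checking that every query the algorithm makes is one the model covers. The stencil points $x\pm te_{i_p}\pm te_{i_q}$, the outward candidates, and the movement candidates $x\pm hg$ must lie in the cube with live coordinates still live. This holds because $t,h\le a=\rho/16$ and every prepared state has margin greater than $\rho$. The resulting SDPs must also be feasible and finite-valued: the point $S=I_D/D$, $W=0$, $Y=0$ is feasible, and the value is bounded by Lemma~\ref{lem:floor}. Finally, $\nu_{\rm loc},\nu_{\rm H}\le1$, which is needed wherever the perturbed realization \eqref{eq:O9} is used. With these checks the solver bound applies uniformly to every query, and no binary-size or conditioning parameter enters the count.
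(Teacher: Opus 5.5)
Your proposal is correct and follows essentially the same count as the paper. Both arguments bound the run by at most $T\le s^{168}$ iterations, with $O(n^2)$ value queries per iteration. Each query costs polynomial materialization plus $A_*(L_m+1+s^{171})^{b_*}$, and the setup, tangent frame, and EVD work is polynomial.

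The one detail the paper treats and you gloss over is $T=\lceil 16n/h^2\rceil$. This ceiling is not a primitive of the real-arithmetic model, so it is not among the "constantly many operations" in your setup step. It must be formed by incrementing and comparing, which is still polynomial since $T\le s^{168}$; alternatively, the loop test can compare $j$ with $16n/h^2$ directly.
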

\begin{proof}
First form the atoms and their norms. Rank one gives
$\norm{A_i}=\norm{v_i}^2$, so the maximum $\epsilon$, $\delta$,
$\theta$, the geometric scales, and the derivative and precision
budgets are computed by finite sums, products, comparisons, fixed
integer powers, and scalar square roots. Positivity has been proved
before every reciprocal. The integer ceiling for $T$ can be formed
by incrementing and comparing against its defining real expression;
the polynomial cap bounds this work too.

There are at most $T$ movements and $n$ boundary roundings. Scanning
all labels to prepare a state costs polynomial work. An iteration
uses at most $n+1$ outward-test value calls, at most
$4n^2+3n+1$ Hessian value calls, and two further value calls to select
the sign of a curvature movement. All queried states stay in the cube
and keep frozen coordinates fixed. For each query, construct the
scalar SDP arrays of Lemma~\ref{lem:materialize}; their size is at most
$L_m=6400m^4+16m^2+2$, and their construction is polynomial in $n,m$.
The source coefficients use only the scalar square roots already
allowed in the model.

With $s=10^8(n+m+1)$ from Lemma~\ref{lem:polyparameters}, each requested
inverse accuracy is at most $s^{171}$. One solver call therefore costs
at most $A_*(L_m+1+s^{171})^{b_*}$ operations. The bound covers rejected
outward candidates and both signs of a curvature movement. The frame,
compressed EVD, and scalar array updates have the polynomial costs
just described. The algorithm follows one deterministic sequence of
states and stops when it reaches a vertex. Equivalently, the runtime
account can pad this sequence to $T$ iterations with absorbing terminal
steps; this has the same output and still has polynomial work.

For $m=0$ the all-positive signing is returned before any inverse
scale is computed. Finally, replacing each maximum by a sum and each
ceiling by its polynomial majorant produces a finite sum and product
of fixed polynomials. Its coefficients and degree depend on the fixed
solver bounds, while its variables are only $n$ and $m$.
\end{proof}

\section{The marginal soft edge of a free semicircular operator}\label{sec:freeinterpretation}

We identify the finite-dimensional potential with a soft spectral
edge of an operator-valued semicircular perturbation. This supplies
the free-probability interpretation used in the motivation: the free
operator is eliminated through Lehner's variational formula, and the
regularizer remains on the matrix marginal of the test state. All
operators and states needed for this identification are constructed
below.

\subsection{The free operator and the marginal soft edge}
Fix a state $x\in[-1,1]^n$ with $m>0$, and abbreviate its signed center by
$A_{\rm c}=J\widehat H(x)$. In particular $D>0$ and $\theta>0$.
The source weights $c_i=c(x_i)$ and Pauli
matrices $\sigma_\alpha$ retain their definitions from
Definition~\ref{def:potential} and Lemma~\ref{lem:kraus}.
Let $\mathcal I=[n]\times\{0,1,2,3\}$, and for
$\gamma=(i,\alpha)\in\mathcal I$ define the Hermitian matrix
\[
 T_\gamma=\sqrt{c_i/2}\,(\sigma_\alpha\otimes A_i).
\]
Equation~\eqref{eq:kraus} says exactly that
$\Omega_x(V)=\sum_{\gamma\in\mathcal I}T_\gamma V T_\gamma$
for every $D\times D$ matrix $V$.

Let $\mathcal H_{\rm f}$ be the Hilbert space with orthonormal basis
$e_w$ indexed by all finite words $w$ in the alphabet $\mathcal I$,
including the empty word. This is full Fock space. Write $\omega$ for
the empty-word basis vector, and $P_\omega$ for the orthogonal projection
onto its span. For each letter $\gamma$, the left creation
operator $\ell_\gamma$ is defined by
$\ell_\gamma e_w=e_{\gamma w}$; its adjoint removes an initial
$\gamma$, and gives zero otherwise. In particular
$\ell_\gamma^*\ell_\nu=I$ for $\gamma=\nu$ and is zero for
distinct letters $\gamma,\nu\in\mathcal I$.
The operators $s_\gamma=\ell_\gamma+\ell_\gamma^*$ form the standard
variance-one free semicircular family in the vacuum state
$\tau(B)=\langle\omega,B\omega\rangle$, defined for bounded operators
$B$ on $\mathcal H_{\rm f}$. Here this description refers
to the explicit creation-operator model just given. Set
\begin{equation}\label{eq:free:model}
 X_x=A_{\rm c}\otimes I+
            \sum_{\gamma\in\mathcal I}T_\gamma\otimes s_\gamma
       \quad\hbox{on }\C^D\otimes\mathcal H_{\rm f}.
\end{equation}
It is bounded and self-adjoint, since the sum is finite and
$\norm{s_\gamma}\le2$. For an operator block matrix $B=(B_{ab})_{a,b=1}^D$,
define $({\rm id}\otimes\tau)(B)=(\tau(B_{ab}))_{a,b=1}^D$.
If $Y_x=X_x-A_{\rm c}\otimes I$, its
operator-valued covariance is
\[
 ({\rm id}\otimes\tau)\bigl(Y_x(V\otimes I)Y_x\bigr)=\Omega_x(V).
\]
Indeed $\tau(s_\gamma s_\nu)$ is one when the letters agree and zero
otherwise, so expansion gives the displayed Kraus sum.

Write $\mathcal D_D=\{S\succeq0:\Tr S=1\}$ for the finite density
space. A joint density operator $\Gamma$ is a positive trace-class
operator on $\C^D\otimes\mathcal H_{\rm f}$ with $\Tr\Gamma=1$.
Its matrix marginal $S_\Gamma=\Tr_{\mathcal H_{\rm f}}\Gamma$ is
characterized by
$\Tr(VS_\Gamma)=\Tr((V\otimes I)\Gamma)$ for every matrix $V$;
it belongs to $\mathcal D_D$.
For a Hermitian $D\times D$ matrix $V$, and a bounded self-adjoint
operator $Y$ on the joint space, define
\begin{align}
 f_\theta(V)&=\max_{S\in\mathcal D_D}
       \{\Tr(VS)+2\theta\Tr\sqrt S\},\label{eq:free:matrixsoft}\\
 \mathcal E_\theta(Y)&=\sup_{\Gamma\succeq0,\,\Tr\Gamma=1}
       \{\Tr(\Gamma Y)+2\theta\Tr\sqrt{S_\Gamma}\}.
       \label{eq:free:marginalsoft}
\end{align}
The second supremum ranges over joint density operators as just
defined. The regularizer is $\theta$ times the Tsallis--$1/2$ entropy
$2(\Tr\sqrt S-1)$, plus the constant $2\theta$.
It acts on the finite matrix marginal of the \emph{test state}.
Taking the free expectation of the operator itself would instead give
$({\rm id}\otimes\tau)(X_x)=A_{\rm c}$ and lose the covariance.

\begin{theorem}[Exact marginal-edge identity]\label{thm:free:identity}
With the infimum over full $D\times D$ positive definite matrices $Z$,
at every state, including states with a singular or zero source,
\begin{equation}\label{eq:free:identity}
 \mathscr F(x)
 =\inf_{Z\succ0}f_\theta(A_{\rm c}+Z^{-1}+\Omega_x(Z))
 =\mathcal E_\theta(X_x).
\end{equation}
The transport infimum and the joint-state
supremum need not be attained. The density maximum defining
$\mathscr F(x)$ is attained.
\end{theorem}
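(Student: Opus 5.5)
The proof splits into two equalities, and I treat them separately.

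\emph{First equality.} I will reduce it to Proposition~\ref{prop:dual}. For fixed $Z\succ0$, put $V=A_{\rm c}+Z^{-1}+\Omega_x(Z)$. The quantity $f_\theta(V)$ is the source-free potential of Section~\ref{sec:potentialmotivation}, and it has its own dual
\[
 f_\theta(V)=\inf_{Q\succ0}\{\lambda_{\max}(V+Q)+\theta^2\Tr Q^{-1}\}.
\]
I will prove this dual as follows.

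\begin{itemize}
\item \emph{Weak duality.} It follows from $2\theta\Tr\sqrt S\le\Tr(SQ)+\theta^2\Tr Q^{-1}$, which is the transport identity \eqref{eq:7} with $M=\theta^2 I_D$.
\item \emph{Equality.} Take the faithful maximizer $S$ of $f_\theta(V)$; faithfulness follows exactly as in Lemma~\ref{lem:transport}. Set $Q=\theta S^{-1/2}$. Stationarity $V+\theta S^{-1/2}=\lambda I$ then gives equality.
\end{itemize}

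Taking the infimum over $Z$ and then over $Q$ gives exactly the right side of \eqref{eq:transportdual}. That equals $\mathscr F(x)$ for every state, singular source included, by Proposition~\ref{prop:dual}.

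\emph{Second equality, lower bound $\mathcal E_\theta(X_x)\ge\mathscr F(x)$.} I will build joint states $\Gamma$ whose marginal is the optimizing density $S$ of $\mathscr F(x)$. Take $\Gamma=|\psi\rangle\langle\psi|$ with
\[
 \psi=S^{1/2}\xi\otimes\omega+\sum_\gamma W_\gamma\xi\otimes e_\gamma,
\]
where $\xi$ is a maximally entangled purification, or equivalently a mixture over a basis written with an ancilla, which can be absorbed into a trace-class $\Gamma$. The off-diagonal terms $\langle\omega,s_\gamma e_\gamma\rangle=1$ produce
\[
 2\Re\sum_\gamma\Tr(S^{1/2}T_\gamma W_\gamma^*),
\]
while the marginal becomes $S+\sum_\gamma W_\gamma W_\gamma^*$. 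Choosing $W_\gamma\approx tS^{1/2}Z T_\gamma$-type matrices and renormalizing should reproduce $2\Fid(S,\Omega_x(S))$. This is the one-level truncation of Fock space. A cleaner route may be to avoid explicit vectors: use Lehner's formula in its regularized form, or compute $\mathcal E_\theta$ via its own dual.

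\emph{Second equality, upper bound.} This is the main obstacle. For any $\Gamma$ and any $Z\succ0$, I need the operator inequality
\[
 \sum_\gamma T_\gamma\otimes s_\gamma\preceq (Z^{-1}+\Omega_x(Z))\otimes I
\]
\emph{after compression by the marginal}. Pointwise this is false; the correct statement is Lehner-type, namely
\[
 \Tr(\Gamma\,Y_x)\le\Tr\bigl(S_\Gamma(Z^{-1}+\Omega_x(Z))\bigr).
\]
I would try the creation/annihilation split $Y_x=L+L^*$ with $L=\sum_\gamma T_\gamma\otimes\ell_\gamma$. Then $L^*L=\Omega$-like: $L^*(V\otimes I)L=\Omega_x(V)\otimes I$, by the relation $\ell_\gamma^*\ell_\nu=\delta_{\gamma\nu}I$. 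For $Z\succ0$ the operator AM--GM inequality
\[
 L+L^*\preceq L^*(Z\otimes I)L+(Z^{-1}\otimes I)
\]
follows from expanding the positive square $\bigl((Z^{1/2}\otimes I)L-(Z^{-1/2}\otimes I)\bigr)^*\bigl((Z^{1/2}\otimes I)L-(Z^{-1/2}\otimes I)\bigr)\succeq0$. Hence
\[
 X_x\preceq (A_{\rm c}+Z^{-1}+\Omega_x(Z))\otimes I.
\]
Pairing with $\Gamma$ bounds $\Tr(\Gamma X_x)$ by $\Tr(S_\Gamma V)$, so
\[
 \mathcal E_\theta(X_x)\le f_\theta(V)
\]
for every $Z$. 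Together with the first equality, this gives the upper bound.

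The remaining obstacle is the matching lower bound above: exhibiting near-optimal $\Gamma$. The approach is to dualize $\mathcal E_\theta$ by the same $Q$-trick, reducing it to the top of the spectrum of $X_x+Q\otimes I$. I would then invoke Lehner's formula, $\lambda_{\max}(A\otimes I+Y_x)=\inf_Z\lambda_{\max}(A+Z^{-1}+\Omega_x(Z))$ from \cite{lehner1999}, and interchange the infima over $Q$ and $Z$. This requires that the supremum over $\Gamma$ equals $\inf_Q$ of an operator maximum. That is a minimax step over a noncompact set of trace-class states, and I would justify it with norm-closed convex sets and Sion's theorem, after truncating $\mathcal H_{\rm f}$ to words of bounded length and passing to a limit.

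Non-attainment of both the infimum and the supremum, the former for singular sources, is then inherited from Proposition~\ref{prop:dual} and the approximation.
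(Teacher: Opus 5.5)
\textbf{Verdict.} Your first equality and your upper bound are correct; the gap is the lower bound $\mathcal E_\theta(X_x)\ge\mathscr F(x)$, which carries the content of the theorem and which your proposal does not prove.

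\textbf{What is correct.} The first equality (the $Q$-dual of $f_\theta$ followed by Proposition~\ref{prop:dual}) is exactly how the paper concludes. For the upper bound, the positive square in $L=\sum_\gamma T_\gamma\otimes\ell_\gamma$ is the paper's argument. It gives the operator inequality $X_x\preceq(A_{\rm c}+Z^{-1}+\Omega_x(Z))\otimes I$, so, contrary to your aside, it is not ``pointwise false.'' Pairing it directly with $\Gamma$ to get $\mathcal E_\theta(X_x)\le f_\theta(A_{\rm c}+Z^{-1}+\Omega_x(Z))$ is slightly more direct than the paper's route through marginal duality.

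\textbf{Why the explicit construction fails.} It cannot work at any fixed depth, and renormalizing will not repair it. Take $\Gamma$ supported on $\C^D\otimes\operatorname{span}\{\omega,e_\gamma:\gamma\in\mathcal I\}$. For a vector $a\otimes\omega+\sum_\gamma b_\gamma\otimes e_\gamma$ the only semicircular contribution is $2\Re\sum_\gamma\langle b_\gamma,T_\gamma a\rangle$. Termwise AM--GM then bounds the energy by $\Tr(A_{\rm c}S_\Gamma)+\Tr(\Omega_x(Z)S_0)+\Tr(Z^{-1}S_1)$, where $S_\Gamma=S_0+S_1$ splits the marginal into root and first-level parts. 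The source term is therefore at most $2\Fid(S_1,\Omega_x(S_0))$, not $2\Fid(S_\Gamma,\Omega_x(S_\Gamma))$; for a proportional split this is at most half. The underlying reason is that the compression of one $s_\gamma$ to $\operatorname{span}\{\omega,e_\gamma\}$ is $\left(\begin{smallmatrix}0&1\\1&0\end{smallmatrix}\right)$, of norm $1$. To words of length at most $k$ it has norm $2\cos(\pi/(k+2))<2=\norm{s_\gamma}$. Near-optimal joint states must spread over arbitrarily long words. Likewise, the joint supremum need not be attained because the free edge is typically not an eigenvalue; this is not inherited from Proposition~\ref{prop:dual}.

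\textbf{The fallback route and what it still needs.} Your fallback is the right idea, and once completed it is a reorganization of the paper's argument. The identity $\mathcal E_\theta(Y)=\inf_{Q\succ0}\{\lambda_+(Y+Q\otimes I)+\theta^2\Tr Q^{-1}\}$ is what one gets by inserting \eqref{eq:free:softdual} into the paper's marginal duality \eqref{eq:free:marginaldual} and using the argument of \eqref{eq:free:infcenter}. However, both load-bearing steps are only gestured at.
\begin{enumerate}
\item \emph{Minimax.} Let $\mathcal H_N$ be spanned by words of length at most $N$. Sion on $\C^D\otimes\mathcal H_N$ is fine, since the state set is compact and the objective is linear in $\Gamma$ and convex in $Q$. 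It gives $\mathcal E_\theta(X_x)\ge\inf_Q f_N(Q)$, where $f_N(Q)$ is $\lambda_{\max}$ of the compression of $X_x+Q\otimes I$ plus $\theta^2\Tr Q^{-1}$. You must still show $\inf_Q f_N\to\inf_Q f$ even though only $f_N\uparrow f$ pointwise. This needs coercivity uniform in $N$: testing with $v\otimes\omega$ gives $f_N(Q)\ge\lambda_{\max}(Q)-\norm{X_x}+\theta^2\Tr Q^{-1}$, which confines near-minimizers to a fixed set $cI\preceq Q\preceq CI$. Dini's theorem then finishes. The paper avoids truncation altogether: $\Gamma$ enters only through $(S_\Gamma,\Tr(\Gamma Y))$, whose closure $\mathcal Q_Y$ is compact and convex in finite dimensions with support function $\lambda_+(Y+T\otimes I)$. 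A supporting hyperplane at the faithful maximizer, with slope $T_*=\theta S_*^{-1/2}$, gives \eqref{eq:free:marginaldual} directly.
\item \emph{Lehner.} The direction you need, $\lambda_+(Y_V)\ge\inf_{Z\succ0}\lambda_{\max}(V+Z^{-1}+\Omega_x(Z))$, is the nontrivial one; your positive square proves only the reverse. Citing \cite{lehner1999} is acceptable after shifting $V$ by a multiple of $I$, since his result is a norm formula with a positive center. The paper instead proves it by Fock-space Schur elimination. For $t>\lambda_+(Y_V)$, $G_V(t)=\iota_\omega^*(tI-Y_V)^{-1}\iota_\omega$ solves the matrix Dyson equation \eqref{eq:free:dyson}, so $Z=G_V(t)$ shows the infimum is at most $t$.
\end{enumerate}
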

The proof occupies the next two subsections. For comparison with the
ordinary edge, write $\operatorname{spec}(Y)$ for the spectrum of a
bounded self-adjoint operator $Y$ and define
$\lambda_+(Y)=\sup\operatorname{spec}(Y)$. Since
$1\le\Tr\sqrt S\le\sqrt D$ on $\mathcal D_D$,
\[
 \lambda_+(X_x)+2\theta\le\mathcal E_\theta(X_x)
              \le\lambda_+(X_x)+2\theta\sqrt D.
\]
The lower bound uses unit-vector states approaching the upper spectral
edge. Thus the regularization error depends only on the finite matrix
dimension, despite the infinite-dimensional auxiliary space.

\subsection{Lehner's formula and the matrix Dyson equation}
Keep the Hermitian factors $T_\gamma$ fixed, but allow an arbitrary
Hermitian center $V$. Define
$Y_V=V\otimes I+\sum_\gamma T_\gamma\otimes s_\gamma$.
We prove the upper-edge form of Lehner's semicircular formula
\cite[Corollary~1.5]{lehner1999}:
\begin{equation}\label{eq:free:lehner}
 \lambda_+(Y_V)=\inf_{Z\succ0}
       \lambda_{\max}(V+Z^{-1}+\Omega_x(Z)).
\end{equation}
Lehner states a norm formula for a positive center; the following
argument gives the upper-edge version directly for every Hermitian
center.
The finite-dimensional infimum on the right is equivalent to an SDP.
Introducing a scalar $t$ gives the formulation
\[
 \inf_{t\in\R,\ Z=Z^*}
 \left\{t:
 \begin{pmatrix}
 tI_D-V-\Omega_x(Z)&I_D\\
 I_D&Z
 \end{pmatrix}\succeq0\right\}.
\]
Block positivity first gives $Z\succeq0$. If $Zv=0$, positivity
forces the full block matrix to annihilate $(0,v)$, whose first
component after multiplication is $v$. Thus $v=0$, so $Z\succ0$.
Its Schur complement is
$tI_D-V-\Omega_x(Z)-Z^{-1}\succeq0$. Since $\Omega_x$ is linear,
the block is affine in $(t,Z)$; see
\cite[Section~4.6.2 and Appendix~A.5.5]{boydvandenberghe2004}
for this standard SDP reformulation. The corresponding
Tsallis--$1/2$ regularized formulation is the SDP already derived
in Propositions~\ref{prop:sdp} and~\ref{prop:dual}. These
reformulations concern the finite-dimensional optimization problems;
their identification with free spectral edges is proved here.

Put $\mathcal C=\sum_\gamma T_\gamma\otimes\ell_\gamma$.
For $Z\succ0$, expanding a positive square gives
\begin{align*}
 0&\preceq
 \bigl[Z^{-1/2}\otimes I-(Z^{1/2}\otimes I)\mathcal C\bigr]^*
 \bigl[Z^{-1/2}\otimes I-(Z^{1/2}\otimes I)\mathcal C\bigr]\\
 &=\bigl(Z^{-1}+\Omega_x(Z)\bigr)\otimes I
                   -(\mathcal C+\mathcal C^*).
\end{align*}
The last equality uses the relation between creation operators stated
above. Hence $Y_V\preceq(V+Z^{-1}+\Omega_x(Z))\otimes I$,
which proves one direction of \eqref{eq:free:lehner}.

For the reverse direction, take a real $t>\lambda_+(Y_V)$.
Let $\iota_\omega:\C^D\to\C^D\otimes\mathcal H_{\rm f}$ send
$v$ to $v\otimes\omega$, and set
$G_V(t)=\iota_\omega^*(tI-Y_V)^{-1}\iota_\omega$.
The operator $tI-Y_V$ is bounded and strictly positive, so its inverse
is bounded and strictly positive and $G_V(t)\succ0$.
Partition all nonempty words by their \emph{last} letter. Each branch
ending in $\gamma$ is a copy of Fock space under $w\mapsto w\gamma$.
Compression of $Y_V$ to that branch is another copy of $Y_V$, since
creation prepends a letter. The only connection from the empty word
to that branch has coefficient $T_\gamma$ and joins its root.
In the decomposition into the empty word and these branches,
$tI-Y_V$ therefore has root block $tI_D-V$, branch block
$\bigoplus_\gamma(tI-Y_V)$, and off-diagonal blocks determined by
$-T_\gamma\iota_\omega^*$.
Schur elimination yields
\begin{equation}\label{eq:free:dyson}
 G_V(t)^{-1}=tI_D-V-\Omega_x(G_V(t)),\qquad
 V+G_V(t)^{-1}+\Omega_x(G_V(t))=tI_D.
\end{equation}
This Schur elimination is valid for bounded operators: the branch
block has a bounded inverse, and multiplication by the usual bounded
invertible triangular block matrices diagonalizes the block operator.
Its inverse root block is the inverse of its Schur complement.
Choosing $Z=G_V(t)$ and letting $t$ decrease to $\lambda_+(Y_V)$
proves \eqref{eq:free:lehner}, without assuming a limiting positive
transport at the edge.

Equation~\eqref{eq:free:dyson} is the matrix Dyson equation outside the
upper spectrum. The common convention uses the opposite resolvent
sign: with $M_V(t)=-G_V(t)$ it reads
$-M_V(t)^{-1}=tI_D-V+\Omega_x(M_V(t))$, as in
\cite[Section~2.1, equation~(2.2)]{aek2019}.
The Fock-space Schur complement derives this equation directly from
the resolvent. The regularized optimization will select a different
transport through its density stationarity equation.

\subsection{Marginal duality and the regularized transport}
We first prove a finite-dimensional duality statement despite the
infinite joint space. For every bounded self-adjoint $Y$,
\begin{equation}\label{eq:free:marginaldual}
 \mathcal E_\theta(Y)=\inf_{V=V^*}
       \{\lambda_+(Y-V\otimes I)+f_\theta(V)\}.
\end{equation}
For every joint density $\Gamma$, split its objective into
\[
 \Tr(\Gamma(Y-V\otimes I))
            +\Tr(VS_\Gamma)+2\theta\Tr\sqrt{S_\Gamma}.
\]
Each term is bounded by the corresponding term in
\eqref{eq:free:marginaldual}. This proves the upper bound for
$\mathcal E_\theta(Y)$.

For equality let $\mathcal Q_Y$ be the closure of the pairs
$(S_\Gamma,\Tr(\Gamma Y))$ in the finite-dimensional real space of
Hermitian matrices and scalars. This set is convex and compact: its
first component lies in $\mathcal D_D$, and its second lies in
$[-\norm Y,\norm Y]$. For every Hermitian matrix $T$ its support
function in direction $(T,1)$ is
\begin{equation}\label{eq:free:support}
 \max_{(S,e)\in\mathcal Q_Y}\{e+\Tr(TS)\}
                       =\lambda_+(Y+T\otimes I).
\end{equation}
Indeed, expectations in unit-vector states approach the upper edge,
and taking closure preserves the supremum of this continuous linear
function. Choose $(S_*,e_*)\in\mathcal Q_Y$ maximizing
$e+2\theta\Tr\sqrt S$. Its value is $\mathcal E_\theta(Y)$ by
continuity. Moreover $S_*\succ0$: mix this pair with that of the
product state $(I_D/D)\otimes P_\omega$.
For mixture weight $t>0$, a zero eigenvalue of $S_*$ gives a positive
gain of order $\sqrt t$ in the regularizer, while the energy and the
positive-eigenvalue contributions change by $O(t)$.
This contradicts maximality for small positive $t$.

Put $T_* =\theta S_*^{-1/2}$, the gradient of the regularizer.
Differentiating along segments in $\mathcal Q_Y$ gives
$e+\Tr(T_*S)\le e_*+\Tr(T_*S_*)$ for all $(S,e)\in\mathcal Q_Y$.
The supporting-plane inequality for the concave regularizer also gives
$f_\theta(-T_*)=2\theta\Tr\sqrt{S_*}-\Tr(T_*S_*)$.
Taking $V=-T_*$ in \eqref{eq:free:marginaldual} and using
\eqref{eq:free:support} proves equality, including when the joint-state
supremum is attained only in the closure of its marginal-energy pairs.

For any Hermitian matrix $W$, one further elementary identity is
\begin{equation}\label{eq:free:infcenter}
 \inf_{V=V^*}\{\lambda_{\max}(W-V)+f_\theta(V)\}=f_\theta(W).
\end{equation}
The inequality $W\preceq V+\lambda_{\max}(W-V)I_D$ gives the lower
bound by testing every density in the definition of $f_\theta$.
Taking $V=W$ gives equality. Apply \eqref{eq:free:lehner} to
$X_x-V\otimes I$, substitute into \eqref{eq:free:marginaldual},
and commute the two infima, which range over a Cartesian product.
Then \eqref{eq:free:infcenter} yields
\begin{equation}\label{eq:free:softtransport}
 \mathcal E_\theta(X_x)
       =\inf_{Z\succ0}f_\theta(A_{\rm c}+Z^{-1}+\Omega_x(Z)).
\end{equation}
The reduction uses only finite-dimensional marginal duality and the
commutation of two infima over independent variables.

It remains to match this expression to the implemented potential.
For every Hermitian $D\times D$ matrix $W$, with $Q$ ranging over
positive definite matrices of the same size,
\begin{equation}\label{eq:free:softdual}
 f_\theta(W)=\inf_{Q\succ0}
       \{\lambda_{\max}(W+Q)+\theta^2\Tr Q^{-1}\}.
\end{equation}
The transport inequality in the proof of Proposition~\ref{prop:dual}
gives
$2\theta\Tr\sqrt S\le\Tr(SQ)+\theta^2\Tr Q^{-1}$, proving
one direction. A maximizing density for $f_\theta(W)$ exists by
compactness and is positive definite by the same mixture argument.
Denote it by $S_W$. Stationarity on the trace-one affine space gives
$W+\theta S_W^{-1/2}=\lambda_W I_D$ for a real scalar $\lambda_W$.
Choose $Q=\theta S_W^{-1/2}$. The right side of
\eqref{eq:free:softdual} then equals
$\lambda_W+\theta\Tr\sqrt{S_W}=f_\theta(W)$.
Substituting \eqref{eq:free:softdual} into
\eqref{eq:free:softtransport} gives exactly the already-proved dual
\eqref{eq:transportdual}. Proposition~\ref{prop:dual}, including its
singular-source argument, therefore proves Theorem~\ref{thm:free:identity}.

The identity clarifies both the connection to the MDE and the change
introduced by regularization. The ordinary resolvent $G_V(t)$ obeys
\eqref{eq:free:dyson}. The variational transport obeys
\eqref{eq:9}, with the additional matrix $\theta S^{-1/2}$ and
trace-one density constraint. Its response is therefore a stability
problem for this regularized system. The analysis in
Sections~\ref{sec:curvature}--\ref{sec:sm:curvature} bounds the
particular perturbations generated by the walk, through a joint-kernel
projection. This is structurally related to the study of linearized
Dyson equations near edges \cite{aeks2020,alt2020}, while its
operators and estimates are specific to the density potential.

\begingroup
\small

\begin{thebibliography}{10}

\bibitem{aek2019}
Oskari~H. Ajanki, L{\'a}szl{\'o} Erd{\H{o}}s, and Torben Kr{\"u}ger.
\newblock Stability of the matrix dyson equation and random matrices with
  correlations.
\newblock {\em Probability Theory and Related Fields}, 173:293--373, 2019.
\newblock \href {https://doi.org/10.1007/s00440-018-0835-z}
  {\path{doi:10.1007/s00440-018-0835-z}}.

\bibitem{allenZhuLiaoOrecchia2015}
Zeyuan Allen-Zhu, Zhenyu Liao, and Lorenzo Orecchia.
\newblock Spectral sparsification and regret minimization beyond matrix
  multiplicative updates, 2015.
\newblock STOC 2015; full version.
\newblock URL: \url{https://arxiv.org/abs/1506.04838}, \href
  {https://arxiv.org/abs/1506.04838} {\path{arXiv:1506.04838}}.

\bibitem{alt2020}
Johannes Alt, L{\'a}szl{\'o} Erd{\H{o}}s, and Torben Kr{\"u}ger.
\newblock The dyson equation with linear self-energy: Spectral bands, edges and
  cusps.
\newblock {\em Documenta Mathematica}, 25:1421--1539, 2020.
\newblock URL: \url{https://arxiv.org/abs/1804.07752}.

\bibitem{aeks2020}
Johannes Alt, L{\'a}szl{\'o} Erd{\H{o}}s, Torben Kr{\"u}ger, and Dominik
  Schr{\"o}der.
\newblock Correlated random matrices: Band rigidity and edge universality.
\newblock {\em Annals of Probability}, 48(2):963--1001, 2020.
\newblock \href {https://doi.org/10.1214/19-AOP1379}
  {\path{doi:10.1214/19-AOP1379}}.

\bibitem{anari2017}
Nima Anari, Shayan Oveis~Gharan, Amin Saberi, and Nikhil Srivastava.
\newblock Approximating the largest root and applications to interlacing
  families, 2017.
\newblock URL: \url{https://arxiv.org/abs/1704.03892}, \href
  {https://arxiv.org/abs/1704.03892} {\path{arXiv:1704.03892}}.

\bibitem{bbvh2023}
Afonso~S. Bandeira, March Boedihardjo, and Ramon van Handel.
\newblock Matrix concentration inequalities and free probability.
\newblock {\em Inventiones mathematicae}, 234:419--487, 2023.
\newblock \href {https://doi.org/10.1007/s00222-023-01204-6}
  {\path{doi:10.1007/s00222-023-01204-6}}.

\bibitem{bhatia2007}
Rajendra Bhatia.
\newblock {\em Positive Definite Matrices}.
\newblock Princeton University Press, 2007.
\newblock URL: \url{https://assets.press.princeton.edu/chapters/s8445.pdf}.

\bibitem{boydvandenberghe2004}
Stephen Boyd and Lieven Vandenberghe.
\newblock {\em Convex Optimization}.
\newblock Cambridge University Press, 2004.
\newblock URL: \url{https://web.stanford.edu/~boyd/cvxbook/bv_cvxbook.pdf}.

\bibitem{erdos2019}
L{\'a}szl{\'o} Erd{\H{o}}s.
\newblock The matrix dyson equation and its applications for random matrices,
  2019.
\newblock URL: \url{https://arxiv.org/abs/1903.10060}, \href
  {https://arxiv.org/abs/1903.10060} {\path{arXiv:1903.10060}}.

\bibitem{guilleminhaine2019}
Victor Guillemin and Peter Haine.
\newblock {\em Differential Forms}.
\newblock World Scientific, 2019.
\newblock URL:
  \url{https://math.mit.edu/classes/18.952/2018SP/files/18.952_book.pdf}, \href
  {https://doi.org/10.1142/11058} {\path{doi:10.1142/11058}}.

\bibitem{higham2008}
Nicholas~J. Higham.
\newblock {\em Functions of Matrices: Theory and Computation}.
\newblock SIAM, 2008.
\newblock \href {https://doi.org/10.1137/1.9780898717778}
  {\path{doi:10.1137/1.9780898717778}}.

\bibitem{highamlin2013}
Nicholas~J. Higham and Lijing Lin.
\newblock Matrix functions: A short course.
\newblock Technical Report 2013.73, Manchester Institute for Mathematical
  Sciences, 2013.
\newblock URL:
  \url{https://eprints.maths.manchester.ac.uk/2067/1/covered/MIMS_ep2013_73.pdf}.

\bibitem{jourdan2023}
Ben Jourdan, Peter Macgregor, and He~Sun.
\newblock Is the algorithmic {Kadison--Singer} problem hard?
\newblock In {\em 34th International Symposium on Algorithms and Computation
  (ISAAC 2023)}, volume 283 of {\em Leibniz International Proceedings in
  Informatics}, pages 43:1--43:18, 2023.
\newblock \href {https://doi.org/10.4230/LIPIcs.ISAAC.2023.43}
  {\path{doi:10.4230/LIPIcs.ISAAC.2023.43}}.

\bibitem{jourdanMacgregorSun2024}
Ben Jourdan, Peter Macgregor, and He~Sun.
\newblock Polynomial-time algorithms for weaver's discrepancy problem in a
  dense regime, 2024.
\newblock URL: \url{https://arxiv.org/abs/2402.08545}, \href
  {https://arxiv.org/abs/2402.08545} {\path{arXiv:2402.08545}}.

\bibitem{kadison1959}
Richard~V. Kadison and Isadore~M. Singer.
\newblock Extensions of pure states.
\newblock {\em American Journal of Mathematics}, 81(2):383--400, 1959.
\newblock \href {https://doi.org/10.2307/2372748} {\path{doi:10.2307/2372748}}.

\bibitem{kathuria2026ms}
Tarun Kathuria.
\newblock {A Walk From Free Probability to Matrix Discrepancy I: Matrix
  Spencer}.
\newblock Companion paper, 2026.

\bibitem{kathuria2026higherRank}
Tarun Kathuria.
\newblock {A Walk From Free Probability to Matrix Discrepancy III: Higher Rank
  Kadison--Singer and Spectrally Thin Trees}.
\newblock In preparation, 2026.

\bibitem{kyngLuhSong2020}
Rasmus Kyng, Kyle Luh, and Zhao Song.
\newblock Four deviations suffice for rank 1 matrices, 2020.
\newblock URL: \url{https://arxiv.org/abs/1901.06731}, \href
  {https://arxiv.org/abs/1901.06731} {\path{arXiv:1901.06731}}.

\bibitem{lehner1999}
Franz Lehner.
\newblock Computing norms of free operators with matrix coefficients.
\newblock {\em American Journal of Mathematics}, 121(3):453--486, 1999.

\bibitem{mss2015}
Adam~W. Marcus, Daniel~A. Spielman, and Nikhil Srivastava.
\newblock Interlacing families ii: Mixed characteristic polynomials and the
  kadison--singer problem.
\newblock {\em Annals of Mathematics}, 182(1):327--350, 2015.
\newblock \href {https://doi.org/10.4007/annals.2015.182.1.8}
  {\path{doi:10.4007/annals.2015.182.1.8}}.

\bibitem{pesentivladu2026}
Lucas Pesenti and Adrian Vladu.
\newblock Discrepancy minimization via regularization, 2026.
\newblock Version 2; originally appeared in SODA 2023.
\newblock URL: \url{https://arxiv.org/abs/2211.05509v2}, \href
  {https://arxiv.org/abs/2211.05509} {\path{arXiv:2211.05509}}.

\bibitem{spielmanZhang2022}
Daniel~A. Spielman and Peng Zhang.
\newblock Hardness results for {Weaver}'s discrepancy problem.
\newblock In {\em Approximation, Randomization, and Combinatorial Optimization.
  Algorithms and Techniques (APPROX/RANDOM 2022)}, volume 245 of {\em Leibniz
  International Proceedings in Informatics}, pages 40:1--40:14, 2022.
\newblock \href {https://doi.org/10.4230/LIPIcs.APPROX/RANDOM.2022.40}
  {\path{doi:10.4230/LIPIcs.APPROX/RANDOM.2022.40}}.

\bibitem{steinshakarchi2003}
Elias~M. Stein and Rami Shakarchi.
\newblock {\em Complex Analysis}, volume~2 of {\em Princeton Lectures in
  Analysis}.
\newblock Princeton University Press, 2003.

\bibitem{watrous2018}
John Watrous.
\newblock {\em The Theory of Quantum Information}.
\newblock Cambridge University Press, Cambridge, United Kingdom, 2018.
\newblock \href {https://doi.org/10.1017/9781316848142}
  {\path{doi:10.1017/9781316848142}}.

\bibitem{weaver2004}
Nik Weaver.
\newblock The {Kadison--Singer} problem in discrepancy theory.
\newblock {\em Discrete Mathematics}, 278:227--239, 2004.
\newblock \href {https://doi.org/10.1016/S0012-365X(03)00253-X}
  {\path{doi:10.1016/S0012-365X(03)00253-X}}.

\end{thebibliography}
\endgroup
\end{document}